\newtheorem{theorem}{Theorem}
\theoremstyle{plain}
\newtheorem{acknowledgement}{Acknowledgement}
\newtheorem{definition}{Definition}
\newtheorem{lemma}{Lemma}
\newtheorem{proposition}{Proposition}
\newtheorem{remark}{Remark}
\numberwithin{equation}{section}
\numberwithin{theorem}{section}
\numberwithin{lemma}{section}
\numberwithin{proposition}{section}
\numberwithin{corollary}{section}
\let\pdfoutput=\undefined\fi
\begin{document}
\title[Koba-Nielsen String Amplitudes]{Meromorphic Continuation of Koba-Nielsen String Amplitudes}
\author{M. Bocardo-Gaspar}
\address{Universidad de Guadalajara, Cucei, Departamento de Matem\'{a}ticas, Blvd.
Marcelino Garc\'{\i}a Barrag\'{a}n \#1421,~Guadalajara, Jal. 44430, M\'{e}xico}
\email{miriam.bocardo@academicos.udg.mx.}
\author{Willem Veys}
\address{KU Leuven, Department of Mathematics\\
Celestijnenlaan 200 B, B-3001 Leuven, Belgium}
\email{wim.veys@kuleuven.be}
\author{W. A. Z\'{u}\~{n}iga-Galindo}
\address{Centro de Investigaci\'{o}n y de Estudios Avanzados del Instituto
Polit\'{e}cnico Nacional\\
Departamento de Matem\'{a}ticas, Unidad Quer\'{e}taro\\
Libramiento Norponiente \#2000, Fracc. Real de Juriquilla. Santiago de
Quer\'{e}taro, Qro. 76230\\
M\'{e}xico.}
\email{wazuniga@math.cinvestav.edu.mx}
\thanks{The second author was supported by KU Leuven grant C14/17/083. The third
author was partially supported by Conacyt Grant No. 250845.}
\subjclass{Primary: 81T30, 1140. Secondary: 32S45}

\begin{abstract}
In this article, we establish in a rigorous mathematical way that Koba-Nielsen
amplitudes defined on any local field of characteristic zero are bona fide
integrals that admit meromorphic continuations in the kinematic parameters.
Our approach allows us to study in a uniform way open and closed Koba-Nielsen
amplitudes over arbitrary local fields of characteristic zero. In the
regularization process we use techniques of local zeta functions and embedded
resolution of singularities. As an application we present the regularization
of $p$-adic open string amplitudes with Chan-Paton factors and constant
$B$-field. Finally, all the local zeta functions studied here are partition
functions of certain $1D$ log-Coulomb gases, which\ shows an interesting
connection between Koba-Nielsen amplitudes and statistical mechanics.

\end{abstract}
\keywords{Koba-Nielsen string amplitudes, open strings, closed strings,\ Chan-Patton
factors, partition functions, local zeta functions, resolution of singularities.}
\maketitle

\section{Introduction}

In the recent years, scattering amplitudes, considered as mathematical
structures, have been studied intensively, see e.g. \cite{Arkani et al},
\cite{Elvang et al} and the references therein. The main motivations driving
this research are, from one side, the development of more efficient methods to
calculate amplitudes, and on the other side, the existence of deep connections
with many mathematical areas, among them, algebraic geometry, combinatorics,
number theory, $p$-adic analysis, etc., see e.g. \cite{Belkale},
\cite{Zun-B-C-LMP}-\cite{Zun-B-C-JHEP}, \cite{Bogner-Weiz-1}%
-\cite{Brown-Dupont-1}, \cite{Fra-Okada}-\cite{Zuniga-Compean et al},
\cite{Ghoshal:2004ay}-\cite{Grange:2004xj}, \cite{Lerner}, \cite{Marcolli}%
-\cite{Vanhove et al}, and the references therein. In the 80s the idea that
string amplitudes at the tree level can be studied over different number
fields and that there are connections between these amplitudes emerged in the
works of Freund, Witten and Volovich, among others, see e.g. \cite{B-F-O-W},
\cite{Brekke et al}, \cite{Vol}. In this framework the connection with local
zeta functions appears naturally. The present work is framed in the `emerging
idea' that scattering amplitudes are local zeta functions in the sense of
Gel'fand, Weil, Igusa, Sato, Bernstein, Denef, Loeser, etc., and also it
continues our investigation of the connections between string amplitudes at
the tree level and local zeta functions \cite{Zun-B-C-LMP},
\cite{Zun-B-C-JHEP}, \cite{Zuniga-Compean et al}.

In this article we establish, in a rigorous mathematical way, that the
Koba-Nielsen string amplitudes defined on any local field of characteristic
zero are bona fide integrals. Furthermore, they admit extensions which are
meromorphic complex functions in the kinematic parameters. We express the
Koba-Nielsen amplitudes as linear combinations of multivariate local zeta
functions, and, by using embedded resolution of singularities (Hironaka's
theorem \cite{H}), we show that all these local zeta functions are holomorphic
in a common domain, and then we use the fact that the local zeta functions
admit meromorphic continuations. Since Hironaka's theorem is valid over any
field of characteristic zero, we are able to regularize the Koba-Nielsen
amplitudes defined over $\mathbb{R}$, $\mathbb{C}$, or $\mathbb{Q}_{p}$, the
field of $p$-adic numbers, at the same time.

\smallskip We denote by $\mathbb{K}$ a local field of characteristic zero, and
set $\boldsymbol{f}:=\left(  f_{1},\ldots,f_{m}\right)  $ and $\boldsymbol{s}%
:=\left(  s_{1},\ldots,s_{m}\right)  \in\mathbb{C}^{m}$, where the $f_{i}(x)$
are non-constant polynomials in the variables $x:=(x_{1},\ldots,x_{n})$ with
coefficients in $\mathbb{K}$. The multivariate local zeta function attached to
$(\boldsymbol{f},\Theta)$, where $\Theta$ is a test function, is defined as
\[
Z_{\Theta}\left(  \boldsymbol{f},\boldsymbol{s}\right)  =\int
\limits_{\mathbb{K}^{n}}\Theta\left(  x\right)  \prod\limits_{i=1}%
^{m}\left\vert f_{i}(x)\right\vert _{\mathbb{K}}^{s_{i}}%
{\displaystyle\prod\limits_{i=1}^{n}}
dx_{i},\qquad\text{ when }\operatorname{Re}(s_{i})>0\text{ for all }i\text{,}%
\]
and where $%
{\textstyle\prod\nolimits_{i=1}^{n}}
dx_{i}$ is the normalized Haar measure of $(\mathbb{K}^{n},+)$. These
integrals admit meromorphic continuations to the whole $\mathbb{C}^{m}$,
\cite{Igusa-old}, \cite{Igusa}, \cite{Loeser}, see also \cite{G-S},
\cite{Kashiwara-Takai}. In the 60s, Weil studied local zeta functions, in the
Archimedean and non-Archimedean settings, in connection with the
Poisson-Siegel formula. In the 70s, Igusa developed a uniform theory for local
zeta functions in characteristic zero \cite{Igusa-old}, \cite{Igusa}, see also
\cite{Loeser}, \cite{Zuniga-Veys}, \cite{Zuniga-Veys2}. In the $p$-adic
setting, the local zeta functions are connected with the number of solutions
of polynomial congruences mod $p^{l}$ and with exponential sums mod $p^{l}$
\cite{Denef}. More recently, Denef and Loeser introduced the motivic zeta
functions which constitute a vast generalization of $p$-adic local zeta
functions \cite{DL1}.

In the case $\mathbb{K}=\mathbb{R}$ and $m=1$, the local zeta functions were
introduced in the 50s by Gel'fand and Shilov. The main motivation was that the
meromorphic continuation of Archimedean local zeta functions implies the
existence of fundamental solutions (i.e. Green functions) for differential
operators with constant coefficients. This fact was established,
independently, by Atiyah \cite{Atiyah} and Bernstein \cite{Ber}. It is
important to mention here that, in the $p$-adic framework, the existence of
fundamental solutions for pseudodifferential operators is also a consequence
of the fact that the Igusa local zeta functions admit a meromorphic
continuation, see \cite[Chapter 5]{Zuniga-LNM-2016}, \cite[Chapter
10]{KKZuniga}. This analogy turns out to be very important in the rigorous
construction of quantum scalar fields in the $p$-adic setting, see
\cite{M-V-Zuniga} and the references therein.

The connections between Feynman amplitudes and local zeta functions are very
old and deep. Let us mention that the works of Speer \cite{Speer} and Bollini,
Giambiagi and Gonz\'{a}lez Dom\'{\i}nguez \cite{B-G-Gonzalez-Dom} on
regularization of Feynman amplitudes in quantum field theory are based on the
analytic continuation of distributions attached to complex powers of
polynomial functions in the sense of Gel'fand and Shilov \cite{G-S}, see also
\cite{Belkale}, \cite{Bleher}, \cite{Bogner}, \cite{Marcolli}, among others.
The book \cite{G-S}, which is one of the main sources for the `$i\epsilon$
regularization method' widely used, was written before the establishing of
Hironaka's theorem \cite{H}. After the work of Atiyah, Bernstein and Igusa,
among others, the $i\epsilon$ regularization method was substituted by the
embedded resolution of singularities technique, see \cite{Igusa-old},
\cite{Igusa}. However, this method is not widely used by theoretical
physicists. In \cite{Witten} Witten discusses the classical $i\epsilon$
regularization method for string amplitudes; in this article, we present a
rigorous regularization of the Koba-Nielsen string amplitudes using the
`modern $i\epsilon$ regularization method'.

\smallskip Take $N\geq4$, and complex variables $s_{1j}$ and $s_{(N-1)j}$ for
$2\leq j\leq N-2$ and $s_{ij}$ for $2\leq i<j\leq N-2$. Put $\boldsymbol{s}%
:=\left(  s_{ij}\right)  \in\mathbb{C}^{\boldsymbol{d}}$, where
$\boldsymbol{d}=\frac{N(N-3)}{2}$ denotes the total number of indices $ij$. In
this article we introduce the multivariate local zeta function
\begin{equation}
Z_{\mathbb{K}}^{(N)}\left(  \boldsymbol{s}\right)  :=%
{\displaystyle\int\limits_{\mathbb{K}^{N-3}}}
{\displaystyle\prod\limits_{i=2}^{N-2}}
\left\vert x_{j}\right\vert _{\mathbb{K}}^{s_{1j}}\left\vert 1-x_{j}%
\right\vert _{\mathbb{K}}^{s_{(N-1)j}}\text{ }%
{\displaystyle\prod\limits_{2\leq i<j\leq N-2}}
\left\vert x_{i}-x_{j}\right\vert _{\mathbb{K}}^{s_{ij}}%
{\displaystyle\prod\limits_{i=2}^{N-2}}
dx_{i}, \label{zeta_funtion_string}%
\end{equation}
where ${\prod\nolimits_{i=2}^{N-2}}dx_{i}$ is the normalized Haar measure on
$\mathbb{K}^{N-3}$. We have called integrals of type
(\ref{zeta_funtion_string}) \textit{Koba-Nielsen local zeta functions}. These
functions have a statistical mechanics interpretation as partition functions
of certain $1D$ log-Coulomb gases, see Section
\ref{Section_partition_functions}.

We show that these functions are bona fide integrals, which are convergent and
holomorphic in an open part of $\mathbb{C}^{\mathbf{d}}$, containing the set
given by $\frac{-2}{N-2}<\operatorname{Re}(s_{ij})<\frac{-2}{N}$ for all $ij$.
Furthermore, they admit meromorphic continuations to the whole $\mathbb{C}%
^{\boldsymbol{d}}$, see Theorems \ref{TheoremA} and \ref{TheoremB}. We give a
detailed proof and a precise description of the convergence domain in the case
$\mathbb{K}=\mathbb{R}$, see Theorem \ref{TheoremB} and Section
\ref{Section_Road_map}; this proof can be easily extended to an arbitrary
local field $\mathbb{K}$ of characteristic zero, see Section
\ref{Section_general_case}.

The Koba-Nielsen open string amplitudes for $N$-points over $\mathbb{K}$ are
\textit{formally} defined as
\begin{equation}
A_{\mathbb{K}}^{(N)}\left(  \boldsymbol{k}\right)  :=%
{\displaystyle\int\limits_{\mathbb{K}^{N-3}}}
{\displaystyle\prod\limits_{i=2}^{N-2}}
\left\vert x_{j}\right\vert _{\mathbb{K}}^{\boldsymbol{k}_{1}\boldsymbol{k}%
_{j}}\left\vert 1-x_{j}\right\vert _{\mathbb{K}}^{\boldsymbol{k}%
_{N-1}\boldsymbol{k}_{j}}\text{ }%
{\displaystyle\prod\limits_{2\leq i<j\leq N-2}}
\left\vert x_{i}-x_{j}\right\vert _{\mathbb{K}}^{\boldsymbol{k}_{i}%
\boldsymbol{k}_{j}}%
{\displaystyle\prod\limits_{i=2}^{N-2}}
dx_{i}\text{,} \label{Amplitude}%
\end{equation}
where $\boldsymbol{k}=\left(  \boldsymbol{k}_{1},\ldots,\boldsymbol{k}%
_{N}\right)  $, $\boldsymbol{k}_{i}=\left(  k_{0,i},\ldots,k_{l,i}\right)
\in\mathbb{R}^{l+1}$, for $i=1,\ldots,N$ ($N\geq4$), is the momentum vector of
the $i$-th tachyon (with Minkowski product $\boldsymbol{k}_{i}\boldsymbol{k}%
_{j}=-k_{0,i}k_{0,j}+k_{1,i}k_{1,j}+\cdots+k_{l,i}k_{l,j}$), obeying
\begin{equation}
\sum_{i=1}^{N}\boldsymbol{k}_{i}=\boldsymbol{0}\text{, \ \ \ \ \ }%
\boldsymbol{k}_{i}\boldsymbol{k}_{i}=2\text{ \ for }i=1,\ldots,N.
\label{momenta-conservation}%
\end{equation}
The parameter $l$ is an arbitrary positive integer. Typically $l$ is taken to
be $25$. However, we do not require using the critical dimension. We choose
units such that the tachyon mass is $m^{2}=-2$. To carry out a mathematical
study of the integral (\ref{Amplitude}) it is more convenient to take
$\boldsymbol{k}_{i}\boldsymbol{k}_{j}=s_{ij}\in\mathbb{C}$. Several different
problems occur depending whether or not the $\boldsymbol{k}_{i}$ are real or
complex and if the kinematic restrictions (\ref{momenta-conservation}) are
considered or not.

The amplitudes of type (\ref{Amplitude}) were introduced in the works of
Brekke, Freund, Olson and Witten, among others, on string theory in the adelic
framework, see e.g. \cite[ Section 8]{Brekke et al}. We use here all the
conventions introduced in \cite{Brekke et al}. In the real case,
$A_{\mathbb{R}}^{(N)}\left(  \boldsymbol{k}\right)  $ is (up to multiplication
by a positive constant) the open Koba-Nielsen amplitude of $N$-points, see
\cite[ Section 8]{Brekke et al}, \cite[Section 2]{Kawai et al}. If $N=4$,
$A_{\mathbb{R}}^{(4)}\left(  \boldsymbol{k}\right)  $ is the Veneziano
amplitude \cite{Veneziano}. In the complex case, $A_{\mathbb{C}}^{(N)}\left(
\boldsymbol{k}\right)  $ is just a mathematical object. However, by using the
results of \cite[Section 2]{Kawai et al}, see also \cite{Blumenhagen et al},
the $N$-point, closed string amplitude at the tree level is the product of
$A_{\mathbb{C}}^{(N)}\left(  \boldsymbol{k}\right)  $ times a polynomial
function in the momenta $\boldsymbol{k}$. This fact implies that our
techniques and results are applicable to classical closed string amplitudes at
the tree level.

A central problem is to know whether or not integrals of type (\ref{Amplitude}%
) converge for some values of $\boldsymbol{k}$. We use the integrals
$Z_{\mathbb{K}}^{(N)}(\boldsymbol{s})$ as regularizations of the amplitudes
$A_{\mathbb{K}}^{(N)}\left(  \boldsymbol{k}\right)  $. More precisely, we
\textit{redefine}
\[
A_{\mathbb{K}}^{(N)}\left(  \boldsymbol{k}\right)  =Z_{\mathbb{K}}%
^{(N)}(\boldsymbol{s})\mid_{s_{ij}=\boldsymbol{k}_{i}\boldsymbol{k}_{j}},
\]
where $Z_{\mathbb{K}}^{(N)}(\boldsymbol{s})$ now denotes the meromorphic
continuation of (\ref{zeta_funtion_string}) to the whole $\mathbb{C}%
^{\boldsymbol{d}}$, see Theorem \ref{TheoremA}.

We show that $A_{\mathbb{K}}^{(N)}\left(  \boldsymbol{k}\right)  $ converges
on some open in $\mathbb{C}^{(N-1)(l+1)}$ by showing that this open is mapped
into the domain of convergency of $Z_{\mathbb{K}}^{(N)}\left(  \boldsymbol{s}%
\right)  $ by $\boldsymbol{k}\rightarrow s_{ij}=\boldsymbol{k}_{i}%
\boldsymbol{k}_{j}$. Our Theorem \ref{TheoremC} establishes further that
$A_{\mathbb{K}}^{(N)}\left(  \boldsymbol{k}\right)  $ extends to a meromorphic
function to the whole $\mathbb{C}^{(N-1)(l+1)}$, and that its polar set is
contained in the inverse image of the polar set of $Z_{\mathbb{K}}%
^{(N)}\left(  \boldsymbol{s}\right)  $ under that mapping. Also, it describes
the possible poles using numerical data of suitable resolutions of
singularities. It is important to mention here that, in the regularization of
$A_{\mathbb{K}}^{(N)}\left(  \boldsymbol{k}\right)  $, we do not use the
kinematic restrictions (\ref{momenta-conservation}). On the other hand, in the
cases $\mathbb{K}=\mathbb{R}$, $\mathbb{C}$, the meromorphic continuation of
$A_{\mathbb{K}}^{(N)}\left(  \boldsymbol{k}\right)  $ can be given in terms of
gamma functions, see Section \ref{Section_A_N_sum_gammas}.

As an illustration, we describe in the cases $N=4$, $5$, $6$ explicitly the
convergence domain of $Z_{\mathbb{K}}^{(N)}\left(  \boldsymbol{s}\right)  $,
see Examples \ref{Example_N_4}, \ref{Example_N_5}, \ref{Example_N_6}. In the
case $N=4$, $A_{\mathbb{R}}^{(4)}\left(  \boldsymbol{k}\right)  $ is exactly
the Veneziano amplitude \cite{Veneziano}. It is well-known that $A_{\mathbb{R}%
}^{(4)}\left(  \boldsymbol{k}\right)  $ can be expressed as a sum of gamma
functions, and the domain of convergence of the integral $A_{\mathbb{R}}%
^{(4)}\left(  \boldsymbol{k}\right)  $ is known. Our Theorem \ref{TheoremC}
gives exactly this domain of convergence, see Section
\ref{Section_Veneziaano_amplitude}.

Our methods allow in principle to express $A_{\mathbb{R}}^{(N)}\left(
\boldsymbol{k}\right)  $ as a sum of monomial integrals and as a linear
combination of gamma functions, with coefficients in the algebra of
holomorphic functions on $\mathbb{C}^{(N-1)(l+1)}$, see Section
\ref{Section_A_N_sum_gammas}. By using the result of Example \ref{Example_N_5}
(and \ref{Example_N_6}), this could be carried out explicitly for $N=5$ (and
$N=6$).

\smallskip The study of the scattering process of $N$-tachyons with momenta
$\boldsymbol{k}_{1},\ldots,\boldsymbol{k}_{N}\in\mathbb{R}^{l+1}$, each of
them with mass $m^{2}=-2$, requires determining a value of the meromorphic
function $A_{\mathbb{K}}^{(N)}\left(  \boldsymbol{k}\right)  $ for
$\boldsymbol{k}$ belonging to the real algebraic set
(\ref{momenta-conservation}). In this framework, $\boldsymbol{k}%
_{i}\boldsymbol{k}_{i}=2$ is the relativistic energy for the $i$-th tachyon.
Then, increasing $N$ means to increase the energy of the scattering process.
Our efforts for finding real solutions of (\ref{momenta-conservation})
contained in the domain of convergence of the integral $A_{\mathbb{K}}%
^{(N)}\left(  \boldsymbol{k}\right)  $ suggest that it is unlikely to find
such solutions for large $N$. In contrast, if $N\leq l+1$, we show that the
restriction of $A_{\mathbb{K}}^{(N)}\left(  \boldsymbol{k}\right)  $ to the
set (\ref{momenta-conservation}) gives rise to a well-defined scattering
amplitude, see Proposition \ref{Prop3}.

On the other hand, for $N>l+1$, the existence of points of the algebraic set
(\ref{momenta-conservation}) that are \emph{not} contained in the polar locus
of $A_{\mathbb{K}}^{(N)}\left(  \boldsymbol{k}\right)  $ is a non-trivial
question. For instance, in the basic case $N=4$ and $l=1$, such points do not
exist. As soon as $l \geq2$, we do expect that complex solutions exist, and we
constructed some for infinitely many $N$, see Example \ref{Lastexample}.

In the case $A_{\mathbb{K}}^{(N)}\left(  \boldsymbol{k}\right)  =\infty$, the
determination of a scattering amplitude requires a renormalization process for
the meromorphic amplitude $A_{\mathbb{K}}^{(N)}\left(  \boldsymbol{k}\right)
$. One can extend the function on the part of (\ref{momenta-conservation})
outside the polar locus, or proceed via the computation of Laurent series,
which are available in this case, see Section \ref{Section_A_N_sum_gammas}.

\smallskip Our results on Koba-Nielsen local zeta functions can be easily
extended to the case in which the integrals defining them contain
multiplicative characters. In Section \ref{Section-Chan_patton}, we show that
the techniques introduced here allow us to regularize general $p$-adic open
string amplitudes with Chan-Paton rules and a constant $B$-field. The
regularization problem of these integrals was posed in \cite{Zuniga-Compean et
al}, see also \cite{Ghoshal:2004ay}.

The string amplitudes were introduced by Veneziano in the 60s,
\cite{Veneziano}, further generalizations were obtained by Virasoro
\cite{Virasoro}, Koba and Nielsen \cite{Koba-Nielsen}, among others. The
$p$-adic string amplitudes emerged in the 80s in the works of Freund and Olson
\cite{F-O}, Freund and Witten \cite{F-W}, see also \cite{B-F-O-W}, Frampton
and Okada \cite{Fra-Okada}, and Volovich \cite{Vol}. Since the 60s the string
amplitudes at the tree level have been used as formal objects in many physical
calculations. In \cite{Zun-B-C-LMP}, it was established in the $p$-adic
setting and by using techniques of Igusa's local zeta functions that the
Koba-Nielsen amplitudes are bona fide integrals. In this article this result
is extended to an arbitrary local field of characteristic zero. We show that
the open and closed string amplitudes at the tree level can be studied in a
uniform way on any local field of characteristic zero, see Theorems
\ref{TheoremA}, \ref{TheoremC}. This is consistent with Volovich's conjecture
asserting that the mathematical description of physical reality must not
depend on the background number field, see \cite{Volovich1}.

\section{Discussion of the results}

After the publication of this manuscript in arxiv.org, we became aware of two
other manuscripts, also available in arxiv.org, containing some similar
results, see \cite{Brown-Dupont-1}, \cite{Vanhove et al}. Indeed, we are
dealing here with some matters considered in the mentioned works, but our
perspective, problems, and techniques are completely different, and mainly the
main results of all these works are complementary. Here we study string
amplitudes at the tree level as algebraic-arithmetic objects over arbitrary
local fields of characteristic zero. This implies that we can not use
presentations of the amplitudes that involve the standard real order. In
addition, we do not use the Mandelstam variables. In contrast, in
\cite{Brown-Dupont-1}, \cite{Vanhove et al}, the classical string amplitudes
are studied using moduli algebraic-geometric techniques. Then the results
about the convergence of the amplitudes (see \cite[Propositions 3.5 and
3.6]{Brown-Dupont-1}, \cite[Proposition 7.2]{Vanhove et al}) presented in
these works cannot be compared directly with our results, see e.g.
Propositions \ref{Lemma_Convergence}, \ref{Prop3}.

Our Theorem \ref{TheoremC} establishes the existence of a meromorphic
continuation for $A_{\mathbb{K}}^{(N)}\left(  \boldsymbol{k}\right)  $ in the
kinematic parameters. The meromorphic continuation provides more than a
regularization of the original amplitude; it shows in particular that the
original integral converges on some open region. The polar set of
$A_{\mathbb{K}}^{(N)}\left(  \boldsymbol{k}\right)  $ contains all the
information on the ultraviolet and infrared divergencies of the amplitude.
Furthermore, in the case $\mathbb{K}=\mathbb{R}$, $\mathbb{C}$, we show that
$A_{\mathbb{K}}^{(N)}\left(  \boldsymbol{k}\right)  $ is a linear combination
of gamma functions with coefficients in the ring of holomorphic functions on
$\mathbb{C}^{(N-1)(l+1)}$. The above-mentioned works do not contain similar
results. Theorem \ref{TheoremC} also implies\ the existence of Laurent
expansions for the string amplitudes in the kinematic parameters.

On the other hand, the main result of \cite{Brown-Dupont-1} is a
regularization of both open and closed string perturbation amplitudes at tree
level. Furthermore, the amplitudes admit Laurent expansion in Mandelstam
variables whose coefficients are multiple zeta values (resp. single-valued
multiple zeta values). Then, in \cite[Theorems 4.20 and 4.24]{Brown-Dupont-1}
a very general theory of Laurent series for string amplitudes is presented,
which gives a detailed local description of the string amplitudes. In
contrast, here we provide a global description of the open/closed string
amplitudes\ as meromorphic functions, but we do not provide a detailed
description of the coefficients of the Laurent expansions.

In Section \ref{Vanhove and Zerbini} we discuss a result of Vanhove and
Zerbini \cite[Proposition 7.2]{Vanhove et al} on the convergence of the
integrals $Z_{\mathbb{C}}^{\left(  N\right)  }(\boldsymbol{s})$. The authors
use an ad hoc procedure (partially also using changes of variables of \lq
blow-up type\rq) to study the convergence of these integrals. However, we
claim that their stated convergence domain is too large, in the sense that
some necessary inequalities were forgotten. We illustrate this in particular
for the case $N=5$, giving a concrete $\boldsymbol{s}_{0}$ in the stated
convergence domain of \cite[Proposition 7.2]{Vanhove et al} for which
$Z_{\mathbb{C}}^{\left(  5\right)  }(\boldsymbol{s}_{0})$ diverges. In our
view, this confirms that the technique of resolution of singularities is a
very appropriate tool to study convergence of Koba-Nielsen string amplitudes.

We also consider the problem of determining the scattering amplitude for the
interaction of $N$ tachyons, cf. Proposition \ref{Prop3}, which is not
discussed in \cite{Brown-Dupont-1}, \cite{Vanhove et al}. Finally, our
techniques can be used to study $p$-adic open string amplitudes, with
Chan-Paton rules and a constant $B$-field, and the local zeta functions
introduced here have a statistical mechanics interpretation. In the next two
sections we discuss these last matters.

\bigskip

\subsection{\label{Section-Chan_patton}Open string tree amplitudes with
Chan-Paton factors}

The connections between noncommutative geometry and string theory are very
deep and relevant, at mathematical and physical levels. The study of the
`Koba-Nielsen string amplitudes' coming from these theories is a relevant
matter. The mathematical framework developed here allows us naturally to study
these amplitudes. The study of such amplitudes is out of scope of
\cite{Brown-Dupont-1}, \cite{Vanhove et al}.

\bigskip

In ordinary string theory, the effective action for bosonic open strings in
gauge field backgrounds was discussed many years ago in
\cite{Abouelsaood:1986gd}. The analysis incorporating a Neveu-Schwarz
$B$-field in the target space leads to a noncommutative effective gauge theory
on the world-volume of D-branes \cite{Seiberg:1999vs}. The study of the
$p$-adic open string tree amplitudes including Chan-Paton factors was started
in \cite{B-F-O-W}. However, the incorporation of a $B$-field in the $p$-adic
context and the computation of the tree level string amplitudes was discussed
in \cite{Ghoshal:2004ay}, \cite{Grange:2004xj}. In these works it was reported
that the tree-level string amplitudes are affected by a noncommutative factor.
In \cite{Ghoshal:2004ay} Ghoshal and Kawano introduced new amplitudes
involving multiplicative characters and a noncommutative factor. These
amplitudes coincide with the ones obtained directly from the noncommutative
effective action \cite{Ghoshal:2004dd}.

\bigskip

In \cite{Zuniga-Compean et al} the regularization of the $p$-adic open string
amplitudes, with Chan-Paton rules and a constant $B$-field, introduced by
Ghoshal and Kawano, was established rigorously. The authors use techniques of
multivariate local zeta functions depending on multiplicative characters and a
phase factor which involves an antisymmetric bilinear form. By attaching to
each amplitude a multivariate local zeta function depending on the kinematic
parameters, the $B$-field and the Chan-Paton factors, the authors show that
these integrals admit meromorphic continuations in the kinematic parameters.

\bigskip

We define the $N$-point, open string amplitude, with Chan-Paton rules in a
constant $B$-field over $\mathbb{K}$, by
\begin{gather}
A_{\mathbb{K}}^{(N)}\left(  \boldsymbol{k},\theta,\mathrm{sgn}_{\mathbb{K}%
}\right)  =\label{NC-Amplitude}\\
{\int\limits_{\mathbb{K}^{N}}}\text{ }{\prod\limits_{1\leq i<j\leq N}%
}\left\vert x_{i}-x_{j}\right\vert _{\mathbb{K}}^{\boldsymbol{k}%
_{i}\boldsymbol{k}_{j}}H_{\mathbb{K}}\left(  x_{i}-x_{j}\right)  \text{ }%
\exp\left\{  -\frac{\sqrt{-1}}{2}\left(  {\sum\limits_{1\leq i<j\leq N}%
}(\boldsymbol{k}_{i}\theta\boldsymbol{k}_{j})\mathrm{sgn}_{\mathbb{K}}%
(x_{i}-x_{j})\right)  \right\}  {\prod\limits_{i=1}^{N}}dx_{i}\text{,}%
\nonumber
\end{gather}
where $N\geq4$, $\boldsymbol{k}=\left(  \boldsymbol{k}_{1},\ldots
,\boldsymbol{k}_{N}\right)  $, $\boldsymbol{k}_{i}=\left(  k_{0,i}%
,\ldots,k_{l,i}\right)  $, $i=1,\ldots,N$, is the momentum vector of the
$i$-th tachyon vertex operator obeying (\ref{momenta-conservation}),
$H_{\mathbb{K}}\left(  x\right)  =\frac{1}{2}(1+\mathrm{sgn}_{\mathbb{K}}%
(x))$, $\mathrm{sgn}_{\mathbb{K}}(x)$ is a $\mathbb{K}$-version of the sign
function, $\theta$ is a fixed antisymmetric bilinear form, and ${\prod
\nolimits_{i=1}^{N}}dx_{i}$ is the normalized Haar measure on $\left(
\mathbb{K}^{N},+\right)  $. These amplitudes are non-commutative
generalizations of the Koba-Nielsen amplitudes. Indeed, the non-commutativity
comes from the fact that $\boldsymbol{k}_{j}\theta\boldsymbol{k}_{i}%
\neq\boldsymbol{k}_{i}\theta\boldsymbol{k}_{j}$, and by turning of the
$B$-field, i.e., taking $\theta=0$ and $\mathrm{sgn}_{\mathbb{K}}=1$, one
obtains an amplitude similar to (\ref{Amplitude}) but in a different number of variables.

The functions $\mathrm{sgn}_{\mathbb{K}}$ are multiplicative characters of the
multiplicative group $\left(  \mathbb{K}^{\times},\cdot\right)  $. For any
$\mathbb{K}$ there is a trivial character, $\mathrm{sgn}_{\mathbb{K}}\left(
x\right)  =1$ for any $x\in\mathbb{K}^{\times}$. In the case $\mathbb{K}%
=\mathbb{R}$, there is only one non-trivial multiplicative character,
$\mathrm{sgn}_{\mathbb{R}}\left(  x\right)  =\frac{x}{\left\vert x\right\vert
}$, $x\in\mathbb{R}^{\times}$; in the case $\mathbb{K}=\mathbb{C}$, the
nontrivial multiplicative characters have the form $\mathrm{sgn}_{\mathbb{C}%
}\left(  x\right)  =\left(  \frac{x}{\left\vert x\right\vert }\right)  ^{m}$,
for some integer $m$, and $x\in\mathbb{C}^{\times}$. For the discussion of the
$p$-adic case the reader may consult \cite{Zuniga-Compean et al}.

This type of theories is not invariant under projective M\"{o}bius
transformations and consequently the normalization
\begin{equation}
x_{1}=0,x_{N-1}=1,x_{N}=\infty\label{gauge}%
\end{equation}
can not be carried out. In \cite{Ghoshal:2004ay}, in the case $\mathbb{K}%
=\mathbb{Q}_{p}$, the authors assume such a normalization and study the
corresponding amplitudes for some particular values of $N$. In
\cite{Zuniga-Compean et al}, these amplitudes were studied for an arbitrary
number of points, using resolution of singularities. One of the conclusions
obtained there was that imposing the conformal gauge (\ref{gauge}) produces
amplitudes with strange properties, and consequently, the study of amplitudes
of the form (\ref{NC-Amplitude}) is completely necessary. It is important to
mention that integrals (\ref{NC-Amplitude}) appear in a $p$-adic framework,
but since they are algebro-geometric objects, here we study them over
arbitrary fields of characteristic zero.

As an application of the techniques and results presented here, we show
implicitly also the existence of a meromorphic continuation for amplitudes of
type (\ref{NC-Amplitude}), in the case $\mathbb{K}$ is $\mathbb{R}$ or a
$p$-adic field. We attach to $A_{\mathbb{K}}^{(N)}\left(  \boldsymbol{k}%
,\theta,\mathrm{sgn}_{\mathbb{K}}\right)  $ the Igusa type integral%
\[
Z_{\mathbb{K}}^{(N)}\left(  \boldsymbol{s},\widetilde{\boldsymbol{s}%
},\mathrm{sgn}_{\mathbb{K}}\right)  :={\int\limits_{\mathbb{K}^{N}}}\text{
}{\prod\limits_{1\leq i<j\leq N}}\left\vert x_{i}-x_{j}\right\vert
_{\mathbb{K}}^{s_{ij}}H_{\mathbb{K}}\left(  x_{i}-x_{j}\right)  \text{
}E\left(  \boldsymbol{x},\widetilde{\boldsymbol{s}},\mathrm{sgn}_{\mathbb{K}%
}\right)  {\prod\limits_{i=1}^{N}}dx_{i},
\]
where%
\[
E\left(  \boldsymbol{x},\widetilde{\boldsymbol{s}},\mathrm{sgn}_{\mathbb{K}%
}\right)  :=\exp\left\{  -\frac{\sqrt{-1}}{2}\left(  {\sum\limits_{1\leq
i<j\leq N}}\widetilde{s}_{ij}\mathrm{sgn}_{\mathbb{K}}(x_{i}-x_{j})\right)
\right\}  ,
\]
the $s_{ij}$ are complex symmetric variables and the $\widetilde{s}_{ij}$ are
real antisymmetric variables.

If $\mathbb{K}$ is $\mathbb{R}$ or a $p$-adic field, \ then $\mathrm{sgn}%
_{\mathbb{K}}\left(  x\right)  \in\left\{  \pm1\right\}  $ for $x\in
\mathbb{K}^{\times}$, and then for $\widetilde{s}\in\mathbb{R}$ and
$x\in\mathbb{K}^{\times}$,%
\[
\exp\left(  -\frac{\sqrt{-1}\widetilde{s}}{2}\mathrm{sgn}_{\mathbb{K}}\left(
x\right)  \right)  =\cos\left(  \frac{\widetilde{s}}{2}\right)  -\sqrt
{-1}\mathrm{sgn}_{\mathbb{K}}\left(  x\right)  \sin\left(  \frac{\widetilde
{s}}{2}\right)  .
\]
By using this identity, we get%
\[
E\left(  \boldsymbol{x},\widetilde{\boldsymbol{s}},\mathrm{sgn}_{\mathbb{K}%
}\right)  ={\sum\limits_{J}}E_{J}(\widetilde{\boldsymbol{s}})\text{ }%
{\prod\limits_{i,j\in J;\text{ }i<j}}\mathrm{sgn}_{\mathbb{K}}(x_{i}-x_{j}),
\]
where the $E_{J}(\widetilde{\boldsymbol{s}})$ are polynomial functions in the
variables $\cos\left(  \widetilde{s}_{ij}\right)  $, $\sin\left(
\widetilde{s}_{ij}\right)  $. In addition, we also have the identity%
\[
{\prod\limits_{1\leq i<j\leq N}}H_{\mathbb{K}}(x_{i}-x_{j})=c+{\sum
\limits_{J}}e_{J}{\prod\limits_{i,j\in J;\text{ }i<j}}\mathrm{sgn}%
_{\mathbb{K}}(x_{i}-x_{j}),
\]
where $c$ and the $e_{J}$ are constants, and $J$ runs over a family of subsets
of $\left\{  1,\ldots,N\right\}  $, see also \cite[Section 5.1]{Zuniga-Compean
et al}. By using these identities, $Z_{\mathbb{K}}^{(N)}\left(  \boldsymbol{s}%
,\widetilde{\boldsymbol{s}},\mathrm{sgn}_{\mathbb{K}}\right)  $ is a linear
combination of twisted Koba-Nielsen zeta functions of the form
\[
Z_{\mathbb{K}}^{\left(  N\right)  }\left(  \boldsymbol{s},\mathbf{\chi
};J\right)  :={\int\limits_{\mathbb{K}^{N}}}\text{ }{\prod\limits_{i,j\in
J;\text{ }i<j}}\left\{  \left\vert x_{i}-x_{j}\right\vert _{\mathbb{K}%
}^{s_{ij}}\chi_{ij}(x_{i}-x_{j})\right\}  {\prod\limits_{i=1}^{N}}dx_{i},
\]
where $\mathbf{\chi=}\left(  \chi_{ij}\right)  _{i,j\in J}$ and $\chi
_{ij}=\mathrm{sgn}_{\mathbb{K}}$ or $\chi_{ij}=1$. By using a partition of the
unity of the form $1_{\mathbb{K}}=\sum_{I\subset\left\{  1,\ldots,N\right\}
}\Theta_{I}\left(  \boldsymbol{x}\right)  $, where $\Theta_{I}$ is a smooth
function (in the case $\mathbb{K}=\mathbb{R}$,) or a locally constant function
(in the non-Archimedean case), see e.g. Definition \ref{Definition_phi},
$Z_{\mathbb{K}}^{\left(  N\right)  }\left(  \boldsymbol{s},\mathbf{\chi
};J\right)  $ becomes a linear combination of integrals of the form%
\begin{equation}
Z_{\mathbb{K}}^{(N)}\left(  \boldsymbol{s},\mathbf{\chi};\Theta_{J},J\right)
:={\int\limits_{\mathbb{K}^{N}}}\text{ }\Theta_{J}\left(  \boldsymbol{x}%
\right)  \text{ }{\prod\limits_{i,j\in J;\text{ }i<j}}\left\{  \left\vert
x_{i}-x_{j}\right\vert _{\mathbb{K}}^{s_{ij}}\chi_{ij}(x_{i}-x_{j})\right\}
{\prod\limits_{i=1}^{N}}dx_{i}. \label{Integral_zeta}%
\end{equation}
These integrals are twisted multivariate local zeta functions if $\Theta_{J}$
has compact support.

By renaming the variables, we may assume without loss of generality that%
\[
{\prod\limits_{i,j\in J;\text{ }i < j}}\left\{  \left\vert x_{i}%
-x_{j}\right\vert _{\mathbb{K}}^{s_{ij}}\chi_{ij}(x_{i}-x_{j})\right\}
={\prod\limits_{l\leq i<j\leq n}}\left\{  \left\vert x_{i}-x_{j}\right\vert
_{\mathbb{K}}^{s_{ij}}\chi_{ij}(x_{i}-x_{j})\right\}  ,
\]
for some integers $l$, $n$ satisfying \ $1\leq l<n\leq N$. If $\Theta_{J}$ has
an unbounded support, we may assume that there exists $m$ satisfying $1\leq l
\leq m \leq n\leq N$, such that $x_{i}$ runs through a bounded set for
$i=l,\ldots,m-1$, and through a unbounded set for $i=m,\ldots,n$. By changing
variables $x_{i}\rightarrow x_{i}$ for $i=l,\ldots,m-1$ and $x_{i}\rightarrow
x_{i}^{-1}$ for $i=m,\ldots,n$, in (\ref{Integral_zeta}), we get%
\[
Z_{\mathbb{K}}^{(N)}\left(  \boldsymbol{s},\mathbf{\chi};\widetilde{\Theta
}_{J},J\right)  ={\int\limits_{\mathbb{K}^{N}}}\text{ }\widetilde{\Theta}%
_{J}\left(  \boldsymbol{x}\right)  \text{ }\frac{F_{J}\left(  x,\boldsymbol{s}%
,\boldsymbol{\chi}\right)  }{\prod\limits_{i=m}^{n}\left\vert x_{i}\right\vert
_{\mathbb{K}}^{2}}{\prod\limits_{i=1}^{N}}dx_{i},
\]
where%
\begin{align*}
F_{J}\left(  x,\boldsymbol{s},\boldsymbol{\chi}\right)   &  :={\prod
\limits_{\substack{l\leq i\leq m-1\\m\leq j\leq n}}}\left\{  \left\vert
x_{i}x_{j}-1\right\vert _{\mathbb{K}}^{s_{ij}}\chi_{ij}(x_{i}x_{j}%
-1)\left\vert x_{j}\right\vert _{\mathbb{K}}^{-s_{ij}}\chi_{ij}^{-1}%
(x_{j})\right\}  \times\\
&  {\prod\limits_{l\leq i<j\leq m-1}}\left\{  \left\vert x_{i}-x_{j}%
\right\vert _{\mathbb{K}}^{s_{ij}}\chi_{ij}(x_{i}-x_{j})\right\}  \times\\
&  {\prod\limits_{m\leq i<j\leq n}}\left\{  \left\vert x_{i}-x_{j}\right\vert
_{\mathbb{K}}^{s_{ij}}\chi_{ij}(x_{i}-x_{j})\left\vert x_{i}\right\vert
_{\mathbb{K}}^{-s_{ij}}\chi_{ij}^{-1}(x_{i})\left\vert x_{j}\right\vert
_{\mathbb{K}}^{-s_{ij}}\chi_{ij}^{-1}(x_{j})\right\}  .
\end{align*}
By using Hironaka's resolution of singularities theorem, one establishes the
existence of meromorphic continuations (regularizations) for each integral
$Z_{_{\mathbb{K}}}^{(N)}\left(  \boldsymbol{s},\mathbf{\chi};\widetilde
{\Theta}_{J},J\right)  $, see e.g. \cite{Loeser}.
Here we can show the existence of a common domain where all these integrals
converge.
In the cases $\mathbb{K}=\mathbb{R}$, $\mathbb{Q}_{p}$, the description of the
possible poles for the integrals $Z_{_{\mathbb{K}}}^{(N)}\left(
s,\mathbf{\chi};\widetilde{\Theta}_{J},J\right)  $ does not depend on
$\mathbf{\chi}$, see \cite[Theorems 5.4.1, 8.2.1]{Igusa}. Consequently, all
the results presented in Sections \ref{Section_Road_map} and
\ref{Section_general_case}, including Theorem \ref{TheoremA}, extend to these
integrals, by replacing $N$ by $N+3$.

For an arbitrary local field $\mathbb{K}$ of characteristic zero, by
considering
\[
\left\vert Z_{\mathbb{K}}^{\left(  N\right)  }\left(  \boldsymbol{s}%
,\mathbf{\chi}\right)  \right\vert \text{, }\left\vert Z_{\mathbb{K}}%
^{(N)}\left(  \boldsymbol{s},\mathbf{\chi};\widetilde{\Theta}_{J},J\right)
\right\vert ,
\]
and using Theorem \ref{TheoremB}, we conclude that all these integrals
converge (and are holomorphic) in an open domain, containing the set
\begin{equation}
-\frac{2}{N+1}<\operatorname{Re}(s_{ij})<-\frac{2}{N+3}\qquad\text{ for all
}ij\text{.} \label{Region}%
\end{equation}
And consequently $\left\vert Z_{\mathbb{K}}^{(N)}\left(  \boldsymbol{s}%
,\widetilde{\boldsymbol{s}},\mathrm{sgn}_{\mathbb{K}}\right)  \right\vert $
also converges in this neighborhood.

Finally, we regularize $A_{\mathbb{K}}^{(N)}\left(  \boldsymbol{k}%
,\theta,\mathrm{sgn}_{\mathbb{K}}\right)  $ by taking%
\[
A_{\mathbb{K}}^{(N)}\left(  \boldsymbol{k},\theta,\mathrm{sgn}_{\mathbb{K}%
}\right)  =Z_{\mathbb{K}}^{(N)}\left(  \boldsymbol{s},\widetilde
{\boldsymbol{s}},\mathrm{sgn}_{\mathbb{K}}\right)  \mid_{s_{ij}=\boldsymbol{k}%
_{i}\boldsymbol{k}_{j}\text{, }\widetilde{s_{ij}}=\boldsymbol{k}_{i}%
\theta\boldsymbol{k}_{j}}.
\]
The fact that $A_{\mathbb{K}}^{(N)}\left(  \boldsymbol{k},\theta
,\mathrm{sgn}_{\mathbb{K}}\right)  $ admits an extension which is meromorphic
in the kinematic parameters is established as in the proof of Theorem
\ref{TheoremC}.

\subsection{\label{Section_partition_functions} Koba-Nielsen local zeta
functions and $1D$ log-Coulomb gases}

The Koba-Nielsen local zeta functions are partition functions of certain $1D$
log-Coulomb gases. It is worth to mention that a similar connection has been
reported in conformal field theory, see e.g. \cite{Kawai-1}-\cite{Kawai-2},
see also \cite{Balaska}\ and the references therein. Let $\left(
\mathbb{K},\left\vert \cdot\right\vert _{\mathbb{K}}\right)  $ be a local
field of characteristic zero. The sites of a configuration of a log
Koba-Nielsen-Coulomb gas are vectors of the form $\left(  0,x_{2}%
,\ldots,x_{N-2},1\right)  \in\mathbb{K}^{N-1}$, which means that two sites are
fixed: $x_{1}=0$ and $x_{N-1}=1$. At the site $x_{i}$ there is a charge
$e_{i}\in\mathbb{R}$. We set $\boldsymbol{e}=\left(  e_{i}\right)  _{1\leq
i\leq N-1}$. The Hamiltonian of the log Koba-Nielsen-Coulomb gas is%
\begin{multline*}
H_{\mathbb{K}}(x_{2},\ldots,x_{N-2};\boldsymbol{e})=-\sum\limits_{i=2}%
^{N-2}\ln\left\vert x_{i}\right\vert _{\mathbb{K}}^{e_{i}e_{1}}-\sum
\limits_{i=2}^{N-2}\ln\left\vert 1-x_{i}\right\vert _{\mathbb{K}}%
^{e_{i}e_{(N-1)i}}\\
-\sum\limits_{2\leq i<j\leq N-2}\ln\left\vert x_{i}-x_{j}\right\vert
_{\mathbb{K}}^{e_{i}e_{j}}.
\end{multline*}
The statistical mechanics of this Coulomb gas is described by the
Gibbs\ measure
\[
\frac{e^{-\beta H_{\mathbb{K}}(x_{2},\ldots,x_{N-2};\boldsymbol{e})}%
}{\mathcal{Z}_{\mathbb{K},N,\boldsymbol{e}}\left(  \beta\right)  }%
{\displaystyle\prod\limits_{i=2}^{N-2}}
dx_{i},
\]
where $%
{\textstyle\prod\nolimits_{i=2}^{N-2}}
dx_{i}$ is a normalized Haar measure of the topological group $\left(
\mathbb{K}^{N-3},+\right)  $, $\beta$ $>0$ is the inverse temperature and
$\mathcal{Z}_{\mathbb{K},\beta,N,\boldsymbol{e}}$ is a normalization constant,
the partition function, defined as%
\[
\mathcal{Z}_{\mathbb{K},N,\boldsymbol{e}}\left(  \beta\right)  =%
{\displaystyle\int\limits_{\mathbb{K}^{N-3}}}
{\displaystyle\prod\limits_{i=2}^{N-2}}
\left\vert x_{i}\right\vert _{\mathbb{K}}^{\beta e_{i}e_{1}}\left\vert
1-x_{i}\right\vert _{\mathbb{K}}^{\beta e_{i}e_{(N-1)i}}%
{\displaystyle\prod\limits_{2\leq i<j\leq N-2}}
\left\vert x_{i}-x_{j}\right\vert _{\mathbb{K}}^{\beta e_{i}e_{j}}%
{\displaystyle\prod\limits_{i=2}^{N-2}}
dx_{i}.
\]
Then, Theorem \ref{TheoremA} implies that $\mathcal{Z}_{\mathbb{K}%
,N,\boldsymbol{e}}\left(  \beta\right)  $ converges in some region determined
by the $\beta e_{i}e_{j}$, and it admits a meromorphic continuation to the
whole $\mathbb{C}^{\frac{N(N-3)}{2}}$. It is interesting to mention that
positive poles of $\mathcal{Z}_{\mathbb{K},N,\boldsymbol{e}}\left(
\beta\right)  $ are related with phase transitions, see \cite{Zuniga et al}
and the references therein.

\section{\label{Section_3}Multivariate Local Zeta Functions and Embedded
Re\-solution of Singularities}

Let $f_{1}(x),\ldots,f_{m}(x)\in\mathbb{R}\left[  x_{1},\ldots,x_{n}\right]  $
be non-constant polynomials; we denote by $D:=\cup_{i=1}^{m}f_{i}^{-1}(0)$ the
divisor attached to them. We set $\boldsymbol{f}:=\left(  f_{1},\ldots
,f_{m}\right)  $ and $\boldsymbol{s}:=\left(  s_{1},\ldots,s_{m}\right)
\in\mathbb{C}^{m}$. For each $\Theta:\mathbb{R}^{n}\rightarrow\mathbb{C}$
smooth with compact support, the multivariate local zeta function attached to
$(\boldsymbol{f},\Theta)$ is defined as%
\begin{equation}
Z_{\Theta}\left(  \boldsymbol{f},\boldsymbol{s}\right)  :=\int
\limits_{\mathbb{R}^{n}\smallsetminus D}\Theta\left(  x\right)  \prod
\limits_{i=1}^{m}\left\vert f_{i}(x)\right\vert ^{s_{i}}%
{\displaystyle\prod\limits_{i=1}^{n}}
dx_{i} , \label{Zeta_Function}%
\end{equation}
when $\operatorname{Re}(s_{i})>0$ for all $i$. Integrals of type
(\ref{Zeta_Function}) are analytic functions, and they admit meromorphic
continuations to the whole $\mathbb{C}^{m}$, see \cite{Loeser}, \cite{Igusa},
\cite{Igusa-old}. By applying Hironaka's resolution of singularities theorem
to $D$, the study of integrals of type (\ref{Zeta_Function}) is reduced to the
case of monomials integrals, which can be studied directly, see e.g.
\cite{Loeser}, \cite{Igusa}, \cite{Igusa-old}.

\begin{theorem}
[Hironaka, \cite{H}]\label{thresolsing} There exists an embedded resolution
$\sigma:X\rightarrow\mathbb{R}^{n}$ of $D$, that is,

\noindent(i) $X$ is an $n$-dimensional $\mathbb{R}$-analytic manifold,
$\sigma$ is a proper $\mathbb{R}$-analytic map which is a composition of a
finite number of blow-ups at closed submanifolds, and which is an isomorphism
outside of $\sigma^{-1}(D)$;

\noindent(ii) $\sigma^{-1}\left(  D\right)  $ is a normal crossings divisor,
meaning that $\sigma^{-1}\left(  D\right)  =\cup_{i\in T}E_{i}$, where the
$E_{i}$\ are closed submanifolds of $X$ of codimension one, each equipped with
an $m$-tuple of nonnegative integers $\left(  N_{f_{1},i},\ldots,N_{f_{m}%
,i}\right)  $ and a positive integer $v_{i}$, satisfying the following. At
every point $b$ of $X$ there exist local coordinates $\left(  y_{1}%
,\ldots,y_{n}\right)  $ on $X$ around $b$ such that, if $E_{1},\ldots,E_{r}$
are the $E_{i}$ containing $b$, we have on some open neighborhood $V$ of $b$
that $E_{i}$ is given by $y_{i}=0$ for $i\in\{1,\ldots,r\}$,
\begin{equation}
\sigma^{\ast}\left(  dx_{1}\wedge\ldots\wedge dx_{n}\right)  =\eta\left(
y\right)  \left(  \prod_{i=1}^{r}y_{i}^{v_{i}-1}\right)  dy_{1}\wedge
\ldots\wedge dy_{n}, \label{for2}%
\end{equation}
and
\begin{equation}
f_{j}^{\ast}(y):=\left(  f_{j}\circ\sigma\right)  \left(  y\right)
=\varepsilon_{f_{j}}\left(  y\right)  \prod_{i=1}^{r}y_{i}^{N_{f_{j},i}}
\qquad\text{ for } j=1,\ldots, m , \label{for4}%
\end{equation}
where $\eta\left(  y\right)  $ and the $\varepsilon_{f_{j}}\left(  y\right)  $
belong to $\mathcal{O}_{X,b}^{\times}$, the group of units of the local ring
of $X$\ at $b$.
\end{theorem}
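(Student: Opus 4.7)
The stated result is Hironaka's celebrated theorem on embedded resolution of singularities, which the paper invokes as a black box via \cite{H}. A complete proof fills a large body of literature; here I sketch the modern strategy (in the style of Bierstone--Milman, W\l odarczyk, and Koll\'ar), adapted to the real analytic category in which the statement is phrased.

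The plan is to proceed by noetherian induction, using an upper semi-continuous local invariant $\iota_b$ attached to each point $b$ of the current ambient manifold, that measures the worst singularity of the (strict transform of the) divisor $D$ at $b$, weighted by the combinatorial data of the exceptional divisors already produced. A typical choice is a tuple built from the Hilbert--Samuel function of the ideal $(f_1\cdots f_m)$ together with orders of vanishing along the strata of the accumulated exceptional locus, ordered lexicographically in a well-founded way. First I would verify three properties of $\iota$: (i) its values lie in a well-ordered set; (ii) the locus $Z\subset X$ where $\iota$ attains its maximum is a closed real analytic submanifold, which moreover has normal crossings with the accumulated exceptional divisors; and (iii) after blowing up along $Z$, the maximum of $\iota$ on the new ambient manifold is strictly smaller than before.

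Assuming those three properties, I would iterate: blow up the maximum locus, pass to the strict transform, blow up again, and so on. Well-foundedness of $\iota$ forces termination after finitely many steps $\sigma_1,\ldots,\sigma_k$, whose composition is the desired $\sigma:X\to \mathbb{R}^n$. Properness of $\sigma$ is automatic because each factor is a blow-up at a closed submanifold; and $\sigma$ is an isomorphism over $\mathbb{R}^n\smallsetminus D$ because every chosen center lies in the preimage of the singular locus of $D$. For the local formulas (\ref{for2}) and (\ref{for4}), note that each blow-up at a smooth center of codimension $c$ contributes, in suitable local coordinates, a Jacobian factor of the form $y^{\,c-1}$ along its exceptional divisor, and multiplies each $f_j$ by a power of the new coordinate equal to the order of vanishing of $f_j$ along that exceptional divisor; composing these contributions across $\sigma_1,\ldots,\sigma_k$ yields the claimed monomial expressions with the integers $v_i$ and $N_{f_j,i}$ recording the cumulative Jacobian and divisorial multiplicities along $E_i$.

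The hard part is precisely the construction and control of $\iota$: proving upper semi-continuity, verifying that the maximum locus is smooth and transverse to the already-created normal crossings divisor, and above all establishing the strict-decrease property under blow-up. Every modern proof of resolution of singularities is essentially a careful design of such an invariant, and this is where all the real content lies; once $\iota$ is in place, the rest is bookkeeping of the kind sketched in the previous paragraph. Since the authors use the theorem as a black box, the above is best read as a commentary on the cited result rather than as an independent derivation.
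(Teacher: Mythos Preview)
Your proposal is a reasonable high-level sketch of the modern approach to Hironaka's theorem, and you correctly identify that the paper invokes this result as a black box. Indeed, the paper provides no proof whatsoever of this statement: it is simply stated with the attribution ``Hironaka, \cite{H}'' and used thereafter as an input. There is therefore nothing in the paper to compare your sketch against.

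Your outline of the invariant-driven inductive strategy (in the style of Bierstone--Milman, W\l odarczyk, Koll\'ar) is accurate in spirit, and your explanation of how the local formulas for $\sigma^{\ast}(dx_1\wedge\cdots\wedge dx_n)$ and $f_j\circ\sigma$ arise from accumulating Jacobian and multiplicity data across successive blow-ups is correct. As you yourself acknowledge, the substantive content lies entirely in the construction and control of the invariant $\iota$, which you do not attempt; so your write-up is best understood as commentary on the cited theorem rather than a proof, which is exactly how the paper treats it as well.
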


There are two kinds of submanifolds $E_{i},i\in T$. Each blow-up creates an
\textit{exceptional variety} $E_{i}$, the image by $\sigma$ of \ any of these
$E_{i}$ has codimension at least two in $\mathbb{R}^{n}$. The other $E_{i}$
are the so-called \textit{strict transforms} of the irreducible components of
$D$.

When using Hironaka's resolution theorem, we will identify the Lebesgue
measure $\prod\nolimits_{i=1}^{n}dx_{i}$ with the measure induced by the top
differential form $dx_{1}\wedge\ldots\wedge dx_{n}$ in $\mathbb{R}^{n}$. For a
discussion on the basic aspects of analytic manifolds and resolution of
singularities, the reader may consult \cite[Chapter 2]{Igusa}. More generally,
Hironaka's resolution theorem is valid over any field of characteristic zero,
in particular over the local fields $\mathbb{R}$, $\mathbb{C}$, the field of
$p$-adic numbers $\mathbb{Q}_{p}$, or a finite extension of $\mathbb{Q}_{p}$.

The resulting monomial integrals are then handled by the following lemma,
which is an easy variation of well-known results, see e.g. \cite[Chap. II,
\S \ 7, \ Lemme 4]{AVG}, \cite[Lemme 3.1]{D-S}, \cite[Chap. I, Sect. 3.2]%
{G-S}, and \cite[Lemma 4.5]{Igusa-old}.

\begin{lemma}
\label{Lemma0} Consider the integral
\[
J(s_{1},\ldots,s_{m})=%
{\textstyle\int\limits_{\mathbb{R}^{n}}}
\Phi\left(  y,s_{1},\ldots,s_{m}\right)  \prod_{i=1}^{r}\left\vert
y_{i}\right\vert ^{\sum_{j=1}^{m}a_{{j},i}s_{j}+b_{i}-1}%
{\displaystyle\prod\limits_{i=1}^{n}}
dy_{i}\text{, }%
\]
where $1\leq r\leq n$, for each $i$ the $a_{j,i}$ are integers (not all zero)
and $b_{i}$ is an integer, and $\Phi\left(  y,s_{1},\ldots,s_{m}\right)  $ is
a smooth function with support in the polydisc
\[
\left\{  y\in\mathbb{R}^{n};\left\vert y_{i}\right\vert <1\text{, for
}i=1,\ldots,n\right\}  ,
\]
which is holomorphic in $s_{1},\ldots,s_{m}$. Set%
\[
\mathcal{R}:=%
{\textstyle\bigcap\limits_{i\in\left\{  1,\ldots,r\right\}  }}
\left\{  (s_{1},\ldots,s_{m})\in\mathbb{C}^{m};\sum_{j=1}^{m}a_{j,i}%
\operatorname{Re}(s_{j})+b_{i}>0\right\}  .
\]
Then the following assertions hold:

\noindent(i) if all the $a_{j,i}$ are nonnegative integers (not all zero) and
$b_{i}$ is a positive integer, then $\mathcal{R}\neq\emptyset$. More
precisely, $\left\{  (s_{1},\ldots,s_{m})\in\mathbb{C}^{m};\operatorname{Re}%
(s_{j})>0\text{, }j=1,\cdots,m\right\}  \subset\mathcal{R}$;

\noindent(ii) if $\mathcal{R}\neq\emptyset$, then $J(s_{1},\ldots,s_{m})$ is
convergent and defines a holomorphic function in the domain $\mathcal{R}$;

\noindent(iii) if $\mathcal{R}\neq\emptyset$, $J(s_{1},\ldots,s_{m})$ admits
an analytic continuation to the whole $\mathbb{C}^{m}$,\ as a meromorphic
function with poles belonging to
\[
\bigcup_{1\leq i\leq r}\text{ }\bigcup_{t\in\mathbb{N}}\left\{  \sum_{j=1}%
^{m}a_{j,i}s_{j}+b_{i}+t=0\right\}  .
\]

\end{lemma}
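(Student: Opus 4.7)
The plan is to reduce each of the three assertions to a one-dimensional analysis in the variables $y_1,\ldots,y_r$, since the weight $\prod_{i=1}^r|y_i|^{\sum_j a_{j,i}s_j+b_i-1}$ decouples and $\Phi$ provides only a holomorphic-in-$\boldsymbol{s}$, smooth, compactly supported multiplier. Set $c_i(\boldsymbol{s}):=\sum_{j=1}^m a_{j,i}s_j+b_i$, so that the exponent of $|y_i|$ is $c_i(\boldsymbol{s})-1$ and $\mathcal{R}=\bigcap_{i=1}^r\{\operatorname{Re}c_i(\boldsymbol{s})>0\}$.

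For (i), if the $a_{j,i}$ are nonnegative integers (not all zero) and $b_i$ is a positive integer, then for any $\boldsymbol{s}$ with $\operatorname{Re}(s_j)>0$ one has $\operatorname{Re}c_i(\boldsymbol{s})\ge b_i\ge 1>0$, which proves the containment. For (ii), fix a compact $K\subset\mathcal{R}$. Since $\Phi(\cdot,\boldsymbol{s})$ is smooth and supported inside the open polydisc, and depends continuously on $\boldsymbol{s}$, there is $M_K$ with $\sup_{y,\,\boldsymbol{s}\in K}|\Phi(y,\boldsymbol{s})|\le M_K$. Fubini then gives
\[
|J(\boldsymbol{s})|\le M_K\prod_{i=1}^r\int_{-1}^{1}|y_i|^{\operatorname{Re}c_i(\boldsymbol{s})-1}\,dy_i=M_K\prod_{i=1}^r\frac{2}{\operatorname{Re}c_i(\boldsymbol{s})},
\]
which is finite and uniformly bounded on $K$ since $\operatorname{Re}c_i$ is bounded away from $0$ there. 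Holomorphy of $J$ on $\mathcal{R}$ then follows from the joint continuity of the integrand (off the coordinate hyperplanes), from the fact that $s\mapsto|y_i|^{s}=\exp(s\log|y_i|)$ is entire for $y_i\neq 0$, and from Morera's theorem combined with Fubini.

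For (iii), the plan is to apply the standard Gel'fand--Shilov iterated integration-by-parts procedure independently in each coordinate $y_i$, $i=1,\dots,r$. Splitting the $y_i$-integration into $(0,1)$ and $(-1,0)$ and using on the positive part the identity $y_i^{c_i-1}=\tfrac{1}{c_i}\tfrac{\partial}{\partial y_i}y_i^{c_i}$, integration by parts kills both boundary contributions ($\Phi$ vanishes near $y_i=1$ by compact support, and $y_i^{c_i}\Phi\to 0$ as $y_i\to 0^{+}$ when $\operatorname{Re}c_i>0$), yielding
\[
\int_0^1 y_i^{c_i(\boldsymbol{s})-1}\,\Phi\,dy_i=-\frac{1}{c_i(\boldsymbol{s})}\int_0^1 y_i^{c_i(\boldsymbol{s})}\,\frac{\partial\Phi}{\partial y_i}\,dy_i,
\]
whose right-hand side is absolutely convergent on $\{\operatorname{Re}c_i>-1\}$ and holomorphic there away from $\{c_i(\boldsymbol{s})=0\}$. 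The analogous formula (with a sign) holds on $(-1,0)$. Iterating this procedure $t+1$ times in the variable $y_i$ continues the integral to $\{\operatorname{Re}c_i>-t-1\}$ with potential simple poles only along $c_i(\boldsymbol{s})+\tau=0$ for $\tau=0,1,\dots,t$; carrying out the procedure in all $r$ variables and letting $t\to\infty$ produces a meromorphic continuation of $J$ to the whole $\mathbb{C}^m$ with polar locus contained in $\bigcup_{i=1}^r\bigcup_{t\in\mathbb{N}}\{c_i(\boldsymbol{s})+t=0\}$, which is exactly the set claimed.

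The only bookkeeping point, which is not a serious obstacle, is that after each integration by parts the amplitude is replaced by (a multiple of) $\partial\Phi/\partial y_i$, which is again smooth, compactly supported in the open polydisc, and holomorphic in $\boldsymbol{s}$, so the recursion is self-perpetuating; and the various orders in which the one-variable continuations are performed produce the same meromorphic function, by the identity theorem on the overlapping domains where they are simultaneously defined.
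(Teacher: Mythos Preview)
Your proof is correct and follows the classical Gel'fand--Shilov integration-by-parts route. The paper does not actually give a proof of this lemma; it states the result as ``an easy variation of well-known results'' and cites \cite{AVG}, \cite{D-S}, \cite{G-S}, \cite{Igusa-old}, so your argument is precisely an unpacking of one of those references (notably \cite{G-S}). One cosmetic slip: in the bound for part~(ii) you dropped the factor coming from integrating the bounded integrand over the remaining $n-r$ variables (a factor of at most $2^{n-r}$, or the volume of the support of $\Phi$ in those directions), but this does not affect the argument.
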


\begin{remark}
\label{rem: holom} Integrals of the form $J(s_{1},\ldots,s_{m})$, where the
$a_{j,i}$ and $b_{i}$ are positive integers, have been extensively studied,
see e.g. \cite[Chap. II, \S \ 7, \ Lemme 4]{AVG}, \cite[Lemme 3.1]{D-S},
\cite[Chap. I, Sect. 3.2]{G-S}, and \cite[Lemma 4.5]{Igusa-old}.
\end{remark}

Combining Theorem \ref{thresolsing} and Lemma \ref{Lemma0}, the precise
conclusion is as follows.

\begin{theorem}
\label{thm: num data and poles} Let $f_{1}(x),\ldots,f_{m}(x)\in
\mathbb{R}\left[  x_{1},\ldots,x_{n}\right]  $ be non-constant polynomials and
$\Theta:\mathbb{R}^{n}\rightarrow\mathbb{C}$ a smooth function with compact
support, to which we associate the multivariate local zeta function
$Z_{\Theta}\left(  \boldsymbol{f},\boldsymbol{s}\right)  $ as in
(\ref{Zeta_Function}). Fix an embedded resolution $\sigma:X\rightarrow
\mathbb{R}$ of $D=\cup_{i=1}^{m}f_{i}^{-1}(0)$ as in Theorem \ref{thresolsing}%
. Then

\noindent(i) $Z_{\Theta}\left(  \boldsymbol{f},\boldsymbol{s}\right)  $ is
convergent and defines a holomorphic function in the region
\[
\sum_{j=1}^{m}N_{f_{j},i}\operatorname{Re}(s_{j})+v_{i}>0,\qquad\text{ for
}i\in T ;
\]

\noindent(ii) $Z_{\Theta}\left(  \boldsymbol{f},\boldsymbol{s}\right)  $
admits a meromorphic continuation to the whole $\mathbb{C}^{m}$, with poles
belonging to
\[
\bigcup_{i\in T}\text{ }\bigcup_{t\in\mathbb{N}}\left\{  \sum_{j=1}%
^{m}N_{f_{j},i}s_{j}+v_{i}+t=0\right\}  .
\]

\noindent(iii) In particular, in the case of one polynomial $f$, i.e., $m=1$,
the pair $\left(  N_{f,i},v_{i}\right)  $ is called the numerical datum of
$E_{i}$, and the set $\left\{  \left(  N_{f,i},v_{i}\right)  ;i\in T\right\}
$ is called the numerical data of the resolution $\sigma$. It is well known
that $\min_{i\in T}\frac{v_{i}}{N_{f,i}}$ does not depend on the choice of the
resolution $\sigma$ (this value is called the real log canonical threshold of
$f$). Furthermore, $Z_{\Theta}(f,s)$ is holomorphic in the half-space
$\operatorname{Re}(s)>-\min_{i\in T}\frac{v_{i}}{N_{f,i}}$, and the possible
poles of its meromorphic continuation belong to the set $\cup_{i\in T}\left(
-\frac{v_{i}+\mathbb{N}}{N_{f,i}}\right)  $.
\end{theorem}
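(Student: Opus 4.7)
The plan is to reduce the theorem to the local monomial situation of Lemma~\ref{Lemma0} via Hironaka's resolution $\sigma:X\to\mathbb{R}^{n}$, using compactness of $\operatorname{supp}\Theta$ and the properness of $\sigma$ to patch the local pieces. First, I would apply the change of variables formula. Since $\sigma$ is an isomorphism outside $\sigma^{-1}(D)$ and $D$ has measure zero, and since $\prod_{i=1}^{n}dx_{i}$ corresponds to the top form $dx_{1}\wedge\ldots\wedge dx_{n}$, we can write
\[
Z_{\Theta}(\boldsymbol{f},\boldsymbol{s})=\int_{X\smallsetminus\sigma^{-1}(D)}\Theta(\sigma(y))\prod_{j=1}^{m}\left\vert f_{j}^{\ast}(y)\right\vert ^{s_{j}}\,\sigma^{\ast}(dx_{1}\wedge\ldots\wedge dx_{n}).
\]
Because $\operatorname{supp}\Theta$ is compact and $\sigma$ is proper, $\sigma^{-1}(\operatorname{supp}\Theta)$ is a compact subset of $X$, hence can be covered by finitely many charts in which the local expressions \eqref{for2} and \eqref{for4} of Theorem~\ref{thresolsing} hold. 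I would choose a smooth partition of unity $\{\phi_{b}\}$ subordinate to such a finite cover, so that $Z_{\Theta}(\boldsymbol{f},\boldsymbol{s})$ becomes a finite sum of integrals, one per chart.

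Next, in a chart around $b$ with local coordinates $(y_{1},\ldots,y_{n})$ and with $E_{1},\ldots,E_{r}$ the components of $\sigma^{-1}(D)$ through $b$, substituting \eqref{for2} and \eqref{for4} yields an integrand of the form
\[
\Phi(y,\boldsymbol{s})\prod_{i=1}^{r}\left\vert y_{i}\right\vert ^{\sum_{j=1}^{m}N_{f_{j},i}s_{j}+v_{i}-1},
\]
where $\Phi(y,\boldsymbol{s})=\phi_{b}(y)\,\Theta(\sigma(y))\,\eta(y)\prod_{j}\varepsilon_{f_{j}}(y)^{s_{j}}$ is smooth in $y$, holomorphic in $\boldsymbol{s}$, and compactly supported in the unit polydisc (after a harmless affine rescaling of local coordinates). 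This is exactly the situation of Lemma~\ref{Lemma0} with $a_{j,i}=N_{f_{j},i}$ and $b_{i}=v_{i}$, which are nonnegative, respectively positive, integers. By Lemma~\ref{Lemma0}(i)(ii), each local integral converges and is holomorphic on the region $\{\sum_{j}N_{f_{j},i}\operatorname{Re}(s_{j})+v_{i}>0,\ i=1,\ldots,r\}$, and by Lemma~\ref{Lemma0}(iii) it extends meromorphically to $\mathbb{C}^{m}$ with poles confined to $\bigcup_{1\le i\le r}\bigcup_{t\in\mathbb{N}}\{\sum_{j}N_{f_{j},i}s_{j}+v_{i}+t=0\}$. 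Taking the intersection, respectively union, over the finitely many charts, and noting that each $E_{i}\in T$ that meets $\sigma^{-1}(\operatorname{supp}\Theta)$ contributes precisely the datum $(N_{f_{j},i},v_{i})$, gives assertions (i) and (ii). Strictly speaking the condition should be required only for the $E_{i}$ that meet $\sigma^{-1}(\operatorname{supp}\Theta)$, but including all $i\in T$ only shrinks the region and enlarges the claimed polar locus, so the statement remains valid as written.

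For assertion (iii), specializing to $m=1$, each local integral has poles in $\bigcup_{i}\bigcup_{t\in\mathbb{N}}\{N_{f,i}s+v_{i}+t=0\}=\bigcup_{i}\left(-\frac{v_{i}+\mathbb{N}}{N_{f,i}}\right)$; the rightmost pole from chart $i$ is $-v_{i}/N_{f,i}$, so convergence (hence holomorphy) on $\operatorname{Re}(s)>-\min_{i\in T}v_{i}/N_{f,i}$ follows immediately from (i). The equality $\min_{i\in T}v_{i}/N_{f,i}=\operatorname{rlct}(f)$ and its independence of the chosen resolution is a standard birational-invariance result (the real log canonical threshold is characterised intrinsically as the supremum of $c>0$ such that $|f|^{-c}$ is locally integrable near $D$), and I would cite this rather than reprove it.

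The main technical obstacle I anticipate is not the local analysis, which is entirely packaged in Lemma~\ref{Lemma0}, but the careful bookkeeping to ensure that the finite covering of $\sigma^{-1}(\operatorname{supp}\Theta)$ can be arranged so that in every chart the coordinates $y_{1},\ldots,y_{r}$ simultaneously realise \eqref{for2} and \eqref{for4}. This is where properness of $\sigma$ and compactness of $\operatorname{supp}\Theta$ are essential, and where one must be mindful that the units $\eta$ and $\varepsilon_{f_{j}}$ and the exponents $N_{f_{j},i},v_{i}$ are attached to the geometric components $E_{i}$, not to the chart, so the collection of hyperplanes produced does not depend on the covering, only on the resolution.
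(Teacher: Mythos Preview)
Your proposal is correct and follows exactly the route the paper indicates: the paper does not write out a proof but simply states that the theorem is obtained by combining Hironaka's Theorem~\ref{thresolsing} with Lemma~\ref{Lemma0} (and cites the literature), which is precisely the reduction via pullback, compactness/properness, partition of unity, and local monomial analysis that you have spelled out. One small slip worth fixing: in your local expression for $\Phi$ you should have $|\eta(y)|$ and $|\varepsilon_{f_{j}}(y)|^{s_{j}}$ rather than $\eta(y)$ and $\varepsilon_{f_{j}}(y)^{s_{j}}$, but since these are nonvanishing units on the chart this does not affect smoothness in $y$ or holomorphy in $\boldsymbol{s}$.
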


The above theorem is a consequence of the work of many people: Gel'fand (I. M.
and S. I.), Bernstein, Atiyah, Igusa, Loeser, as far as we know. We will use
this result, as well as Lemma \ref{Lemma0}, along this article; the
formulation that we are giving here is the one we require. The formulation of
Lemma \ref{Lemma0} will be crucial for dealing with certain non-classical
local zeta functions that occur in Section \ref{Case I different}.

\section{Local Zeta Functions of Koba-Nielsen Type}

We consider $\mathbb{R}^{N-3}$ as an $\mathbb{R}$-analytic manifold, with
$N\geq4$, and use $\left\{  x_{2},\ldots,x_{N-2}\right\}  $ as a coordinate
system.\ In addition, we take
\begin{equation}
D_{N}:=\left\{  x\in\mathbb{R}^{N-3};%
{\displaystyle\prod\limits_{i=2}^{N-2}}
x_{i}\text{ }%
{\displaystyle\prod\limits_{i=2}^{N-2}}
\left(  1-x_{i}\right)  \text{ }%
{\displaystyle\prod\limits_{2\leq i<j\leq N-2}}
\left(  x_{i}-x_{j}\right)  =0\right\}  , \label{Eq_0}%
\end{equation}
and use $\prod\nolimits_{i=2}^{N-2}dx_{i}$ to denote the measure induced by
the top differential form $dx_{2}\wedge\ldots\wedge dx_{N-2}$.

\begin{definition}
\label{Definition_K:N_Zeta_func}A \textit{Koba-Nielsen} \textit{local zeta
function }is defined to be an integral of the form%
\begin{equation}
Z^{(N)}(\boldsymbol{s}):=%
{\displaystyle\int\limits_{\mathbb{R}^{N-3}\smallsetminus D_{N}}}
{\displaystyle\prod\limits_{j=2}^{N-2}}
\left\vert x_{j}\right\vert ^{s_{1j}}%
{\displaystyle\prod\limits_{j=2}^{N-2}}
\left\vert 1-x_{j}\right\vert ^{s_{\left(  N-1\right)  j}}%
{\displaystyle\prod\limits_{2\leq i<j\leq N-2}}
\left\vert x_{i}-x_{j}\right\vert ^{s_{ij}}%
{\displaystyle\prod\limits_{i=2}^{N-2}}
dx_{i}, \label{Eq_1}%
\end{equation}
where $\boldsymbol{s}=\left(  s_{ij}\right)  =\cup_{j=2}^{N-2}\left\{
s_{1j},s_{\left(  N-1\right)  j}\right\}  \cup\cup_{2\leq i<j\leq N-2}\left\{
s_{ij}\right\}  $ is a list consisting of $\mathbf{d}$ complex variables,
where
\begin{equation}
\mathbf{d}:\mathbf{=}\left\{
\begin{array}
[c]{lll}%
2(N-3)+\left(
\begin{array}
[c]{c}%
N-3\\
2
\end{array}
\right)  & \text{if} & N\geq5\\
&  & \\
2 & \text{if} & N=4
\end{array}
\right.  \text{ \ \ \ \ }=\frac{N\left(  N-3\right)  }{2}.
\label{Defenition_d}%
\end{equation}

\end{definition}

\noindent For later use in formulas, it will be convenient to put also
$s_{ij}=s_{ji}$ for any occurring $\left\{  i,j\right\}  $. For simplicity of
notation, we will put $\mathbb{R}^{N-3}$ instead of $\mathbb{R}^{N-3}%
\smallsetminus D_{N}$ in (\ref{Eq_1}), and similarly in other such integrals.

In order to regularize the integral (\ref{Eq_1}), we will use a partition of
$\mathbb{R}^{N-3}$ constructed using a smooth function $\chi:\mathbb{R}%
\rightarrow\mathbb{R}$ satisfying%
\begin{equation}
\chi\left(  x\right)  =\left\{
\begin{array}
[c]{ccc}%
1 & \text{if} & x\in\left[  -2,2\right] \\
&  & \\
0 & \text{if} & x\in\left(  -\infty,-2-\epsilon\right]  \cup\left[
2+\epsilon,+\infty\right)  ,
\end{array}
\right.  \label{Function_Chi}%
\end{equation}
for some fixed positive $\epsilon$ sufficiently small. The existence of such a
function is well-known, see e.g. \cite[Section 1.4]{Hormander}, \cite[Section
5.2]{Igusa}. Let us mention that the number $2$ was chosen in an arbitrary
form, the key point is that the interval $\left[  0,1\right]  $ is included in
the locus where $\chi\equiv1$.

\begin{definition}
\label{Definition_phi}For $I\subseteq\left\{  2,\ldots,N-2\right\}  $,
including the empty set, we set%
\begin{equation}
\varphi_{I}:\mathbb{R}^{N-3}\rightarrow\mathbb{R}:x\mapsto%
{\displaystyle\prod\limits_{i\in I}}
\chi\left(  x_{i}\right)
{\displaystyle\prod\limits_{i\notin I}}
\left(  1-\chi\left(  x_{i}\right)  \right)  , \label{Function_phi}%
\end{equation}
with the convention that $%
{\textstyle\prod\nolimits_{i\in\emptyset}}
\cdot\equiv1$.
\end{definition}

Then $\varphi_{I}\in C^{\infty}\left(  \mathbb{R}^{N-3}\right)  $ and
$\sum_{I}\varphi_{I}\left(  x\right)  \equiv1$, for $x\in\mathbb{R}^{N-3}$. By
using this partition of the unity, we have
\begin{equation}
Z^{(N)}(\boldsymbol{s})=\sum_{I}Z_{I}^{(N)}(\boldsymbol{s}) \label{Eq_2A}%
\end{equation}
with
\begin{equation}
Z_{I}^{(N)}(\boldsymbol{s}):=%
{\displaystyle\int\limits_{\mathbb{R}^{N-3}}}
\varphi_{I}\left(  x\right)
{\displaystyle\prod\limits_{j=2}^{N-2}}
\left\vert x_{j}\right\vert ^{s_{1j}}%
{\displaystyle\prod\limits_{j=2}^{N-2}}
\left\vert 1-x_{j}\right\vert ^{s_{\left(  N-1\right)  j}}%
{\displaystyle\prod\limits_{2\leq i<j\leq N-2}}
\left\vert x_{i}-x_{j}\right\vert ^{s_{ij}}\text{ }%
{\displaystyle\prod\limits_{i=2}^{N-2}}
dx_{i}. \label{Eq_3}%
\end{equation}
In the case $I=\left\{  2,\ldots,N-2\right\}  $, $Z_{I}^{(N)}(\boldsymbol{s})$
is a classical multivariate Igusa local zeta function (since then $\varphi
_{I}\left(  x\right)  $ has compact support). These integrals are holomorphic
functions in a region including $\operatorname{Re}\left(  s_{ij}\right)  >0$
for all $ij$, and they admit meromorphic continuations to the whole
$\mathbb{C}^{\mathbf{d}}$, see Theorem \ref{thm: num data and poles}.

In the case $I\neq\left\{  2,\ldots,N-2\right\}  $, by changing variables in
(\ref{Eq_3}) as $x_{i}\rightarrow\frac{1}{x_{i}}$ for $i\not \in I$, and
$x_{i}\rightarrow x_{i}$ for $i\in I$, we have $%
{\textstyle\prod\nolimits_{i=2}^{N-2}}
dx_{i}\rightarrow%
{\textstyle\prod\nolimits_{i\not \in I}}
\frac{1}{\left\vert x_{i}\right\vert ^{2}}$ $%
{\textstyle\prod\nolimits_{i=2}^{N-2}}
dx_{i}$, and by setting $\widetilde{\chi}\left(  x_{i}\right)  :=1-\chi\left(
\frac{1}{x_{i}}\right)  $ for $i\not \in I$, i.e.,%
\[
\widetilde{\chi}\left(  x_{i}\right)  =\left\{
\begin{array}
[c]{ccc}%
1 & \text{if} & \left\vert x_{i}\right\vert \leq\frac{1}{2+\epsilon}\\
&  & \\
0 & \text{if} & \left\vert x_{i}\right\vert \geq\frac{1}{2},
\end{array}
\right.  \text{,}%
\]
we have that supp $\widetilde{\chi} \subseteq\left[  -\frac{1}{2},\frac{1}%
{2}\right]  $ and $\widetilde{\chi}\in C^{\infty}\left(  \mathbb{R}\right)  $.
Now setting $\widetilde{\varphi}_{I}\left(  x\right)  :=\prod
\nolimits_{i\not \in I}\widetilde{\chi}\left(  x_{i}\right)  $ $\prod
\nolimits_{i\in I}\chi\left(  x_{i}\right)  $, and
\begin{align*}
F_{I}\left(  x,\boldsymbol{s}\right)   &  :=\prod\limits_{j\in I}\left\vert
x_{j}\right\vert ^{s_{1j}}\text{ }\prod\limits_{j=2}^{N-2}\left\vert
1-x_{j}\right\vert ^{s_{\left(  N-1\right)  j}}\prod\limits_{\substack{2\leq
i<j\leq N-2\\i,\text{ }j\in I}}\left\vert x_{i}-x_{j}\right\vert ^{s_{ij}%
}\text{ }\times\\
&  \prod\limits_{\substack{2\leq i<j\leq N-2\\i,\text{ }j\not \in
I}}\left\vert x_{i}-x_{j}\right\vert ^{s_{ij}}\text{ }\prod
\limits_{\substack{2\leq i<j\leq N-2\\i\not \in I,\text{ }j\in I}}\left\vert
1-x_{i}x_{j}\right\vert ^{s_{ij}}\prod\limits_{\substack{2\leq i<j\leq
N-2\\i\in I,\text{ }j\not \in I}}\left\vert 1-x_{i}x_{j}\right\vert ^{s_{ij}},
\end{align*}
we have
\begin{equation}
Z_{I}^{(N)}(\boldsymbol{s})=%
{\displaystyle\int\limits_{\mathbb{R}^{N-3}\smallsetminus D_{I}}}
\text{ }\frac{\widetilde{\varphi}_{I}\left(  x\right)  F_{I}\left(
x,\boldsymbol{s}\right)  }{\prod\limits_{i\not \in I}\left\vert x_{i}%
\right\vert ^{s_{1i}+s_{\left(  N-1\right)  i}+\sum_{\substack{2\leq j\leq
N-2\\j\neq i}}s_{ij}+2}\text{ }}%
{\displaystyle\prod\limits_{i=2}^{N-2}}
dx_{i}, \label{Eq 4}%
\end{equation}
where $D_{I}$ is the divisor defined by the polynomial
\begin{multline*}
\prod\limits_{i=2}^{N-2}x_{i}\text{ }\prod\limits_{i=2}^{N-2}\left(
1-x_{i}\right)  \text{ }\prod\limits_{\substack{2\leq i<j\leq N-2\\i,\text{
}j\in I}}\left(  x_{i}-x_{j}\right)  \prod\limits_{\substack{2\leq i<j\leq
N-2\\i,\text{ }j\notin I}}\left(  x_{i}-x_{j}\right)  \times\\
\text{ }\prod\limits_{\substack{2\leq i<j\leq N-2\\i\notin I,\text{ }j\in
I}}\left(  1-x_{i}x_{j}\right)  \text{ }\prod\limits_{\substack{2\leq i<j\leq
N-2\\\text{ }i\in I,\text{ }j\notin I}}\left(  1-x_{i}x_{j}\right)  .
\end{multline*}
From the expression in (\ref{Eq 4}), it is not clear at all whether integrals
of type $Z_{I}^{(N)}(\boldsymbol{s})$, with $I\neq\left\{  2,\ldots
,N-2\right\}  $, converge for some value of $\boldsymbol{s}$. These integrals
are not classical multivariate local zeta functions, and Theorem
\ref{thm: num data and poles} does not apply to them. We will show that they
define holomorphic functions on some nonempty open in $\mathbb{C}^{\mathbf{d}%
}$, and admit meromorphic continuations to the whole $\mathbb{C}^{\mathbf{d}}%
$. To establish this result we will use Lemma \ref{Lemma0} and embedded
resolution of singularities. The technical statement is as follows.

\begin{lemma}
\label{Lemma1} For any $I \subseteq\left\{  2,\ldots,N-2\right\}  $, the
function $Z_{I}^{(N)}(\boldsymbol{s})$ is holomorphic in $\boldsymbol{s}$ on
the solution set $\mathcal{H}(I)$ of a system of inequalities of the form%
\begin{equation}
\mathcal{H}(I):=\left\{  s_{ij}\in\mathbb{C}^{\mathbf{d}};\text{ }\sum_{ij\in
M\left(  I\right)  }N_{ij,k}\left(  I\right)  \operatorname{Re}\left(
s_{ij}\right)  +\gamma_{k}\left(  I\right)  >0,\text{ for }k\in T(I)\right\}
\neq\emptyset, \label{Eq_7}%
\end{equation}
where $N_{ij,k}\left(  I\right)  ,\gamma_{k}\left(  I\right)  \in\mathbb{Z}$,
and $M(I)$, $T(I)$ are finite sets. More precisely, for each $k$, either all
numbers $N_{ij,k}\left(  I\right)  $ are equal to $0$ or $1$ and $\gamma
_{k}\left(  I\right)  > 0$, or all numbers $N_{ij,k}\left(  I\right)  $ are
equal to $0$ or $-1$ and $\gamma_{k}\left(  I\right)  < 0$.

In addition, $Z_{I}^{(N)}(\boldsymbol{s})$ admits an analytic continuation to
the whole $\mathbb{C}^{\mathbf{d}}$, as a meromorphic function with poles
belonging to
\begin{equation}
\mathcal{P}(I):=\bigcup\limits_{t\in\mathbb{N}}\bigcup\limits_{k\in
T(I)}\left\{  s_{ij}\in\mathbb{C}^{\mathbf{d}};\sum_{ij\in M\left(  I\right)
}N_{ij,k}(I)s_{ij}+\gamma_{k}(I)+t=0\right\}  . \label{Eq_7A}%
\end{equation}

\end{lemma}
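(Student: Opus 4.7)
The plan is to treat the two cases $I = \{2, \ldots, N-2\}$ and $I \subsetneq \{2, \ldots, N-2\}$ separately, reducing each to monomial integrals via Hironaka's embedded resolution and then invoking Lemma~\ref{Lemma0}. In the first case $\varphi_I$ has compact support, so (\ref{Eq_3}) is a classical multivariate Igusa-type integral and Theorem~\ref{thm: num data and poles} applies directly. Since every irreducible component of $D_N$ in (\ref{Eq_0}) is defined by a linear polynomial, I can choose a combinatorial resolution -- iterated blow-ups of the smooth centers coming from transverse intersections of strata of the hyperplane arrangement -- which produces multiplicities $N_{ij,k} \in \{0,1\}$ and log-discrepancies $v_k \in \mathbb{Z}_{>0}$. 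These inequalities already have the first of the two shapes described in (\ref{Eq_7}) (coefficients in $\{0,1\}$ and positive constant), and the meromorphic continuation and description of the polar locus are immediate from Theorem~\ref{thm: num data and poles}.

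For the main case $I \subsetneq \{2, \ldots, N-2\}$, I would start from (\ref{Eq 4}), in which $\widetilde{\varphi}_I$ is compactly supported, the numerator is a product $\prod_\alpha |P_\alpha|^{s_\alpha}$ of powers of linear factors $x_j$, $1-x_j$, $x_i - x_j$ and quadratic factors $1 - x_i x_j$, and the denominator contributes $\prod_{i \notin I} |x_i|^{L_i(\boldsymbol{s})}$ with $L_i(\boldsymbol{s}) := s_{1i} + s_{(N-1)i} + \sum_{j \neq i} s_{ij} + 2$. Apply Hironaka's embedded resolution $\sigma : X \to \mathbb{R}^{N-3}$ of $D_I$ together with a finite partition of unity on $X$. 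In local coordinates $(y_1, \ldots, y_{N-3})$ around each point $b \in X$ with $\sigma^{-1}(D_I) = \{y_1 \cdots y_r = 0\}$, the integrand pulls back to a compactly supported smooth function times $\prod_{k=1}^{r} |y_k|^{e_k(\boldsymbol{s}) - 1}$, where
\[
e_k(\boldsymbol{s}) \;=\; v_k \;+\; \sum_{\alpha} N_{P_\alpha, k}\, s_\alpha \;-\; \sum_{i \notin I} N_{x_i, k}\, L_i(\boldsymbol{s}).
\]
Lemma~\ref{Lemma0} applied chart by chart then yields holomorphy on the common open $\{\operatorname{Re}(e_k(\boldsymbol{s})) > 0 \text{ for all } k\}$, the meromorphic continuation to $\mathbb{C}^{\mathbf{d}}$, and the location of the possible poles.

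The sign-coherent form of $\mathcal{H}(I)$ in (\ref{Eq_7}) would then be obtained by the following dichotomy combined with a strengthening. At a component $E_k$ with $N_{x_i, k} = 0$ for every $i \notin I$, the exponent $e_k$ is purely of numerator type and -- for a combinatorial resolution as in Case~1 -- has $\{0,1\}$-coefficients and $v_k > 0$, i.e.\ an inequality of the first type in (\ref{Eq_7}). When some $N_{x_i, k} > 0$, the raw inequality mixes positive and negative coefficients in the $s_{ij}$, and I would replace it by the conjunction of a first-type inequality with $\gamma_A > 0$ involving only the numerator part and a second-type inequality with $\gamma_B < 0$ involving only the denominator part, choosing the constants so that $\gamma_A + \gamma_B \geq v_k$, so that the pair implies the original. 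A further splitting -- replacing an inequality having some coefficient of absolute value $\geq 2$ by several inequalities with coefficients only in $\{0,\pm 1\}$ -- brings everything into the advertised $\{0,1\}$ or $\{0,-1\}$ form. Nonemptyness of $\mathcal{H}(I)$ is then verified by testing $\operatorname{Re}(s_{ij}) = c$ for a common value $c \in (-2/(N-2), -2/N)$: first-type lower bounds are satisfied when $c$ is close enough to zero, and second-type upper bounds when $c$ is sufficiently negative.

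The main obstacle will be this strengthening step: demonstrating that the possibly mixed-sign inequalities produced by the resolution can always be replaced by pure-sign inequalities of the claimed form while preserving a nonempty common solution set. It requires a careful choice of resolution -- a preliminary blow-up may be needed to separate the quadratic factors $1 - x_i x_j$ into transverse smooth branches before the combinatorial stage, so that all multiplicities remain bounded -- and an explicit control of the new constants $\gamma_A, \gamma_B$. Once this bookkeeping is complete, parts (ii) and (iii) of Lemma~\ref{Lemma0} directly deliver the holomorphy on $\mathcal{H}(I)$ and the polar locus $\mathcal{P}(I)$ in (\ref{Eq_7A}).
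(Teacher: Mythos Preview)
Your overall architecture---Hironaka plus Lemma~\ref{Lemma0}, handling $I=\{2,\dots,N-2\}$ via Theorem~\ref{thm: num data and poles} and $I\subsetneq\{2,\dots,N-2\}$ via (\ref{Eq 4})---matches the paper's, and the treatment of the full-support case is fine. The genuine gap is in your ``strengthening'' step for the second case.

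The sign-coherence of the data $N_{ij,k}(I),\gamma_k(I)$ is asserted simultaneously for $\mathcal{H}(I)$ \emph{and} for the polar locus $\mathcal{P}(I)$, with the \emph{same} numbers. The location of the poles is dictated by the actual exponents $e_k(\boldsymbol{s})$ produced by the resolution; you cannot move them by shrinking the convergence region. If some $e_k$ genuinely had mixed signs, then the hyperplane $e_k(\boldsymbol{s})+t=0$ would carry poles and would not be of the claimed form, no matter how you strengthen the inequalities defining $\mathcal{H}(I)$. So your plan, as stated, cannot deliver the polar-locus half of the lemma.

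What actually happens---and this is the mechanism behind Proposition~\ref{Prop2}---is that mixed-sign exponents never arise, for a structural reason you have overlooked. On the support of $\widetilde{\varphi}_I$, in a neighbourhood of $D_{I,0}=\{\prod_{i\notin I}x_i=0\}$, the factors $|1-x_ix_j|^{s_{ij}}$ and $|1-x_i|^{s_{(N-1)i}}$ for $i\notin I$ are invertible. The remaining integrand then \emph{factors} as a function of $(x_j)_{j\in I}$, of exactly the shape treated in Proposition~\ref{Prop1} (yielding only type-one inequalities), times a function of $(x_i)_{i\notin I}$, namely
\[
\prod_{i\notin I}|x_i|^{-L_i(\boldsymbol{s})}\prod_{\substack{i,j\notin I\\ i<j}}|x_i-x_j|^{s_{ij}}.
\]
Blowing up the second factor at $\{x_i=0:i\in S\}$ with $S\subseteq\{2,\dots,N-2\}\setminus I$ gives an exceptional exponent in which each $s_{ij}$ with $i,j\in S$ receives $-2$ from the two $L$-terms and $+1$ from $|x_i-x_j|$, net $-1$; every other occurring $s_{ij}$ gets exactly $-1$; and the constant term is $-|S|$. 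This cancellation is the point: the exponents are \emph{already} of the second type, so no strengthening or coefficient-splitting is needed. Away from $D_{I,0}$ the $|x_i|^{-L_i}$ are units and only type-one conditions appear. Once you see this variable separation, the worries about the quadratic factors $1-x_ix_j$ and about coefficients of absolute value $\geq 2$ evaporate; neither complication occurs.
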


\begin{proof}
By applying Hironaka's Theorem \ref{thresolsing} to the divisors $D_{N}$ or
$D_{I}$, and by using a suitable partition of the unity, each integral
$Z_{I}^{(N)}(\boldsymbol{s})$ becomes a finite sum of monomial type integrals.
The important statement here is that $\mathcal{H}(I)\neq\emptyset$ if
$I\neq\left\{  2,\ldots,N-2\right\}  $; this is shown later in Propositions
\ref{Prop1}, \ref{Prop2}, together with the specifications concerning the
numbers $N_{ij,k}\left(  I\right)  $ and $\gamma_{k}\left(  I\right)  $. Then
the meromorphic continuation of the integrals follows from Lemma \ref{Lemma0}.
\end{proof}

Then, by formula (\ref{Eq_2A}), the integral $Z^{(N)}(\boldsymbol{s})$ will be
a finite sum of functions $Z_{I}^{(N)}(\boldsymbol{s})$, holomorphic on the
domain $\mathcal{H}(I)$ in $\mathbb{C}^{\mathbf{d}}$, that however depends on
$I$. Hence, the convergence and the analytic continuation of the integral
$Z^{(N)}(\boldsymbol{s})$ is not a direct consequence of the existence of
meromorphic continuations for the integrals $Z_{I}^{(N)}(\boldsymbol{s})$. We
will show that all the integrals $Z_{I}^{(N)}(\boldsymbol{s})$ are holomorphic
in a common domain, and then formula (\ref{Eq_2A}) allows us to construct a
meromorphic continuation of $Z^{(N)}(\boldsymbol{s})$.

More precisely, in order to show that $\cap_{I}\mathcal{H}(I)$ contains a
non-empty open subset of $\mathbb{C}^{\mathbf{d}}$, we will take
$\operatorname{Re}\left(  s_{ij}\right)  =\operatorname{Re}(s)$ for any $ij$
and any $I$, and show that the solution set (of the system of inequalities
obtained in this way) contains a non-empty open subset of $\mathbb{C}$. This
fact will be established by studying possible poles of functions of the form%
\[
Z_{I}^{(N)}(s):=Z_{I}^{(N)}(\boldsymbol{s})\mid_{s_{ij}=s}%
\]
and proving that $Z^{(N)}(s):=Z^{(N)}(\boldsymbol{s})\mid_{s_{ij}=s}$ is a
holomorphic function in the region
\[
-\frac{2}{N-2}<\operatorname{Re}(s)<-\frac{2}{N},
\]
see again Propositions \ref{Prop1}, \ref{Prop2}.
A somewhat more elaborate argument yields the following more precise result.

\begin{theorem}
\label{TheoremB}The Koba-Nielsen local zeta function $Z^{(N)}(\boldsymbol{s})$
is a holomorphic function in the solution set $\cap_{I}\mathcal{H}(I)$, see
(\ref{Eq_7}), in $\mathbb{C}^{\mathbf{d}}$, which contains the set
\begin{equation}
-\frac{2}{N-2}<\operatorname{Re}(s_{ij})<-\frac{2}{N}\qquad\text{ for all
}ij\text{.} \label{EQ_set}%
\end{equation}
Furthermore, it has a meromorphic continuation, denoted again as
$Z^{(N)}(\boldsymbol{s})$, to the whole $\mathbb{C}^{\mathbf{d}}$, with poles
belonging to $\cup_{I}\mathcal{P}(I)$, see (\ref{Eq_7A}).
\end{theorem}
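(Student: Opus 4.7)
The plan is to leverage the partition-of-unity decomposition (\ref{Eq_2A}), $Z^{(N)}(\boldsymbol{s}) = \sum_I Z_I^{(N)}(\boldsymbol{s})$, together with Lemma~\ref{Lemma1}. Each summand is holomorphic on $\mathcal{H}(I)$ and meromorphic on $\mathbb{C}^{\mathbf{d}}$ with polar locus contained in $\mathcal{P}(I)$, so holomorphy of $Z^{(N)}(\boldsymbol{s})$ on $\bigcap_I \mathcal{H}(I)$ and meromorphic continuation with poles in $\bigcup_I \mathcal{P}(I)$ are immediate from the finite sum. The theorem therefore reduces to proving the inclusion
\[
\mathcal{C} := \left\{ \boldsymbol{s} \in \mathbb{C}^{\mathbf{d}} : -\tfrac{2}{N-2} < \operatorname{Re}(s_{ij}) < -\tfrac{2}{N}\ \text{for all } ij \right\} \subseteq \bigcap_I \mathcal{H}(I).
\]

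To establish this inclusion I would exploit the sign structure singled out in Lemma~\ref{Lemma1}: for each $k \in T(I)$, either all nonzero coefficients $N_{ij,k}(I)$ equal $+1$ with $\gamma_k(I) > 0$, or all equal $-1$ with $\gamma_k(I) < 0$. Every defining inequality of $\mathcal{H}(I)$ is therefore monotone in each $\operatorname{Re}(s_{ij})$, so it holds throughout the open cube $\mathcal{C}$ as soon as it holds (weakly) at the extreme corner where its left-hand side is minimized: the corner $\operatorname{Re}(s_{ij}) \equiv -\tfrac{2}{N-2}$ for inequalities of the first type, and $\operatorname{Re}(s_{ij}) \equiv -\tfrac{2}{N}$ for those of the second. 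At each such corner the inequality depends only on a single diagonal value $s_{ij} \equiv s$, so everything reduces to proving that the one-variable specialization $Z^{(N)}(s) := Z^{(N)}(\boldsymbol{s})\vert_{s_{ij} = s}$ is holomorphic on the strip $-\tfrac{2}{N-2} < \operatorname{Re}(s) < -\tfrac{2}{N}$.

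This diagonal statement is the heart of the argument, and is the content of the subsequent Propositions~\ref{Prop1} and~\ref{Prop2}. To prove them I would construct explicit embedded resolutions of $D_N$ (for $I = \{2, \ldots, N-2\}$) and of the auxiliary divisor $D_I$ (otherwise, which carries the extra factors $1 - x_i x_j$ coming from the inversions used to arrive at (\ref{Eq 4})), via iterated blow-ups along the diagonals $\{x_i = x_j\}$, the loci $\{x_i = 0\}$, $\{x_i = 1\}$, and the hyperbolae $\{x_i x_j = 1\}$. Lemma~\ref{Lemma0} then converts each chart of the resolution into a monomial integral whose convergence is controlled by the numerical data $(N_{ij,k}(I), \gamma_k(I))$, and the diagonal inequality $s \sum_{ij} N_{ij,k}(I) + \gamma_k(I) > 0$ is to be verified for $s$ in the strip.

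The main obstacle is precisely this combinatorial bookkeeping. As $N$ grows, the singular locus of $D_I$ becomes increasingly nested, requiring many blow-ups whose combined numerical data must uniformly respect the narrow window $(-\tfrac{2}{N-2}, -\tfrac{2}{N})$; the tightness of these bounds, and their scaling with $N$, reflects the combinatorics of the resolution itself. Once this uniform control is achieved on the diagonal, the sign-monotonicity argument promotes it to holomorphy of each $Z_I^{(N)}(\boldsymbol{s})$ on the full cube $\mathcal{C}$, summation over $I$ via (\ref{Eq_2A}) yields holomorphy of $Z^{(N)}(\boldsymbol{s})$ on $\mathcal{C}$, and Lemma~\ref{Lemma1} completes the meromorphic continuation to $\mathbb{C}^{\mathbf{d}}$ with poles in $\bigcup_I \mathcal{P}(I)$.
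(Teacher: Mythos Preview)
Your proposal is correct and follows the paper's approach: decompose via (\ref{Eq_2A}), apply Lemma~\ref{Lemma1} to each piece, and obtain the cube inclusion $\mathcal{C}\subseteq\bigcap_I\mathcal{H}(I)$ by the sign-monotonicity reduction to the diagonal numerical-data estimates of Propositions~\ref{Prop1} and~\ref{Prop2}. One small sharpening: the monotonicity argument reduces the cube inclusion to holomorphy of \emph{each} $Z_I^{(N)}(s)$ on the strip (equivalently, that every individual diagonal inequality holds there), not merely of the sum $Z^{(N)}(s)$, which could in principle be holomorphic through cancellation; since Propositions~\ref{Prop1} and~\ref{Prop2} establish exactly the former, your citation is right and the argument goes through.
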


The meromorphic continuation of $Z^{(N)}(\boldsymbol{s})$ does not depend on
the choice of the function $\chi$. Suppose that we pick another smooth
function $\chi_{0}$, with compact support, such that $\left[  -1,1\right]
\subset$ supp $\chi_{0}$ and $\chi_{0}\mid_{\left[  -1,1\right]  }\equiv1$.
Then $Z^{(N)}(\boldsymbol{s})$ has a meromorphic continuation, denoted now as
$Z_{0}^{(N)}(\boldsymbol{s})$, to the whole $\mathbb{C}^{\mathbf{d}}$ minus a
countable number of hyperplanes. Both $Z^{(N)}(\boldsymbol{s})$ and
$Z_{0}^{(N)}(\boldsymbol{s})$ are holomorphic in $\mathbb{C}^{\mathbf{d}}$
minus a countable number of hyperplanes and coincide in an open set of
$\mathbb{C}^{\mathbf{d}}$, where both functions are holomorphic. Consequently,
by the analytic continuation principle, $Z^{(N)}(\boldsymbol{s})=Z_{0}%
^{(N)}(\boldsymbol{s})$ in $\mathbb{C}^{\mathbf{d}}$ minus a countable number
of hyperplanes.

\section{\label{Section_Road_map}Road Map of the Proof}

Since the $D_{I}$ are so-called hyperplane arrangements (their irreducible
components are all hyperplanes), there is a well-known and straightforward way
to construct an embedded resolution for them. In fact, we only have to modify
the locus where the defining equation of $D_{I}$ is \emph{not} locally
monomial and that moreover is contained in the support of $\varphi_{I}$ or
$\widetilde{\varphi}_{I}$. The standard algorithm is to blow up consecutively
in relevant centres of increasing dimension contained in that locus, until the
total inverse image of $D_{I}$ becomes a normal crossings divisor, i.e., its
defining equation becomes locally monomial. Then we conclude by repeated
applications of Lemma \ref{Lemma0}.

For readers who are not familiar with these notions, we will treat explicitly
the first blow-ups of such a resolution, presented as explicit change of
variables operations, simplifying the original integrals $Z_{I}^{(N)}%
(\boldsymbol{s})$.

First, in order to get a feeling for the method, we present the easy case
$N=4$ and especially the case $N=5$, where we explain in detail the set
$\mathcal{H}(I)$ of Lemma \ref{Lemma1}.

\subsection{Example\label{Example_N_4}}

Fix $N=4$. Then $Z^{(4)}(s)=Z_{\{2\}}^{(4)}(s)+Z_{\emptyset}^{(4)}(s)$. We
have that
\[
Z_{\{2\}}^{(4)}(\boldsymbol{s})=\int_{\mathbb{R}}\chi(x_{2})|x_{2}|^{s_{12}%
}|1-x_{2}|^{s_{32}}dx_{2},
\]
which converges and is analytic when $\operatorname{Re}(s_{12})+1>0$ and
$\operatorname{Re}(s_{32})+1>0$, and
\[
Z_{\emptyset}^{(4)}(\boldsymbol{s})=\int_{\mathbb{R}}\widetilde{\chi}%
(x_{2})|x_{2}|^{-s_{12}-s_{32}-2}|1-x_{2}|^{s_{32}}dx_{2},
\]
which converges and is analytic when $-\operatorname{Re}(s_{12}%
)-\operatorname{Re}(s_{32})-1>0$ and $\operatorname{Re}(s_{32})+1>0$. We
conclude by Lemma \ref{Lemma0} that $Z^{(4)}(s)$ converges and is analytic in
the region
\[
\operatorname{Re}(s_{12})>-1,\quad\operatorname{Re}(s_{32})>-1,\quad
\operatorname{Re}(s_{12})+\operatorname{Re}(s_{32})<-1.
\]

\subsection{Example\label{Example_N_5}}

Fix $N=5$. Then $Z^{(5)}(\boldsymbol{s})=Z_{\{2,3\}}^{(5)}(\boldsymbol{s}%
)+Z_{\{3\}}^{(5)}(\boldsymbol{s})+Z_{\{2\}}^{(5)}(\boldsymbol{s}%
)+Z_{\emptyset}^{(5)}(\boldsymbol{s})$. We start with
\begin{equation}
Z_{\{2,3\}}^{(5)}(\boldsymbol{s})=\int_{\mathbb{R}^{2}}\chi(x_{2})\chi
(x_{3})|x_{2}|^{s_{12}}|x_{3}|^{s_{13}}|1-x_{2}|^{s_{42}}|1-x_{3}|^{s_{43}%
}|x_{2}-x_{3}|^{s_{23}}dx_{2}dx_{3}. \label{Z23}%
\end{equation}
The arrangement $D_{5}$, given by $x_{2}x_{3}(1-x_{2})(1-x_{3})(x_{2}%
-x_{3})=0$, is not locally monomial only at the points $(0,0)$ and $(1,1)$. We
pick a partition of the unity, $\sum_{i=0}^{2}$ $\Omega_{i}(x_{2},x_{3})=1$,
where $\Omega_{0}$ and $\Omega_{1}$ are smooth functions with support in a
small neighborhood of $(0,0)$ and $(1,1)$, respectively. This reduces the
computation to neighborhoods of each of these two points. That is, we can
write $Z_{\{2,3\}}^{(5)}(s)=\sum_{j=0}^{2}Z_{\Omega_{j}}^{(5)}(s)$, where the
$Z_{\Omega_{j}}^{(5)}(s)$ are integrals as in (\ref{Z23}), replacing
$\chi(x_{2})\chi(x_{3})$ by $\Omega_{j}(x_{2},x_{3})$, and the relevant
integrals to compute are $Z_{\Omega_{0}}^{(5)}(s)$ and $Z_{\Omega_{1}}%
^{(5)}(s)$.

Around $(0,0)$, the factor $|1-x_{2}|^{s_{42}}|1-x_{3}|^{s_{43}}$ can be
neglected from the point of view of convergence and holomorphy, hence here we
only need an embedded resolution of $x_{2}x_{3}(x_{2}-x_{3})=0$, which is
obtained by a blow-up at the origin. This just means the two changes of
variables
\begin{align*}
\sigma_{0}:\mathbb{R}^{2}\rightarrow\mathbb{R}^{2}:  &  \ u_{2}\mapsto
x_{2}=u_{2} & \quad\text{and}\quad\sigma_{0}^{\prime}:\mathbb{R}%
^{2}\rightarrow\mathbb{R}^{2}:  &  \ u_{2}\mapsto x_{2}=u_{2}u_{3}\\
&  \ u_{3}\mapsto x_{3}=u_{2}u_{3} &  &  \ u_{3}\mapsto x_{3}=u_{3}.
\end{align*}
By symmetry it is enough to consider the first one, yielding as contribution
to $Z_{\Omega_{0}}^{(5)}(s)$ the integral
\[
{\displaystyle\int\limits_{\mathbb{R}^{2}}}\left(  \Omega_{0}\circ\sigma
_{0}\right)  \left(  u_{2},u_{3}\right)  \left\vert u_{2}\right\vert
^{s_{12}+s_{13}+s_{23}+1}|u_{3}|^{s_{13}}|1-u_{3}|^{s_{23}}g(u,\boldsymbol{s}%
)du_{2}du_{3},
\]
where $g(u,s)$ is invertible on the support of $\Omega_{0}\circ\sigma_{0}$
(and can thus be neglected from the point of view of convergence and
holomorphy). Hence, we are in the locally monomial setting, and Lemma
\ref{Lemma0} yields the convergence conditions
\begin{equation}
\operatorname{Re}(s_{12})+\operatorname{Re}(s_{13})+\operatorname{Re}%
(s_{23})+2>0,\quad\operatorname{Re}(s_{23})+1>0,\quad\operatorname{Re}%
(s_{13})+1>0. \label{ex1}%
\end{equation}
The other chart of the blow-up, i.e., the change of variables $\sigma
_{0}^{\prime}$, yields the same first and second condition and also
\begin{equation}
\operatorname{Re}(s_{12})+1>0. \label{ex2}%
\end{equation}
Completely similarly, for the convergence of $Z_{\Omega_{1}}^{(5)}(s)$, we
need also the new conditions
\begin{equation}
\operatorname{Re}(s_{42})+\operatorname{Re}(s_{43})+\operatorname{Re}%
(s_{23})+2>0,\quad\operatorname{Re}(s_{42})+1>0,\quad\operatorname{Re}%
(s_{43})+1>0. \label{ex3}%
\end{equation}
Note that the conditions coming from the locally monomial integral
$Z_{\Omega_{2}}^{(5)}(s)$ are already included now. Next,
\begin{equation}
\label{first infinity}Z_{\{3\}}^{(5)}(\boldsymbol{s})=\int_{\mathbb{R}^{2}%
}\widetilde{\chi}(x_{2})\chi(x_{3})|x_{2}|^{-s_{12}-s_{42}-s_{23}-2}%
|x_{3}|^{s_{13}}|1-x_{2}|^{s_{42}}|1-x_{3}|^{s_{43}}|1-x_{2}x_{3}|^{s_{23}%
}dx_{2}dx_{3}.
\end{equation}
Since $x_{2}x_{3}(1-x_{2})(1-x_{3})(1-x_{2}x_{3})$ is locally monomial in the
support of $\widetilde{\chi}(x_{2})\chi(x_{3})$, the only new condition that
arises is
\begin{equation}
-\operatorname{Re}(s_{12})-\operatorname{Re}(s_{42})-\operatorname{Re}%
(s_{23})-1>0. \label{ex4}%
\end{equation}
Completely analogously, $Z_{\{2\}}^{(5)}(s)$ induces the extra condition
\begin{equation}
-\operatorname{Re}(s_{13})-\operatorname{Re}(s_{43})-\operatorname{Re}%
(s_{23})-1>0. \label{ex5}%
\end{equation}
Finally, we must consider
\[
Z_{\emptyset}^{(5)}(\boldsymbol{s})=\int_{\mathbb{R}^{2}}\widetilde{\chi
}(x_{2})\widetilde{\chi}(x_{3})|x_{2}|^{-s_{12}-s_{42}-s_{23}-2}%
|x_{3}|^{-s_{13}-s_{43}-s_{23}-2}|1-x_{2}|^{s_{42}}|1-x_{3}|^{s_{43}}%
|x_{2}-x_{3}|^{s_{23}}dx_{2}dx_{3}.
\]
Geometrically, we have here the same arrangement as for $Z_{\left\{
2,3\right\}  }^{(5)}(s)$; the differences\textrm{ }are the powers of $|x_{2}|$
and $|x_{3}|$ and the function $\widetilde{\chi}(x_{2})\widetilde{\chi}%
(x_{3})$, that does not contain $(1,1)$ in its support. Hence, the only new
condition will arise from the blow-up at the origin, namely
\[
(-\operatorname{Re}(s_{12})-\operatorname{Re}(s_{42})-\operatorname{Re}%
(s_{23})-2)+(-\operatorname{Re}(s_{13})-\operatorname{Re}(s_{43}%
)-\operatorname{Re}(s_{23})-2)+\operatorname{Re}(s_{23})+2>0,
\]
which simplifies to
\begin{equation}
-\operatorname{Re}(s_{12})-\operatorname{Re}(s_{42})-\operatorname{Re}%
(s_{23})-\operatorname{Re}(s_{13})-\operatorname{Re}(s_{43})-2>0. \label{ex6}%
\end{equation}
The conclusion is that $Z^{(5)}(s)$ converges and is analytic in the region
defined by conditions (\ref{ex1})-(\ref{ex6}), that is, for the five indices
$ij$,
\begin{equation}
\begin{aligned} & \text{all } \operatorname{Re}(s_{ij}) > -1 ,\\ & \operatorname{Re}(s_{12})+ \operatorname{Re}(s_{13})+\operatorname{Re}(s_{23}) > -2, \quad \operatorname{Re}(s_{42})+\operatorname{Re}(s_{43})+\operatorname{Re}(s_{23}) > -2 , \\ & \operatorname{Re}(s_{12})+\operatorname{Re}(s_{42})+\operatorname{Re}(s_{23})< -1 , \quad \operatorname{Re}(s_{13})+\operatorname{Re}(s_{43})+\operatorname{Re}(s_{23}) < -1 ,\\ & \sum_{ij} \operatorname{Re}(s_{ij}) < -2 . \end{aligned} \label{ex7}%
\end{equation}
It is immediate to verify that the simpler domain
\begin{equation}
-\frac{2}{3}<\operatorname{Re}(s_{ij})<-\frac{2}{5}\qquad\text{ for all }ij
\label{ex8}%
\end{equation}
is contained in this region. Then, in particular, $Z^{(5)}(s)$ is analytic in
the interval $-\frac{2}{3}<\operatorname{Re}(s)<-\frac{2}{5}$.

\bigskip We now proceed in general, assuming implicitly that $N\geq6$. The
regions generalizing (\ref{ex7}) can be determined explicitly for any given
$N$, but require long lists of inequalities. They will be stated in
Proposition \ref{exact convergence domain}. To keep the exposition more
accessible, we will concentrate on the domain generalizing (\ref{ex8}) and its
specialization to $Z^{(N)}(s)$.

\subsection{Case $I=\left\{  2,\ldots,N-2\right\}  $}

Recall that then $Z_{I}^{(N)}(\boldsymbol{s})$ and $Z_{I}^{(N)}(s)$ are
classical local zeta functions, the last one associated to the polynomial
\[
f_{N}\left(  x\right)  =\prod_{i=2}^{N-2}x_{i}\prod_{i=2}^{N-2}(1-x_{i}%
)\prod_{2\leq i<j\leq N-2}(x_{i}-x_{j}).
\]
We note first that, for convergence of $Z_{I}^{(N)}(\boldsymbol{s})$, at least
the conditions $\operatorname{Re}(s_{ij})+1>0$, for all $ij$, are needed,
since they will certainly appear as powers of (absolute values of) variables
in some monomial integral. Geometrically, these conditions are induced by the
strict transforms $E_{i}$ in an embedded resolution of the components of
$D_{N}$. ( For convergence of $Z_{I}^{(N)}(s)$, they all induce the same
condition $\operatorname{Re}(s)+1>0$.)

Next, concerning $Z_{I}^{(N)}(s)$, we have by Theorem
\ref{thm: num data and poles}(iii) that this function is holomorphic in the
half-space
\[
\operatorname{Re}(s)>-\min_{i\in T}\frac{v_{i}}{N_{f_{N},i}},
\]
where $\left\{  \left(  N_{f_{N},i},v_{i}\right)  ;i\in T\right\}  $ are the
numerical data of an embedded resolution $\sigma$ of $D_{N}$. We will explain
how to construct such a resolution and obtain that this minimum value is
$\frac{2}{N-2}$. Working out more details yields a concrete region where
$Z_{I}^{(N)}(\boldsymbol{s})$ is holomorphic.

Note that the locus of $D_{N}$ where it is not a normal crossings divisor,
i.e., not locally monomial as in Lemma \ref{Lemma0}, consists of the points
with at least two coordinates equal to $0$, at least two coordinates equal to
$1$, or at least three equal coordinates. We will blow up consecutively in the
relevant centres of increasing dimension contained in that locus, until the
total inverse image of $D_{N}$ becomes a normal crossings divisor.


\begin{proposition}
\label{Prop1} We take $I=\left\{  2,\ldots,N-2\right\}  $. Then $Z_{I}%
^{(N)}(\boldsymbol{s})$ is convergent and holomorphic in a region containing
$\operatorname{Re}(s_{ij}) >-\frac{2}{N-2} $ for all $ij$. In particular, the
integral $Z_{I}^{(N)}(s)$ is holomorphic in the half-plane $\operatorname{Re}
(s)>-\frac{2}{N-2} $.
\end{proposition}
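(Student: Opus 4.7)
The plan is to apply the holomorphy criterion of Theorem~\ref{thm: num data and poles}(i) to $Z_I^{(N)}(\boldsymbol{s})$. Since $\varphi_I=\prod_{i=2}^{N-2}\chi(x_i)$ has compact support when $I=\{2,\ldots,N-2\}$, the function $Z_I^{(N)}(\boldsymbol{s})$ is a classical multivariate Igusa local zeta function attached to the family $\boldsymbol{f}=(x_j,\,1-x_j,\,x_i-x_j)$ and the weight $\varphi_I$. Given any embedded resolution $\sigma\colon X\to\mathbb{R}^{N-3}$ of $D_N$ with numerical data $(N_{ij,k},v_k)_{k\in T}$, Theorem~\ref{thm: num data and poles}(i) gives holomorphy on $\{\sum_{ij}N_{ij,k}\operatorname{Re}(s_{ij})+v_k>0:k\in T\}$. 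Every $N_{ij,k}$ is a nonnegative integer, so the hypothesis $\operatorname{Re}(s_{ij})>-\tfrac{2}{N-2}$ for all $ij$ reduces the convergence conditions to the single numerical bound
\[
\frac{v_k}{\sum_{ij}N_{ij,k}}\;\geq\;\frac{2}{N-2}\qquad\text{for every }k\in T.
\]
The task is therefore to exhibit a concrete resolution whose numerical data all obey this bound.

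The resolution is built by the standard algorithm for hyperplane arrangements: iteratively blow up the irreducible components of the non-normal-crossings locus in order of increasing dimension. The deepest strata of $D_N$ are the origin and $(1,\ldots,1)$, at each of which $\tfrac{(N-3)(N-2)}{2}$ hyperplanes meet: the $N-3$ hyperplanes $\{x_i=0\}$ and the $\binom{N-3}{2}$ hyperplanes $\{x_i=x_j\}$ at the origin, symmetrically at $(1,\ldots,1)$. In a representative chart of the blow-up $\sigma_0$ at the origin, namely $x_2=y_2$ and $x_i=y_2 y_i$ for $i\geq 3$, one has $\sigma_0^{\ast}(dx_2\wedge\cdots\wedge dx_{N-2})=y_2^{N-4}\,dy_2\wedge\cdots\wedge dy_{N-2}$, and each of $\sigma_0^{\ast}(x_i)$ and $\sigma_0^{\ast}(x_i-x_j)$ is divisible by $y_2$ exactly once. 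Consequently the new exceptional divisor $E_0$ has $v_{E_0}=N-3$ and $\sum_{ij}N_{ij,E_0}=\tfrac{(N-3)(N-2)}{2}$, giving the ratio exactly $\tfrac{2}{N-2}$; by the symmetry $x\mapsto 1-x$ the blow-up at $(1,\ldots,1)$ yields the same ratio, while the strict transforms of the original hyperplanes give ratio $1$.

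For a subsequent blow-up of a smooth centre $C$ of codimension $c$ contained in previously produced smooth divisors $E_{k_1},\ldots,E_{k_r}$, the standard blow-up formulae give $N_{ij,E}=\sum_j N_{ij,k_j}$ and $v_E=c-r+\sum_j v_{k_j}$, from which
\[
\frac{v_E}{\sum_{ij}N_{ij,E}}\;\geq\;\min_{j}\frac{v_{k_j}}{\sum_{ij}N_{ij,k_j}}\;+\;\frac{c-r}{\sum_j\sum_{ij}N_{ij,k_j}}.
\]
When the $E_{k_j}$ meet $C$ normally ($c\geq r$) the bound $\tfrac{2}{N-2}$ is automatically preserved. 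The non-normal-crossings case is exactly what was encountered at the first blow-ups at the origin and at $(1,\ldots,1)$; there the explicit count $c=N-3$, $r=\tfrac{(N-3)(N-2)}{2}$, $v_{k_j}=S_j=1$ still yields the ratio $\tfrac{2}{N-2}$.

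The main obstacle is to confirm that every later non-normal-crossings centre encountered in the algorithm satisfies the same bound. The structural input is that at each such centre $C$, the concurring divisors form a restricted arrangement which, after passing to a chart of the preceding blow-up, is of Koba--Nielsen type on fewer than $N$ points: blowing up the origin produces in every chart an arrangement on $N-4$ of the new variables whose components are exactly the strict transforms of the $\{x_i=0\}$ and the $\{x_i=x_j\}$ through the origin, together with the new exceptional divisor in transverse position. An induction on $N$ with base cases $N=4,5$ treated in Examples~\ref{Example_N_4} and~\ref{Example_N_5} then propagates the bound $\tfrac{2}{N'-2}\geq\tfrac{2}{N-2}$ through all secondary blow-ups via the displayed inequality above. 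Finally, specialising $s_{ij}=s$ in the resulting multivariate holomorphy region produces the half-plane $\operatorname{Re}(s)>-\tfrac{2}{N-2}$, proving the last assertion.
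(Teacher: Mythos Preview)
Your overall strategy---reduce to bounding $v_k/\sum_{ij}N_{ij,k}$ below by $\tfrac{2}{N-2}$ for every exceptional divisor in a resolution of $D_N$---is the right one, but there is a genuine gap in the coverage of the non-normal-crossings locus. You treat only the two ``deepest'' points, the origin and $(1,\ldots,1)$, and then rely on induction on $N$ through the chart after the first blow-up. This does not suffice: $D_N$ fails to be normal crossings at \emph{every} point $p\in\{0,1\}^{N-3}$ with at least two coordinates equal to $0$ or at least two equal to $1$. For $N\ge 6$ there are many such mixed points besides the origin and $(1,\ldots,1)$ (e.g.\ $(0,0,1)$ when $N=6$), and none of them lies over the origin or over $(1,\ldots,1)$, so your induction never reaches them. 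The paper handles them via the observation that, after translating $p$ to the origin, the local arrangement at $p$ is a \emph{subarrangement} of the one at the true origin; hence every quotient $v/N$ arising near $p$ dominates the corresponding quotient near the origin, which is already $\ge\tfrac{2}{N-2}$.

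A second, more technical point: your displayed inequality for subsequent blow-ups is only informative when $c\ge r$, i.e., when the divisors through the centre already cross normally---precisely the situation in which no blow-up is needed. In the relevant case $r>c$ the correction term $(c-r)/S$ is negative and the inequality cannot propagate the bound $\tfrac{2}{N-2}$. The paper avoids this recursive bookkeeping by invoking the standard fact for hyperplane arrangements resolved by blowing up flats in increasing dimension: the exceptional divisor over a flat $L$ carries numerical data $\bigl(\#\{\text{hyperplanes containing }L\},\,\operatorname{codim} L\bigr)$, reducing the proof to a direct combinatorial check on the flats of $D_N$.
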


\begin{proof}
We recall that
\[
Z_{I}^{(N)}(\boldsymbol{s}):={\displaystyle\int\limits_{\mathbb{R}^{N-3}}}
\varphi_{I}\left(  x\right)  F_{I}(x,\boldsymbol{s}) {\displaystyle\prod
\limits_{i=2}^{N-2}} dx_{i},
\]
where
\[
F_{I}(x,\boldsymbol{s}) = {\displaystyle\prod\limits_{j=2}^{N-2}} \left\vert
x_{j}\right\vert ^{s_{1j}}{\displaystyle\prod\limits_{j=2}^{N-2}} \left\vert
1-x_{j}\right\vert ^{s_{\left(  N-1\right)  j}}{\displaystyle\prod
\limits_{2\leq i<j\leq N-2}} \left\vert x_{i}-x_{j}\right\vert ^{s_{ij}}.
\]
First, we consider an adequate partition of the unity subordinate to the
compact set $\operatorname{supp}\varphi_{I}$. Let $P$ be the set of $2^{N-3}$
points $p$ in $\mathbb{R}^{N-3}$ with all coordinates equal to $0$ or $1$. We
take smooth functions $\Omega_{p}$, $p\in P$, such that each $\Omega_{p}$ is
supported in a neighborhood of $p$ that is disjoint from some neighborhood of
any other point of $P$, and such that $\varphi_{I}(x)=\sum_{p\in P}\Omega
_{p}\left(  x\right)  $ for $x\in\operatorname{supp}\varphi_{I}$. Then
\[
Z_{I}^{(N)}(\boldsymbol{s})=\sum_{p\in P}Z_{\Omega_{p}}^{(N)}(\boldsymbol{s}%
),\quad\text{where}\quad Z_{\Omega_{p}}^{(N)}(\boldsymbol{s}):={\int
\limits_{\mathbb{R}^{N-3}}}\Omega_{p}\left(  x\right)  F_{I}(x,\boldsymbol{s})
{\displaystyle\prod\limits_{i=2}^{N-2}}dx_{i}.
\]

\textbf{(1)} We start by improving the situation around the origin ($p=0$). We
remark that the factors $|1-x_{j}|^{s_{(N-1)j}}$ in the integrand of
$Z_{\Omega_{0}}^{(N)}(s)$ are invertible on the support of $\Omega_{0}$, and
can be neglected from the point of view of convergence and holomorphy of
$Z_{\Omega_{0}}^{(N)}(\boldsymbol{s})$. Hence, we want in this stage an
embedded resolution of the divisor $D_{0}$ given by the zero locus of
\[
\prod_{i=2}^{N-2}x_{i}\prod_{2\leq i<j\leq N-2}(x_{i}-x_{j}).
\]

\textbf{(1.1)} The blow-up at the origin of $\mathbb{R}^{N-3}$ involves $N-3$
changes of variables of type $x_{i_{0}}=u_{i_{0}}$, $x_{i}=u_{i_{0}}u_{i}$ for
$i\in\left\{  2,\ldots,N-2\right\}  \setminus\left\{  i_{0}\right\}  $ and
some fixed $i_{0}\in\left\{  2,\ldots,N-2\right\}  $. This change of variables
defines
\begin{equation}
\sigma_{0}:\mathbb{R}^{N-3}\rightarrow\mathbb{R}^{N-3}:u\mapsto x.
\label{chart_1}%
\end{equation}
Without loss of generality, we may assume that $i_{0}=2$. Then $F_{I}%
(x,\boldsymbol{s}) \circ\sigma_{0} $ equals
\[
|u_{2}|^{\sum_{j=2}^{N-2} s_{1j} + \sum_{2\leq i<j\leq N-2} s_{ij} }
{\prod\limits_{j=3}^{N-2}}\left\vert u_{j}\right\vert ^{s_{1j}}{\prod
\limits_{j=3}^{N-2}}\left\vert 1-u_{j}\right\vert ^{s_{(N-1)j}} {\prod
\limits_{3\leq i<j\leq N-2}}\left\vert u_{i}-u_{j}\right\vert ^{s_{ij}%
}g(u,\boldsymbol{s}),
\]
where%
\[
g(u,\boldsymbol{s}):=\left\vert 1-u_{2}\right\vert ^{s_{(N-1)2}}%
{\displaystyle\prod\limits_{j=3}^{N-2}} \left\vert 1-u_{2}u_{j}\right\vert
^{s_{(N-1)j}},
\]
and
\[
\sigma_{0}^{\ast}{\displaystyle\prod\limits_{i=2}^{N-2}} dx_{i}=\left\vert
u_{2}\right\vert ^{N-4}{\displaystyle\prod\limits_{i=2}^{N-2}} du_{i}.
\]
Hence, the contribution to $Z_{\Omega_{0}}^{(N)}(\boldsymbol{s})$ in the chart
(\ref{chart_1}), with $i_{0}=2$, takes the form
\[
\begin{aligned}
{\displaystyle\int\limits_{\mathbb{R}^{N-3}}}
\left(  \Omega_{0}\circ\sigma_{0}\right)(u)  &
|u_2|^{\sum_{j=2}^{N-2} s_{1j} + \sum_{2\leq i<j\leq N-2}  s_{ij} +N-4 }
{\prod\limits_{j=3}^{N-2}}\left\vert
u_{j}\right\vert ^{s_{1j}}{\prod\limits_{j=3}^{N-2}}\left\vert 1-u_{j}\right\vert
^{s_{(N-1)j}}  \times \\
&{\prod\limits_{3\leq i<j\leq N-2}}\left\vert u_{i}-u_{j}\right\vert
^{s_{ij}}g(u,\boldsymbol{s})  {\displaystyle\prod\limits_{i=2}^{N-2}} du_{i},
\label{u_2}%
\end{aligned}
\]
where the factor $g(u,\boldsymbol{s})$ is invertible on the support of
$\Omega_{0}\circ\sigma_{0}$, and can be neglected from the point of view of
convergence and holomorphy. The first new condition for convergence is thus
\begin{equation}
\label{firstcondition}\sum_{j=2}^{N-2} \operatorname{Re}( s_{1j}) +
\sum_{2\leq i<j\leq N-2} \operatorname{Re}( s_{ij}) +N-3 > 0.
\end{equation}
Simplifying all $s_{ij}$ to $s$, the contribution to $Z_{\Omega_{0}}^{(N)}(s)$
in this chart is
\[
{\displaystyle\int\limits_{\mathbb{R}^{N-3}}} \left(  \Omega_{0}\circ
\sigma_{0}\right)  \left(  u\right)  \left\vert u_{2}\right\vert
^{\frac{\left(  N-2\right)  \left(  N-3\right)  }{2}s+N-4}{\displaystyle\prod
\limits_{i=3}^{N-2}} \left\vert u_{i}\right\vert ^{s}{\prod\limits_{i=3}%
^{N-2}}\left\vert 1-u_{i}\right\vert ^{s}{\displaystyle\prod\limits_{3\leq
i<j\leq N-2}} \left\vert u_{i}-u_{j}\right\vert ^{s}g(u,s){\displaystyle\prod
\limits_{i=2}^{N-2}} du_{i}.
\]
Up to negligible factors as $g(u,\boldsymbol{s})$, further blowing-ups/changes
of variables, ultimately leading to monomial integrals as in Lemma
\ref{Lemma0}, will not affect the variable $u_{2}$ anymore.

The smooth hypersurface, given by $u_{2}=0$, corresponds to a submanifold
$E_{0}$ (as in Theorem \ref{thresolsing}) with numerical data
\begin{equation}
(N_{f_{N},0},v_{0})=\left(  \frac{\left(  N-2\right)  \left(  N-3\right)  }%
{2},N-3\right)  ,\qquad\text{satisfying}\qquad\frac{v_{0}}{N_{f_{N},0}}%
=\frac{2}{N-2}. \label{minimal quotient}%
\end{equation}
Important to note is that $N_{f_{N},0}$ is equal to the multiplicity of
$D_{N}$ at the origin, which, in the case of a hyperplane arrangement, is just
the number of hyperplanes containing the origin. Also, $v_{0}$ is equal to the
codimension of the origin in $\mathbb{R}^{N-3}$. This is a general fact:
\textit{for any submanifold }$E_{i}$\textit{ as in Theorem \ref{thresolsing},
created by a blow-up with centre }$Y$\textit{, we have that }$N_{f_{N},i}%
$\textit{ is equal to the multiplicity of }$D_{N}$\textit{ at (a generic point
of) }$Y$\textit{, being the number of hyperplanes containing }$Y$\textit{, and
that }$v_{i}$\textit{ is equal to the codimension of }$Y$\textit{ in the
ambient space.}

\smallskip\textbf{(1.2)} The next blow-ups, in centres intersecting $E_{0}$,
are at those centres of dimension 1 whose image by $\sigma_{0}$ contains the
origin. There are two such centres visible in the present chart. The first one
is $u_{3}=\ldots=u_{N-2}=0$ (this is the so-called strict transform of the
line $x_{3}=\ldots=x_{N-2}=0$). This blow-up consists of $N-4$ changes of
variables of type $u_{i_{1}}=w_{i_{1}}$, for some $i_{1}\in\left\{
3,\ldots,N-2\right\}  $ and $u_{i}=w_{i_{1}}w_{i}$ for $i\in\left\{
3,\ldots,N-2\right\}  $, $i\neq i_{1}$, and $u_{2}=w_{2}$. We pick $i_{1}=3$
(the other cases are treated in a similar way), and take thus $u_{2}=w_{2}$,
$u_{3}=w_{3}$, and $u_{i}=w_{3}w_{i}$ for $i\in\left\{  4,\ldots,N-2\right\}
$, defining the change of variables
\[
\sigma_{1}:\mathbb{R}^{N-3}\rightarrow\mathbb{R}^{N-3}:w\mapsto u.
\]
Then in this chart the contribution to $Z_{\Omega_{0}}^{(N)}(\boldsymbol{s})$
takes the form%
\[
\begin{aligned}
{\displaystyle\int\limits_{\mathbb{R}^{N-3}}}
(\Omega_{0}\circ\sigma_{0}\circ\sigma_{1})(w)  &  \left\vert w_{2}\right\vert
^{\sum_{j=2}^{N-2} s_{1j} + \sum_{2\leq i<j\leq N-2}  s_{ij}   +N-4}\left\vert
w_{3}\right\vert ^{  \sum_{j=3}^{N-2} s_{1j} + \sum_{3\leq i<j\leq N-2}  s_{ij}
+N-5}\times\\%
& {\displaystyle\prod\limits_{j=4}^{N-2}}
\left\vert w_{j}\right\vert ^{s_{1j}}%
{\displaystyle\prod\limits_{4\leq j\leq N-2}}
\left\vert 1-w_{j}\right\vert ^{s_{(N-1)j}}%
{\displaystyle\prod\limits_{4\leq i<j\leq N-2}}
\left\vert w_{i}-w_{j}\right\vert ^{s_{ij}}h(w,\boldsymbol{s})%
{\displaystyle\prod\limits_{i=2}^{N-2}}
dw_{i}, 
\end{aligned}
\]
where the factor $h(w,\boldsymbol{s})$ can be neglected from the point of view
of convergence and holomorphy. The next condition for convergence is thus
\begin{equation}
\sum_{j=3}^{N-2}\operatorname{Re}(s_{1j})+\sum_{3\leq i<j\leq N-2}%
\operatorname{Re}(s_{ij})+N-4>0. \label{secondcondition}%
\end{equation}
Simplifying, the contribution to $Z_{\Omega_{0}}^{(N)}(s)$ is
\[
\begin{aligned}{\displaystyle\int\limits_{\mathbb{R}^{N-3}}}
(\Omega_{0}\circ\sigma_{0}\circ\sigma_{1})(w)  &  \left\vert w_{2}\right\vert
^{\frac{\left(  N-2\right)  \left(  N-3\right)  }{2}s+N-4}\left\vert
w_{3}\right\vert ^{\frac{\left(  N-3\right)  \left(  N-4\right)  }{2}%
s+N-5}\times\\%
&
{\displaystyle\prod\limits_{i=4}^{N-2}}
\left\vert w_{i}\right\vert ^{s}%
{\displaystyle\prod\limits_{4\leq j\leq N-2}}
\left\vert 1-w_{i}\right\vert ^{s}%
{\displaystyle\prod\limits_{4\leq i<j\leq N-2}}
\left\vert w_{i}-w_{j}\right\vert ^{s}h(w,s)%
{\displaystyle\prod\limits_{i=2}^{N-2}}
dw_{i}. 
\end{aligned}
\]
The smooth hypersurface, given by $w_{3}=0$, corresponds to a submanifold
$E_{1}$ with numerical data
\[
(N_{f_{N},1},v_{1})=\left(  \frac{\left(  N-3\right)  \left(  N-4\right)  }%
{2},N-4\right)  ,\qquad\text{satisfying}\qquad\frac{v_{1}}{N_{f_{N},1}}%
=\frac{2}{N-3}.
\]

The second centre is $1=u_{3}=\ldots=u_{N-2}$ (the strict transform of the
line $x_{2}=x_{3}=\ldots=x_{N-2}$). After a change of variables $u^{\prime
}_{i}=u_{i}-1$ for $i=3,\ldots,N-2$, the calculation of this blow-up is the
same as for the first centre. It gives rise to a submanifold $E^{\prime}_{1}$
with the same numerical data $(N^{\prime}_{f_{N},1},v^{\prime}_{1}) =\left(
\frac{\left(  N-3\right)  \left(  N-4\right)  }{2}, N-4\right)  $, yielding
the same quotient $\frac{v^{\prime}_{1}}{N^{\prime}_{f_{N},1}} = \frac{2}%
{N-3}$. The associated new condition for convergence for $Z_{\Omega_{0}}%
^{(N)}(\boldsymbol{s})$ is
\begin{equation}
\label{thirdcondition}\sum_{2\leq i<j\leq N-2} \operatorname{Re}( s_{ij}) +N-4
> 0.
\end{equation}

\textbf{(1.3)} We continue this way, blowing up in centres of increasing
dimension, ending with blow-ups in centres of dimension $N-5$ of two possible
types, for instance corresponding to $x_{N-3}=x_{N-2}=0$ and $x_{N-4}%
=x_{N-3}=x_{N-2}$, respectively, yielding submanifolds with numerical data
$(3,2)$. The corresponding conditions for convergence for $Z_{\Omega_{0}%
}^{(N)}(\boldsymbol{s})$ are
\begin{equation}
\label{lastcondition1}\operatorname{Re}(s_{1(N-3)}) + \operatorname{Re}%
(s_{1(N-2)}) + \operatorname{Re}(s_{(N-3)(N-2)}) +2 > 0
\end{equation}
and
\begin{equation}
\label{lastcondition2}\operatorname{Re}(s_{(N-4)(N-3)}) + \operatorname{Re}%
(s_{(N-4)(N-2)}) + \operatorname{Re}(s_{(N-3)(N-2)}) +2 > 0 .
\end{equation}
Note that, up to now, the smallest quotient of numerical data that we obtained
is indeed $\frac{2}{N-2}$.

\smallskip\textbf{(2)} All other points $p=\left(  p_{2},\ldots,p_{N-2}%
\right)  \in P$, that are needed as centres of blow-ups, have at least one
coordinate equal to $1$ (and still at least two coordinates equal to $0$ or at
least two coordinates equal to $1$), say $p_{i}=1$ for $i\in J\neq\emptyset$
and $p_{i}=0$ for $i\notin J$. For simplicity, we switch to the coordinate
system $y$, given by $y_{i}=x_{i}-1$ for $i\in J$ and $y_{i}=x_{i}$ for
$i\notin J$, in order to view $p$ as the new origin. Then%
\[
\begin{aligned} Z_{\Omega_{p}}^{\left( N\right) }(\boldsymbol{s})={\displaystyle\int\limits_{\mathbb{R}^{N-3}}} \Omega_{p}(y) & {\displaystyle\prod\limits_{i\notin J}} \left\vert y_{i}\right\vert ^{s_{1j}}\text{ } {\displaystyle\prod\limits_{i\in J}} \left\vert y_{i}\right\vert ^{s_{(N-1)j}}\text{ } \times \\ &{\prod\limits_{\substack{2\leq i<j\leq N-2\\i,j\in J}}}\left\vert y_{i}-y_{j}\right\vert ^{s_{ij}}\text{{ }}{\prod\limits_{\substack{2\leq i<j\leq N-2\\i,j\notin J}}}\left\vert y_{i}-y_{j}\right\vert ^{s_{ij}}\text{ }g_{p}\left( y,\boldsymbol{s}\right) {\displaystyle\prod\limits_{i=2}^{N-2}} dy_{i}, \label{Integral_1}\end{aligned}
\]
where $g_{p}\left(  y,\boldsymbol{s}\right)  $ is an invertible function on
the support of $\Omega_{p}(y)$, smooth in $y$ and holomorphic in
$\boldsymbol{s}$. This simplifies to
\begin{equation}
Z_{\Omega_{p}}^{\left(  N\right)  }(s)={\displaystyle\int\limits_{\mathbb{R}%
^{N-3}}}\Omega_{p}(y){\displaystyle\prod\limits_{i\in I}}\left\vert
y_{i}\right\vert ^{s}\text{ }{\prod\limits_{\substack{2\leq i<j\leq
N-2\\i,j\in J}}}\left\vert y_{i}-y_{j}\right\vert ^{s}\text{{ }}%
{\prod\limits_{\substack{2\leq i<j\leq N-2\\i,j\notin J}}}\left\vert
y_{i}-y_{j}\right\vert ^{s}\text{ }g_{p}\left(  y,s\right)
{\displaystyle\prod\limits_{i=2}^{N-2}}dy_{i}. \label{Integral_1bis}%
\end{equation}
The divisor $D_{p}$ attached to (\ref{Integral_1}) or (\ref{Integral_1bis}) is
given by the zero locus of
\[%
{\displaystyle\prod\limits_{i\in I}}
y_{i}\text{ }{\prod\limits_{\substack{2\leq i<j\leq N-2\\i,j\in J}}}\left(
y_{i}-y_{j}\right)  \text{ }{\prod\limits_{\substack{2\leq i<j\leq
N-2\\i,j\notin J}}}\left(  y_{i}-y_{j}\right)  .
\]
It can be considered as a subarrangement of the arrangement $D_{0}$. Hence, an
embedded resolution of $D_{p}$ can be constructed by (part of) the same
blow-ups we used to construct the embedded resolution of $D_{0}$. Take any
centre of blow-up $Z_{i}$, of codimension $v_{i}$, occurring in those
resolutions, leading to the exceptional submanifold $E_{i}$. Say $n_{i}$ and
$n_{i}^{\prime}$ are the number of hyperplanes in $D_{0}$ and $D_{p}$,
respectively, containing $Z_{i}$; then clearly $n_{i}^{\prime}\leq n_{i}$.
Hence the numerical data of $E_{i}$, considered in the embedded resolution of
$D_{0}$ and $D_{p}$, are $(n_{i},v_{i})$ and $(n_{i}^{\prime},v_{i})$,
respectively. Since $\frac{v_{i}}{n_{i}}\leq\frac{v_{i}}{n_{i}^{\prime}}$, all
new quotients of numerical data are again at least $\frac{2}{N-2}$. For the
original integral $Z_{\Omega_{p}}^{\left(  N\right)  }(\boldsymbol{s})$, we
get new conditions for convergence similar to (\ref{firstcondition}),
(\ref{secondcondition}), (\ref{thirdcondition}) till (\ref{lastcondition2}),
involving the same constant terms, but with other variables $s_{ij}$, still
all having coefficient $1$, and with \emph{at most} the number of variables
$s_{ij}$ as in those expressions.

In particular, if $\operatorname{Re}(s_{ij}) > -\frac{2}{N-2}$ for all $ij$,
then all convergence conditions are satisfied. (The \lq hardest\rq\ type is
(\ref{firstcondition}), when $\frac{(N-2)(N-3)}{2}$ variables $s_{ij}$ occur.)
\end{proof}

\subsection{\label{Case I different}Case $I\neq\left\{  2,\ldots,N-2\right\}
$}

In this case, recall that $Z_{I}^{(N)}(\boldsymbol{s})$ takes the form
\[
Z_{I}^{(N)}(\boldsymbol{s})={\displaystyle\int\limits_{\mathbb{R}^{N-3}}}
\text{ } \widetilde{\varphi}_{I}\left(  x\right)  \prod\limits_{i\not \in
I}\left\vert x_{i}\right\vert ^{-s_{1i}-s_{\left(  N-1\right)  i}%
-\sum_{\substack{2\leq j\leq N-2\\j\neq i}}s_{ij}-2}\text{ } F_{I}\left(
x,\boldsymbol{s}\right)  {\displaystyle\prod\limits_{i=2}^{N-2}} dx_{i},
\label{Eq_4}%
\]
where
\begin{align*}
F_{I}\left(  x,\boldsymbol{s}\right)   &  :=\prod\limits_{j\in I}\left\vert
x_{j}\right\vert ^{s_{1j}}\text{ }\prod\limits_{j=2}^{N-2}\left\vert
1-x_{j}\right\vert ^{s_{\left(  N-1\right)  j}}\prod\limits_{\substack{2\leq
i<j\leq N-2\\i,\text{ }j\in I}}\left\vert x_{i}-x_{j}\right\vert ^{s_{ij}%
}\text{ }\times\\
&  \prod\limits_{\substack{2\leq i<j\leq N-2\\i,\text{ }j\not \in
I}}\left\vert x_{i}-x_{j}\right\vert ^{s_{ij}}\text{ }\prod
\limits_{\substack{2\leq i<j\leq N-2\\i\not \in I,\text{ }j\in I}}\left\vert
1-x_{i}x_{j}\right\vert ^{s_{ij}}\prod\limits_{\substack{2\leq i<j\leq
N-2\\i\in I,\text{ }j\not \in I}}\left\vert 1-x_{i}x_{j}\right\vert ^{s_{ij}}.
\end{align*}
This simplifies to
\[
Z_{I}^{(N)}(s):=\int\limits_{\mathbb{R}^{N-3}}\widetilde{\varphi}_{I}\left(
x\right)  \prod\limits_{i\not \in I}\left\vert x_{i}\right\vert ^{-(N-2)s-2}%
\text{ } F_{I}\left(  x,s\right)
{\displaystyle\prod\limits_{i=2}^{N-2}}
dx_{i}.
\]
The relevant arrangement here is the divisor $D_{I,0} \cup D_{I}$, where
$D_{I,0} :=\{\prod_{i\notin I}x_{i}=0\}$ and $D_{I}$ is the zero locus of
\begin{align*}
\prod\limits_{i\in I}x_{i}\text{ }\prod\limits_{i=2}^{N-2}\left(
1-x_{i}\right)   &  \prod\limits_{\substack{2\leq i<j\leq N-2\\i,\text{
}j\notin I}}\left(  x_{i}-x_{j}\right)  \text{ }\prod\limits_{\substack{2\leq
i<j\leq N-2\\i,\text{ }j\in I}}\left(  x_{i}-x_{j}\right)  \text{ }\times\\
&  \prod\limits_{\substack{2\leq i<j\leq N-2\\i\not \in I,\text{ }j\in
I}}\left(  1-x_{i}x_{j}\right)  \prod\limits_{\substack{2\leq i<j\leq
N-2\\i\in I,\text{ }j\not \in I}}\left(  1-x_{i}x_{j}\right)  .
\end{align*}

\begin{proposition}
\label{Prop2}Fix $I\neq\left\{  2,\ldots,N-2\right\}  $. Then $Z_{I}%
^{(N)}(\boldsymbol{s})$ is convergent and holomorphic in a region containing
$-\frac{2}{N-2} < \operatorname{Re}(s_{ij}) < -\frac{2}{N} $ for all $ij$. In
particular, the integral $Z_{I}^{(N)}(s)$ is holomorphic in the band
$-\frac{2}{N-2} <\operatorname{Re}(s)<-\frac{2}{N}$.
\end{proposition}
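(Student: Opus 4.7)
The plan is to mirror the strategy of Proposition \ref{Prop1}, applied now to the form \eqref{Eq 4} of $Z_{I}^{(N)}(\boldsymbol{s})$. The integrand is supported on the compact set $K_I\subset\mathbb{R}^{N-3}$ where $|x_i|\le 1/2$ for $i\notin I$ and $|x_j|\le 2+\epsilon$ for $j\in I$; on $K_I$ the factors $|1-x_j|^{s_{(N-1)j}}$ with $j\notin I$ are invertible, hence irrelevant for convergence and holomorphy. The restriction of the divisor $D_{I,0}\cup D_I$ to $K_I$ consists of the coordinate hyperplanes $x_i=0$, the hyperplanes $1-x_j=0$ (with $j\in I$), the hyperplanes $x_i-x_j=0$ (with $i,j$ both in $I$ or both in $I^{c}$), and the hypersurfaces $1-x_i x_j=0$ (with $i\in I$, $j\in I^{c}$). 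A partition of unity subordinate to $K_I$ localizes the integral around the finitely many points where this arrangement is not locally monomial.

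A non-monomial point $p\in K_I$ has either all coordinates in $\{0,1\}$ (with coordinates $1$ only for indices in $I$), or lies on a hypersurface $1-x_i x_j=0$. In the latter case, $|x_j|\le 1/2$ forces $|x_i|\ge 2$, so $x_i$ is bounded away from $0$ and from $1$; a single change of variable $y=x_i x_j-1$ converts the integral near $p$ into a classical multivariate local zeta function to which Theorem \ref{thm: num data and poles} applies directly, yielding only lower-bound conditions of the form $\operatorname{Re}(s_{ij})>-1$. For the other points $p$, after translating $p$ to the origin as in step (2) of the proof of Proposition \ref{Prop1}, the local arrangement is a subarrangement of the one at the translated origin, and one runs the standard cascade of blow-ups in centres of increasing dimension contained in the non-monomial locus, invoking Lemma \ref{Lemma0} at each stage.

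The principal computation is the first blow-up at the origin. Keeping $s_{ij}=s$ uniformly and writing $k:=|I^{c}|$, direct bookkeeping of the exponent contributions of the modified factors $|x_i|^{-(N-2)s-2}$ for $i\notin I$, of the remaining factors, and of the Jacobian $|u|^{N-4}$ yields, in the chart $x_{i_0}=u$, $x_i=u v_i$, the exponent on $|u|$ in the form
\[
A(k)\,s + (N-3-2k) - 1, \qquad A(k):=\frac{(N-2)(N-3)}{2}-2k(N-2)+k^{2},
\]
so by Lemma \ref{Lemma0} the associated convergence condition is $A(k)\operatorname{Re}(s)+(N-3-2k)>0$. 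For $k=N-3$, i.e.\ $I=\emptyset$, this collapses to $\operatorname{Re}(s)<-\frac{2}{N}$, the sharp upper bound of the claim. The key combinatorial identities
\[
(N-2)(N-3-2k)-2A(k) = 2k(N-2-k), \qquad N(N-3-2k)-2A(k) = 2(k+1)(N-3-k),
\]
both nonnegative for $0\le k\le N-3$, show, respectively, that when $A(k)>0$ the induced condition is a lower bound weaker than $\operatorname{Re}(s)>-\frac{2}{N-2}$, and when $A(k)<0$ it is an upper bound weaker than $\operatorname{Re}(s)<-\frac{2}{N}$. The strict transforms of the coordinate hyperplanes $x_i=0$ with $i\notin I$ give the auxiliary upper bound $\operatorname{Re}(s)<-\frac{1}{N-2}$, implied by $\operatorname{Re}(s)<-\frac{2}{N}$ for $N\ge 4$, and the strict transforms of the remaining components give $\operatorname{Re}(s)>-1$.

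The subsequent blow-ups, in higher-codimension centres contained in subarrangements, yield conditions strictly weaker in the uniform variable than the one from the first blow-up, by exactly the same monotonicity argument used at the end of the proof of Proposition \ref{Prop1} (the ratio codimension/multiplicity can only increase when blowing up in higher-codimension centres of a subarrangement). In the multivariate setting, Lemma \ref{Lemma0} produces at each centre an inequality of the form $\sum N_{ij,k}(I)\operatorname{Re}(s_{ij})+\gamma_k(I)>0$ of the type required by Lemma \ref{Lemma1}, and the same verification shows that the open region defined by $-\frac{2}{N-2}<\operatorname{Re}(s_{ij})<-\frac{2}{N}$ for all $ij$ lies in the intersection of all these half-spaces. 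The main obstacle is the uniform bookkeeping of the numerical data along every step of the resolution for every admissible $I$; the two combinatorial identities above are the technical heart of this verification.
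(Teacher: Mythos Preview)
Your computation of the exponent $A(k)s + (N-3-2k)$ at the blow-up of the full origin is correct, and the two combinatorial identities are an elegant way to see that this single condition is satisfied on the open band $(-\tfrac{2}{N-2},-\tfrac{2}{N})$. But the rest of the argument does not go through as written.

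First, the dichotomy ``all coordinates in $\{0,1\}$'' versus ``lies on some $1-x_ix_j=0$'' is neither exhaustive nor exclusive for the non-monomial locus of $D_{I,0}\cup D_I$ on $K_I$. A point such as $p=(x_i,x_j,x_{j'},x_{j''})=(2,\tfrac12,0,0)$ with $i\in I$ and $j,j',j''\notin I$ sits on $1-x_ix_j=0$ \emph{and} has two of the bad coordinates equal to $0$; the triple $x_{j'},x_{j''},x_{j'}-x_{j''}$ is already non-monomial there. Your claimed single change of variable $y=x_ix_j-1$ does nothing for this, and the integral near $p$ is \emph{not} a classical multivariate local zeta function: the factors $|x_{j'}|^{-(N-2)s-2}$ and $|x_{j''}|^{-(N-2)s-2}$ carry negative constant exponents, so Theorem~\ref{thm: num data and poles} does not apply and the conditions produced are not merely of the form $\operatorname{Re}(s_{ij})>-1$.

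Second, the appeal to ``the same monotonicity argument used at the end of the proof of Proposition~\ref{Prop1}'' is not valid here. That argument compares two arrangements with the same (positive) exponent $s$ on every hyperplane, so fewer hyperplanes through a centre can only raise the ratio $v_i/N_{f,i}$. In the present setting some hyperplanes carry the exponent $-(N-2)s-2$; passing to a subarrangement or to a higher-dimensional centre changes the $s$-coefficient and the constant term with opposite signs, and the direction of the resulting inequality can flip. Your identities handle only the single centre $\{0\}$; they say nothing about the infinitely many other strata that must be checked.

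The paper avoids all of this by a structural observation you do not use: it separates the centres of blow-up according to whether they lie in $D_{I,0}=\{\prod_{i\notin I}x_i=0\}$ or not. Any centre \emph{not} in $D_{I,0}$ has all $x_j$ with $j\notin I$ bounded away from $0$, so the local inverse change $x_j\mapsto 1/x_j$ takes the integrand back to the original Koba--Nielsen form and the condition is literally one already obtained in Proposition~\ref{Prop1}; this disposes of the $1-x_ix_j$ hypersurfaces and all mixed cases at once. For centres \emph{in} $D_{I,0}$ the paper does \emph{not} blow up the full origin but the codimension-$k$ subspace $\{x_i=0:i\notin I\}$, where the factors $|1-x_ix_j|$ are genuinely invertible; this yields directly the clean bound $\operatorname{Re}(s)<-\tfrac{2}{2N-k-3}$, sharpest at $k=N-3$, and the subsequent blow-ups are of the same type with smaller $k$, hence visibly weaker. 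Your $A(k)$ route mixes the $I$ and $I^c$ coordinates in the first blow-up and then leaves the recursive step unproved.
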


\begin{proof}
Note first that in this case the divisors $D_{I}$ and $D_{I,0}$ play a
different role. All components of (the strict transforms of) $D_{I}$ induce
the classical condition $\operatorname{Re}(s_{ij})+1>0\Leftrightarrow
\operatorname{Re}(s_{ij})>-1$. But the components of $D_{I,0}$ give rise to a
new kind of convergence condition of type%
\begin{equation}
-\operatorname{Re}(s_{1i})-\operatorname{Re}(s_{\left(  N-1\right)  i}%
)-\sum_{\substack{2\leq j\leq N-2\\j\neq i}}\operatorname{Re}(s_{ij})-1>0,
\label{strict transform bound}%
\end{equation}
simplifying for $Z_{I}^{(N)}(s)$ to
\begin{equation}
-(N-2)\operatorname{Re}(s)-1>0\Leftrightarrow\operatorname{Re}(s)<-\frac
{1}{N-2}. \label{strict transform bound simple}%
\end{equation}
Next, we construct an embedded resolution of $D_{I}\cup D_{I,0}$. A crucial
observation is that any blow-up with centre \emph{not} contained in $D_{I,0}$
will induce a convergence condition that already appeared in the construction
of the resolution $\sigma$ in the proof of Proposition \ref{Prop1}. In
particular, when $-\frac{2}{N-2}<\operatorname{Re}(s_{ij})$ for all $ij$, then
all such conditions are satisfied. For $Z_{I}^{(N)}(s)$, these conditions get
the form $-\frac{2}{N-2}<\operatorname{Re}(s)$ or weaker. We could make this
lower bound more precise, depending on the size of $I$, but this would not
affect the end result of Theorem \ref{TheoremB}.

\smallskip We now treat the blow-ups with centre in $D_{I,0}$. In particular,
we will show that they induce for $Z_{I}^{(N)}(s)$ as strongest condition
$\operatorname{Re}(s)<-\frac{2}{N}$.

In a small enough neighborhood of $D_{I,0}$, we can write the integrand in
$Z_{I}^{(N)}(\boldsymbol{s})$ in the form
\begin{align*}
\widetilde{\varphi}_{I}\left(  x\right)   &  \prod\limits_{i\not \in
I}\left\vert x_{i}\right\vert ^{-s_{1i}-s_{\left(  N-1\right)  i}%
-\sum_{\substack{2\leq j\leq N-2\\j\neq i}}s_{ij}-2}\text{ } \prod
\limits_{\substack{2\leq i<j\leq N-2\\i,\text{ }j\not \in I}}\left\vert
x_{i}-x_{j}\right\vert ^{s_{ij}}\\
&  \prod\limits_{j\in I}\left\vert x_{j}\right\vert ^{s_{1j}}\text{ }%
\prod\limits_{j=2}^{N-2}\left\vert 1-x_{j}\right\vert ^{s_{\left(  N-1\right)
j}} \prod\limits_{\substack{2\leq i<j\leq N-2\\i,\text{ }j\in I}}\left\vert
x_{i}-x_{j}\right\vert ^{s_{ij}}\text{ } g(x,\boldsymbol{s}),
\end{align*}
where the factor $g(x,\boldsymbol{s})$ is invertible on the support of
$\widetilde{\varphi}_{I}\left(  x\right)  $ (and can be neglected from the
point of view of convergence and holomorphy). This simplifies for the
integrand in $Z_{I}^{(N)}(s)$ to
\begin{align*}
\widetilde{\varphi}_{I}\left(  x\right)   &  \text{{}}{\prod\limits_{i\notin
I}}\left\vert x_{i}\right\vert ^{-(N-2)s-2}\prod\limits_{\substack{2\leq
i<j\leq N-2\\i,\text{ }j\not \in I}}\left\vert x_{i}-x_{j}\right\vert
^{s}\text{ }\times\\
&  \prod\limits_{i\in I}\left\vert x_{i}\right\vert ^{s}\text{ }%
\prod\limits_{i\in I}\left\vert 1-x_{i}\right\vert ^{s}\prod
\limits_{\substack{2\leq i<j\leq N-2\\i,\text{ }j\in I}}\left\vert x_{i}%
-x_{j}\right\vert ^{s}\text{ }g(x,s).
\end{align*}

After a permutation of the indices, we may assume that $I=\left\{
2,\ldots,N-2\right\}  \setminus\left\{  2,\ldots,l\right\}  $ with $l\geq2$.
Then $D_{I,0}$ is given by $\prod_{2\leq i\leq l}x_{i}=0$. When $l=2$ (i.e.,
$|I| = N-4$), no blow-up with centre in $D_{I,0}$ is needed. If $l\geq3$
(i.e., $|I| \leq N-5$), we start by performing a blow-up $\tau$ with centre at
$x_{2}=\ldots=x_{l}=0$, for instance in the chart $x_{2}=u_{2}$, $x_{i}%
=u_{i}u_{2}$ for $i=3,\ldots,l$, and $x_{i}=u_{i}$ for $i=l+1,\ldots,N-2$.
This centre is contained in the $l-1$ hyperplanes $x_{i}=0$, $2\leq i\leq l$,
and in the corresponding $\frac{(l-1)(l-2)}{2}$ hyperplanes $x_{i}-x_{j}=0$,
with $2\leq i,j\leq l$. Hence the power of $|u_{2}|$ in the pullback of the
integrand in $Z_{I}^{(N)}(\boldsymbol{s})$ is
\[
\sum_{i=2}^{l} ( -s_{1i}-s_{\left(  N-1\right)  i}-\sum_{\substack{2\leq j\leq
N-2\\j\neq i}}s_{ij}-2 ) + \sum_{2\leq i<j \leq l} s_{ij} + (l-2) ,
\]
which we rewrite as
\begin{equation}
-\sum_{i=2}^{l} ( s_{1i}+s_{\left(  N-1\right)  i} ) -\sum_{i=2}^{l}
\sum_{j=l+1}^{N-2} s_{ij} - \sum_{2\leq i<j \leq l} s_{ij} - l .
\end{equation}
Note the important fact that all coefficients of the variables are $-1$. The
clue in the rewriting above is that, for $s_{ij}$ with $2\leq i<j \leq l$, we
had twice the coefficient $+1$ (coming from $x_{i}$ and $x_{j}$), and once the
coefficient $-1$ (coming from $x_{i}-x_{j}$). The associated new condition for
convergence for $Z_{I}^{(N)}(\boldsymbol{s})$ is thus
\begin{equation}
-\sum_{i=2}^{l} \big( \operatorname{Re}(s_{1i})+\operatorname{Re}(s_{\left(
N-1\right)  i} )\big) -\sum_{i=2}^{l} \sum_{j=l+1}^{N-2} \operatorname{Re}%
(s_{ij}) - \sum_{2\leq i<j \leq l} \operatorname{Re}(s_{ij}) - l+1 > 0 .
\label{EQ 19}%
\end{equation}
Note that $2(l-1) + (l-1)(N-l-2) + \frac{(l-1)(l-2)}{2} = (l-1)\frac
{(2N-l-2)}{2}$ different variables occur. Hence, for $Z_{I}^{(N)}(s)$, this
simplifies to $l-1)\frac{(2N-l-2)}{2} \operatorname{Re}(s)-(l-1)>0,$ i.e.,
\begin{equation}
\operatorname{Re}(s)<-\frac{2}{2N-l-2}. \label{EQ_19bis}%
\end{equation}
This condition is the strongest when $l=N-2$ (i.e., $I=\emptyset$), and
becomes then $\operatorname{Re}(s)<-\frac{2}{N}$. In that case (\ref{EQ 19})
is
\begin{equation}
-\sum_{i=2}^{N-2} \big( \operatorname{Re}(s_{1i})+\operatorname{Re}(s_{\left(
N-1\right)  i} )\big) - \sum_{2\leq i<j \leq N-2} \operatorname{Re}(s_{ij}) -
N+3 > 0 . \label{EQ l=N-2}%
\end{equation}

Further blow-ups to make the integrand locally monomial, for instance with
centre $u_{3}=\dots=u_{l}=0$ in the chart above, yield completely similar
conditions. In fact, after another permutation of the indices, we obtain again
the condition (\ref{EQ 19}), but for smaller $l$.

In particular, if $\operatorname{Re}(s_{ij}) < -\frac{2}{N} $ for all $ij$,
then all convergence conditions (\ref{strict transform bound}) and type
(\ref{EQ 19}) are satisfied. (The \lq hardest\rq\ type is (\ref{EQ l=N-2}),
due to the largest number of occurring variables.) For $Z_{I}^{(N)}(s)$, it is
indeed clear that $\operatorname{Re}(s)<-\frac{2}{N}$ is a stronger condition
than (\ref{strict transform bound simple}).
\end{proof}

\subsection{Proof of Theorem \ref{TheoremB} and precise description of the
convergence domain}

The statement of Theorem \ref{TheoremB} now follows quite immediately.

\begin{proof}
From Propositions \ref{Prop1} and \ref{Prop2}, we know that the Koba-Nielsen
local zeta function $Z_{\mathbb{R}}^{(N)}(\boldsymbol{s})$ is convergent and
holomorphic in a region containing $-\frac{2}{N-2}<\operatorname{Re}%
(s_{ij})<-\frac{2}{N}$ for all $ij$. This is enough to imply meromorphic
continuation to the whole $\mathbb{C}^{N-3}$.

The restrictions concerning the numbers $N_{ij,k}\left(  I\right)  $ and
$\gamma_{k}\left(  I\right)  $ in the actual convergence domain $\cap
_{I}\mathcal{H}(I)$, see (\ref{Eq_7}), are a consequence of the proofs of
these propositions.
\end{proof}

Moreover, the details in the strategy and outline of the proofs of
Propositions \ref{Prop1} and \ref{Prop2} yield the following more precise
description of the actual convergence domain. Below, the inequalities
(\ref{EQ B}) generalize (\ref{firstcondition}), (\ref{secondcondition}),
\dots, (\ref{lastcondition1}), where the set $J$ is $\{2,\dots,N-2\}$,
$\{3,\dots,N-2\}$,\dots, $\{N-3,N-2\}$, respectively. The inequalities
(\ref{EQ C}) are the similar ones according to the symmetry $1\leftrightarrow
N-1$ in the integrand of $Z_{\mathbb{R}}^{(N)}(\boldsymbol{s})$, corresponding
geometrically to $(0,0,\dots,0)\leftrightarrow(1,1,\dots,1)$. Next, the
inequalities (\ref{EQ D}) generalize (\ref{thirdcondition}),\dots,
(\ref{lastcondition2}), where $J$ is $\{2,\dots,N-2\}$, $\{3,\dots,N-2\}$,
\dots, $\{N-4,N-3,N-2\}$, respectively. Finally, in (\ref{EQ E}) the sets $J$
generalize $\{i\}$ in (\ref{strict transform bound}) and $\{2,\dots,l\}$ in
(\ref{EQ 19}).

\begin{proposition}
\label{exact convergence domain} The Koba-Nielsen local zeta function
$Z_{\mathbb{R}}^{(N)}(\boldsymbol{s})$ is convergent and holomorphic in the
region determined by the following $2^{N-1}-N-1$ inequalities:
\begin{equation}
\operatorname{Re}(s_{ij})>-1\label{EQ A}%
\end{equation}
for all $\frac{N(N-3)}{2}$ variables $s_{ij}$,
\begin{equation}
\sum_{j\in J}\operatorname{Re}(s_{1j})+\sum_{\substack{i,j\in J\\i<j}%
}\operatorname{Re}(s_{ij})>-\sharp J\label{EQ B}%
\end{equation}
for all subsets $J\subset\{2,\dots,N-2\}$ with $\sharp J\geq2$,
\begin{equation}
\sum_{j\in J}\operatorname{Re}(s_{(N-1)j})+\sum_{\substack{i,j\in
J\\i<j}}\operatorname{Re}(s_{ij})>-\sharp J\label{EQ C}%
\end{equation}
for all subsets $J\subset\{2,\dots,N-2\}$ with $\sharp J\geq2$,
\begin{equation}
\sum_{\substack{i,j\in J\\i<j}}\operatorname{Re}(s_{ij})>-\sharp
J+1\label{EQ D}%
\end{equation}
for all subsets $J\subset\{2,\dots,N-2\}$ with $\sharp J\geq3$,
\begin{equation}
\sum_{j\in J}\operatorname{Re}(s_{1j})+\sum_{j\in J}\operatorname{Re}%
(s_{\left(  N-1\right)  j})+\sum_{\substack{i\in\{2,\dots,N-2\}\setminus
J\\j\in J}}\operatorname{Re}(s_{ij})+\sum_{\substack{i,j\in J\\i<j}%
}\operatorname{Re}(s_{ij})<-\sharp J\label{EQ E}%
\end{equation}
for all subsets $J\subset\{2,\dots,N-2\}$ with $\sharp J\geq1$.
\end{proposition}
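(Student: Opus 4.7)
The plan is to refine the arguments in the proofs of Propositions \ref{Prop1} and \ref{Prop2} by enumerating every blow-up centre that appears in an embedded resolution of the relevant arrangements, and invoking the formula, emphasized after (\ref{minimal quotient}), that for any submanifold $E_{i}$ created by blowing up a centre $Y$ we have $N_{f,i}=$ (number of hyperplanes of the arrangement containing $Y$) and $v_{i}=$ (codimension of $Y$). Each such $E_{i}$ contributes one convergence condition via Lemma \ref{Lemma0}, and the task reduces to pinning down which centres are actually required.

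First, I would treat $Z_{I}^{(N)}(\boldsymbol{s})$ with $I=\{2,\ldots,N-2\}$. A direct linear-algebra inspection shows that the hyperplanes of $D_{N}$ fail to meet transversally precisely along three families of subspaces: the coordinate flats $Y_{J}^{(0)}:=\{x_{j}=0:j\in J\}$ for $J\subseteq\{2,\ldots,N-2\}$ with $|J|\geq2$ (codimension $|J|$, containing the $|J|$ components $\{x_{j}=0\}$ and the $\binom{|J|}{2}$ components $\{x_{i}-x_{j}=0\}$ with $i,j\in J$), their translates $Y_{J}^{(1)}:=\{x_{j}=1:j\in J\}$ under $x_{j}\leftrightarrow 1-x_{j}$, and the diagonal flats $Y_{J}^{(\Delta)}:=\{x_{i}=x_{j}:i,j\in J\}$ for $|J|\geq3$ (codimension $|J|-1$, containing the $\binom{|J|}{2}$ components $\{x_{i}-x_{j}=0\}$). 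Any mixed intersection such as $\{x_{i}=0,x_{j}=1\}$ or $\{x_{i}=0,x_{j}-x_{k}=0\}$ is transversal, and proper sub-intersections of the listed flats are resolved automatically once the ambient flat is blown up. Blowing up these centres in order of increasing dimension and applying the numerical-data formula then yields precisely the conditions (\ref{EQ B}), (\ref{EQ C}), (\ref{EQ D}) from the three families, together with (\ref{EQ A}) from the strict transforms of the $\frac{N(N-3)}{2}$ hyperplane components.

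Next, I would handle $Z_{I}^{(N)}(\boldsymbol{s})$ with $I\subsetneq\{2,\ldots,N-2\}$. Writing $J_{I}:=\{2,\ldots,N-2\}\setminus I$, the observation in the proof of Proposition \ref{Prop2}, that any blow-up whose centre is not contained in the infinity divisor $D_{I,0}=\{\prod_{i\in J_{I}}x_{i}=0\}$ produces a condition already present in (\ref{EQ A})--(\ref{EQ D}), reduces matters to centres of the form $Y_{J'}=\{x_{i}=0:i\in J'\}$ for non-empty $J'\subseteq J_{I}$. Such a $Y_{J'}$ has codimension $|J'|$, and the hyperplanes through it are the $|J'|$ components of $D_{I,0}$, each contributing exponent $-s_{1i}-s_{(N-1)i}-\sum_{j\neq i}s_{ij}-2$, together with the $\binom{|J'|}{2}$ components $\{x_{i}-x_{j}=0\}$ with $i,j\in J'$, each contributing $s_{ij}$. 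A direct tally of the pulled-back exponent of the blow-up variable, accounting for the Jacobian term $|J'|-1$, then yields exactly (\ref{EQ E}) with $J$ replaced by $J'$. Letting $I$ range so that $J'$ sweeps out all non-empty subsets of $\{2,\ldots,N-2\}$ produces the full family (\ref{EQ E}).

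The principal obstacle will be the exhaustive classification of the non-transversal intersections of $D_{N}$ (and verifying the reduction for $D_{I,0}$-centres) in the first step; once this combinatorial point is secured, the remainder is routine bookkeeping. As a sanity check,
\[
\tfrac{N(N-3)}{2}+2\bigl(2^{N-3}-(N-2)\bigr)+\Bigl(2^{N-3}-1-(N-3)-\tbinom{N-3}{2}\Bigr)+(2^{N-3}-1)=2^{N-1}-N-1,
\]
matching the count in the statement. Finally, since (\ref{Eq_2A}) expresses $Z^{(N)}(\boldsymbol{s})$ as a finite sum $\sum_{I}Z_{I}^{(N)}(\boldsymbol{s})$ and each summand is holomorphic on the corresponding $\mathcal{H}(I)$ produced above, the common domain $\bigcap_{I}\mathcal{H}(I)$ is cut out by exactly (\ref{EQ A})--(\ref{EQ E}), yielding the proposition.
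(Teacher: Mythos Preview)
Your proposal is correct and follows essentially the same approach as the paper: both extract the list of convergence inequalities from the numerical data of the blow-up centres appearing in Propositions \ref{Prop1} and \ref{Prop2}, identifying families (\ref{EQ B}), (\ref{EQ C}), (\ref{EQ D}) with the flats $Y_J^{(0)}$, $Y_J^{(1)}$, $Y_J^{(\Delta)}$ and family (\ref{EQ E}) with the centres inside $D_{I,0}$. The only organisational difference is that you enumerate the global building set of the arrangement upfront, whereas the paper arrives at the same centres by working locally around each corner $p\in\{0,1\}^{N-3}$ and then invoking the $0\leftrightarrow 1$ symmetry; your packaging is a bit cleaner, but the content is identical.
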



\subsection{Example\label{Example_N_6}}

Fix $N=6$. Following the proofs of Propositions \ref{Prop1} and \ref{Prop2},
or as special case of Proposition \ref{exact convergence domain}, we obtain
that $Z^{(6)}(s)$ is holomorphic in the region
\[
\begin{aligned}
\text{(basic)}\quad   & \operatorname{Re}(s_{ij}) > -1 \quad\text{for all nine } ij ,\\
(I=\{2,3.4\})\quad  & \operatorname{Re}(s_{12})+ \operatorname{Re}(s_{13})  + \operatorname{Re}(s_{14}) +\operatorname{Re}(s_{23}) + \operatorname{Re}(s_{24}) + \operatorname{Re}(s_{34}) > -3, \\
& \operatorname{Re}(s_{52})+ \operatorname{Re}(s_{53})  + \operatorname{Re}(s_{54}) +\operatorname{Re}(s_{23}) + \operatorname{Re}(s_{24}) + \operatorname{Re}(s_{34}) > -3, \\
& \operatorname{Re}(s_{12})+ \operatorname{Re}(s_{13})+\operatorname{Re}(s_{23}) > -2, \quad \operatorname{Re}(s_{12})+\operatorname{Re}(s_{14})+\operatorname{Re}(s_{24}) > -2  , \\
& \operatorname{Re}(s_{13})+ \operatorname{Re}(s_{14})+\operatorname{Re}(s_{34}) > -2, \quad \operatorname{Re}(s_{52})+\operatorname{Re}(s_{53})+\operatorname{Re}(s_{23}) > -2  , \\
& \operatorname{Re}(s_{52})+ \operatorname{Re}(s_{54})+\operatorname{Re}(s_{24}) > -2, \quad \operatorname{Re}(s_{53})+\operatorname{Re}(s_{54})+\operatorname{Re}(s_{34}) > -2  , \\
& \operatorname{Re}(s_{23}) + \operatorname{Re}(s_{24}) + \operatorname{Re}(s_{34}) > -2, \\
(|I| <3)\quad   &  \operatorname{Re}(s_{12})+\operatorname{Re}(s_{52})+\operatorname{Re}(s_{23})+\operatorname{Re}(s_{24})< -1 , \\    &\operatorname{Re}(s_{13})+\operatorname{Re}(s_{53})+\operatorname{Re}(s_{23})+\operatorname{Re}(s_{34})  < -1 ,\\
&\operatorname{Re}(s_{14})+\operatorname{Re}(s_{54})+\operatorname{Re}(s_{24})+\operatorname{Re}(s_{34})  < -1 ,\\
&  \operatorname{Re}(s_{12})+\operatorname{Re}(s_{52})+\operatorname{Re}(s_{13})+\operatorname{Re}(s_{53})+\operatorname{Re}(s_{23})+\operatorname{Re}(s_{24})+
\operatorname{Re}(s_{34})< -2 , \\
&  \operatorname{Re}(s_{12})+\operatorname{Re}(s_{52})+\operatorname{Re}(s_{14})+\operatorname{Re}(s_{54})+\operatorname{Re}(s_{23})+\operatorname{Re}(s_{24})+
\operatorname{Re}(s_{34})< -2 , \\
&  \operatorname{Re}(s_{13})+\operatorname{Re}(s_{53})+\operatorname{Re}(s_{14})+\operatorname{Re}(s_{54})+\operatorname{Re}(s_{23})+\operatorname{Re}(s_{24})+
\operatorname{Re}(s_{34})< -2 , \\
& \sum_{ij}  \operatorname{Re}(s_{ij}) < -3 ,
\end{aligned}
\]
containing the simpler domain $-\frac{1}{2}<\operatorname{Re}(s_{ij}%
)<-\frac{1}{3}$ for all $ij$.

\section{\label{Section_general_case}Local Zeta Functions of Koba-Nielsen type
Over Local Fields}

The Koba-Nielsen local zeta functions introduced in Definition
\ref{Definition_K:N_Zeta_func} can be naturally defined over arbitrary local
fields of characteristic zero, i.e., $\mathbb{R}$, $\mathbb{C}$, or finite
extensions of $\mathbb{Q}_{p}$, the field of $p$-adic numbers, and the proof
of the main theorem can be extended easily to the case of local fields
different from $\mathbb{R}$. We denote the corresponding local zeta functions
as $Z_{\mathbb{K}}^{(N)}(\boldsymbol{s})$ to emphasize the dependency on
$\mathbb{K}$. Note that the $p$-adic case was already treated in
\cite{Zun-B-C-LMP}-\cite{Zun-B-C-JHEP} through an alternative method, only
available in that case, called Igusa's stationary phase formula.

\subsection{Local fields}

We take $\mathbb{K}$ to be a non-discrete$\ $locally compact field of
characteristic zero. Then $\mathbb{K}$ is $\mathbb{R}$, $\mathbb{C}$, or a
finite extension of $\mathbb{Q}_{p}$, the field of $p$-adic numbers. If
$\mathbb{K}$ is $\mathbb{R}$ or $\mathbb{C}$, we say that $\mathbb{K}$ is an
$\mathbb{R}$\textit{-field}, otherwise we say that $\mathbb{K}$ is a
$p$\textit{-field}.

For $a\in\mathbb{K}$, we define the \textit{modulus} $\left\vert a\right\vert
_{\mathbb{K}}$ of $a$ by%
\[
\left\vert a\right\vert _{\mathbb{K}}=\left\{
\begin{array}
[c]{l}%
\text{the rate of change of the Haar measure in }(\mathbb{K},+)\text{ under
}x\rightarrow ax\text{ }\\
\text{for }a\neq0,\\
\\
0\ \text{ for }a=0\text{.}%
\end{array}
\right.
\]
It is well-known that, if $\mathbb{K}$ is an $\mathbb{R}$-field, then
$\left\vert a\right\vert _{\mathbb{R}}=\left\vert a\right\vert $ and
$\left\vert a\right\vert _{\mathbb{C}}=\left\vert a\right\vert ^{2}$, where
$\left\vert \cdot\right\vert $ denotes the usual absolute value in
$\mathbb{R}$ or $\mathbb{C}$, and, if $\mathbb{K}$ is a $p$-field, then
$\left\vert \cdot\right\vert _{\mathbb{K}}$ is the normalized absolute value
in $\mathbb{K}$.

We now take $\mathbb{K}$ to be a $p$-field. Let $R_{\mathbb{K}}$\ be the
valuation ring of $\mathbb{K}$, $P_{\mathbb{K}}$ the maximal ideal of
$R_{\mathbb{K}}$, and $\overline{\mathbb{K}}=R_{\mathbb{K}}/P_{\mathbb{K}}$
\ the residue field of $\mathbb{K}$. The cardinality of the residue field of
$\mathbb{K}$ is denoted by $q$, thus $\overline{\mathbb{K}}=\mathbb{F}_{q}$.
For $z\in\mathbb{K}$, $ord\left(  z\right)  \in\mathbb{Z}\cup\{+\infty\}$
\ denotes the valuation of $z$, and $\left\vert z\right\vert _{\mathbb{K}%
}=q^{-ord\left(  z\right)  }$. We fix a uniformizing parameter $\mathfrak{p}$
of $R_{\mathbb{K}}$, i.e., a generator of $P_{\mathbb{K}}$.

We fix a set $S_{\mathbb{K}}\subset R_{\mathbb{K}}$ of representatives \ of
the residue field $\overline{\mathbb{K}}$. We assume that $0\in S_{\mathbb{K}%
}$. Any $z\in\mathbb{K}\backslash\left\{  0\right\}  $ admits a power
expansion of the form%
\begin{equation}
z=\mathfrak{p}^{m}\sum\limits_{k=0}^{\infty}z_{k}\mathfrak{p}^{k}\text{,}
\label{Power_Exp}%
\end{equation}
where $m\in\mathbb{Z}$, the $z_{k}$ belong to $S_{\mathbb{K}}$, and $z_{0}%
\neq0$. The series (\ref{Power_Exp}) converges in the norm $\left\vert
\cdot\right\vert _{\mathbb{K}}$.

\subsection{Multivariate Local Zeta Functions: General Case}

If $\mathbb{K}$ is a $p$-field, resp. an $\mathbb{R}$-field, we denote by
$\mathcal{D}(\mathbb{K}^{n})$ the $\mathbb{C}$-vector space consisting of all
$\mathbb{C}$-valued locally constant functions, resp. all smooth functions, on
$\mathbb{K}^{n}$, with compact support. An element of $\mathcal{D}%
(\mathbb{K}^{n})$ is called a \textit{test function}.

Let $f_{1}(x),\ldots,f_{m}(x)\in\mathbb{K}\left[  x_{1},\ldots,x_{n}\right]  $
be non-constant polynomials, we denote by $D_{\mathbb{K}}:=\cup_{i=1}^{m}%
f_{i}^{-1}(0)$ the divisor attached to them. We set
\[
\boldsymbol{f}:=\left(  f_{1},\ldots,f_{m}\right)  \text{ \ and \ }%
\boldsymbol{s}:=\left(  s_{1},\ldots,s_{m}\right)  \in\mathbb{C}^{m}\text{.}%
\]
The multivariate local zeta function attached to $(\boldsymbol{f},\Theta)$,
with $\Theta\in\mathcal{D}(\mathbb{K}^{n})$, is defined as%
\begin{equation}
Z_{\Theta}\left(  \boldsymbol{f},\boldsymbol{s}\right)  =\int
\limits_{\mathbb{K}^{n}\smallsetminus D_{\mathbb{K}}}\Theta\left(  x\right)
\prod\limits_{i=1}^{m}\left\vert f_{i}(x)\right\vert _{\mathbb{K}}^{s_{i}}%
{\displaystyle\prod\limits_{i=1}^{n}}
dx_{i}\text{, \qquad when }\operatorname{Re}(s_{i})>0\text{ for all }i\text{.}
\label{Zeta Function general}%
\end{equation}
Integrals of type (\ref{Zeta Function general}) are analytic functions, and
they admit meromorphic continuations to the whole $\mathbb{C}^{m}$, see
\cite{Igusa-old},\cite{Igusa}, \cite{Kashiwara-Takai}, \cite{Loeser}. By
applying Hironaka's resolution of singularities theorem to $D_{\mathbb{K}}$,
the study of integrals of type (\ref{Zeta Function general}) is reduced to the
case of monomial integrals, which can be studied directly, see e.g.
\cite{Loeser}, \cite{Igusa}, \cite{Igusa-old}.

\begin{lemma}
\label{Lemma10} Let $\Phi\left(  y,s_{1},\ldots,s_{m}\right)  $ be a test
function with support in the polydisc
\[
\left\{  y\in\mathbb{K}^{n};\left\vert y_{i}\right\vert <1\text{, for
}i=1,\ldots,n\right\}  ,
\]
when $\mathbb{K}$ is an $\mathbb{R}$-field, and with support $\mathfrak{p}%
^{e}R_{\mathbb{K}}^{n}$ ($e\in\mathbb{Z}$) when $\mathbb{K}$ is a $p$-field,
which is holomorphic in $s_{1},\ldots,s_{m}$. Consider the integral
\[
J_{\mathbb{K}}(s_{1},\ldots,s_{m})=%
{\textstyle\int\limits_{\mathbb{K}^{n}}}
\Phi\left(  y,s_{1},\ldots,s_{m}\right)  \prod_{i=1}^{r}\left\vert
y_{i}\right\vert _{\mathbb{K}}^{\sum_{j=1}^{m}a_{{j},i}s_{j}+b_{i}-1}%
{\displaystyle\prod\limits_{i=1}^{n}}
dy_{i}\text{, }%
\]
where $1\leq r\leq n$, for each $i$ the $a_{j,i}$ are integers (not all zero)
and $b_{i}$ is an integer.\ Set%
\[
\mathcal{R}_{\mathbb{K}}:=%
{\textstyle\bigcap\limits_{i\in\left\{  1,\ldots,r\right\}  }}
\left\{  (s_{1},\ldots,s_{m})\in\mathbb{C}^{m};\sum_{j=1}^{m}a_{j,i}%
\operatorname{Re}(s_{j})+b_{i}>0\right\}  .
\]
Then the following assertions hold:

\noindent(i) if all the $a_{j,i}$ are nonnegative integers (not all zero) and
$b_{i}$ is a positive integer, then $\mathcal{R}_{\mathbb{K}}\neq\emptyset$.
More precisely, $\left\{  (s_{1},\ldots,s_{m})\in\mathbb{C}^{m}%
;\operatorname{Re}(s_{j})>0\text{, }j=1,\cdots,m\right\}  \subset
\mathcal{R}_{\mathbb{K}}$;

\noindent(ii) if $\mathcal{R}_{\mathbb{K}}\neq\emptyset$, then $J_{\mathbb{K}%
}(s_{1},\ldots,s_{m})$ is convergent and defines a holomorphic function in the
domain $\mathcal{R}_{\mathbb{K}}$. Moreover, in the $p$-field case, it is a
rational function in $q^{-s_{1}},\ldots,q^{-s_{m}}$;

\noindent(ii) if $\mathcal{R}_{\mathbb{K}}\neq\emptyset$, then the function
$J_{\mathbb{K}}(s_{1},\ldots,s_{m})$ admits an analytic continuation to the
whole $\mathbb{C}^{m}$, as a meromorphic function with poles belonging to
\[
\bigcup_{1\leq i\leq r}\text{ }\bigcup_{t}\left\{  \sum_{j=1}^{m}a_{j,i}%
s_{j}+b_{i}+t=0\right\}  ,
\]
in the $\mathbb{R}$-field case, with $t\in\mathbb{N}$ if $\mathbb{K}%
=\mathbb{R}$ and $t\in\frac{1}{2}\mathbb{N}$ if $\mathbb{K}=\mathbb{C}$, and
with poles belonging to
\[
\bigcup_{1\leq i\leq r}\text{ }\left\{  \sum_{j=1}^{m}a_{j,i}\operatorname{Re}%
(s_{j})+b_{i}=0\right\}  ,
\]
in the $p$-field case.
\end{lemma}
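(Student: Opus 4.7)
The strategy is to mirror the proof of Lemma \ref{Lemma0} uniformly across the three local-field settings $\mathbb{K}=\mathbb{R}$, $\mathbb{C}$, or a $p$-field, by reducing everything to a one-variable monomial integral over $\mathbb{K}$ and patching the results via Fubini. The key observation is that the factors $\prod_{i=1}^{r}|y_{i}|_{\mathbb{K}}^{\sum_{j}a_{j,i}s_{j}+b_{i}-1}$ depend only on $y_{1},\ldots,y_{r}$, while $\Phi$ is a test function supported either on the open unit polydisc (Archimedean case) or on $\mathfrak{p}^{e}R_{\mathbb{K}}^{n}$ (non-Archimedean case). Treating $\Phi$ as jointly continuous in the position variables and holomorphic in $\boldsymbol{s}$, Fubini reduces the convergence and holomorphy of $J_{\mathbb{K}}(\boldsymbol{s})$ to the analysis of the one-variable model integral
\[
I_{\mathbb{K}}(\alpha):=\int_{\{|y|_{\mathbb{K}}<1\}}|y|_{\mathbb{K}}^{\alpha-1}\,dy \qquad \text{(resp.\ over $\mathfrak{p}^{e}R_{\mathbb{K}}$).}
\]

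A direct computation—$\int_{|y|<1}|y|^{\alpha-1}dy=2/\alpha$ for $\mathbb{K}=\mathbb{R}$; polar coordinates with $|y|_{\mathbb{C}}=r^{2}$ yielding $\int_{|y|_{\mathbb{C}}<1}|y|_{\mathbb{C}}^{\alpha-1}dy=\pi/\alpha$ for $\mathbb{K}=\mathbb{C}$; and the annular decomposition of $R_{\mathbb{K}}\setminus\{0\}$ into shells $\{\operatorname{ord}(y)=n\}$ of Haar measure $(1-q^{-1})q^{-n}$ giving $(1-q^{-1})/(1-q^{-\alpha})$ in the $p$-field case—shows that $I_{\mathbb{K}}(\alpha)$ converges exactly when $\operatorname{Re}(\alpha)>0$. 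This immediately yields (i), since under the non-negativity hypothesis the positive orthant embeds in $\mathcal{R}_{\mathbb{K}}$, and (ii), where holomorphy follows from uniform absolute convergence on compact subsets together with Morera's theorem. The $p$-adic rationality in $q^{-s_{j}}$ is then automatic from the explicit geometric-series formula after expanding the locally constant $\Phi$ as a finite linear combination of characteristic functions of small balls on which every $|y_{i}|_{\mathbb{K}}$ is either constant or equal to the radial coordinate.

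For the meromorphic continuation in (iii), the mechanism differs by field. In the $\mathbb{R}$-field case I would invoke the standard Bernstein--Sato/integration-by-parts trick already underlying Lemma \ref{Lemma0}: using $y_{i}\partial_{y_{i}}|y_{i}|^{\alpha_{i}}=\alpha_{i}|y_{i}|^{\alpha_{i}}$ in the real case, and the complex analogue involving both $\partial_{z_{i}}$ and $\partial_{\bar z_{i}}$ in the complex case (the latter shifting by $\tfrac{1}{2}$ rather than $1$), one extends $J_{\mathbb{K}}(\boldsymbol{s})$ meromorphically across each candidate polar hyperplane, recovering the $t\in\mathbb{N}$ versus $t\in\tfrac{1}{2}\mathbb{N}$ dichotomy in the statement. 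In the $p$-field case no Archimedean tool is needed: once $\Phi$ is written as above, $J_{\mathbb{K}}(\boldsymbol{s})$ is, \emph{ab initio}, a finite sum of products of geometric series, hence a rational function in $q^{-s_{1}},\ldots,q^{-s_{m}}$ with denominator $\prod_{i}\bigl(1-q^{-\sum_{j}a_{j,i}s_{j}-b_{i}}\bigr)$, producing exactly the announced pole locus.

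The principal obstacle I anticipate is not analytic but one of bookkeeping. Since $\Phi$ may depend holomorphically on $\boldsymbol{s}$, the Fubini reduction must be carried out with enough uniformity to guarantee that the resulting continuations are holomorphic jointly in $\boldsymbol{s}$; in the non-Archimedean setting one must additionally verify that local constancy is preserved under the coordinate splittings used to isolate each monomial factor, so that the geometric-series calculation really applies in a neighborhood of every point of the support. Once this bookkeeping is in place, no genuinely new idea beyond those already appearing in \cite{Igusa-old}, \cite{Igusa}, \cite{Loeser} is required, and the three cases assemble into the single statement of the lemma.
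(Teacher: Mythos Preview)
The paper does not actually supply a proof of Lemma~\ref{Lemma10}; it is stated as a well-known variant of Lemma~\ref{Lemma0}, with the latter already deferred to the references \cite[Chap.~II, \S 7, Lemme~4]{AVG}, \cite[Lemme~3.1]{D-S}, \cite[Chap.~I, Sect.~3.2]{G-S}, and \cite[Lemma~4.5]{Igusa-old}. Your proposal reproduces precisely the standard argument found in those sources: bound $|\Phi|$ and use Fubini to reduce convergence to the one-variable model integrals (correctly computed in all three cases), then obtain meromorphic continuation by integration by parts/Taylor expansion in the Archimedean case and by writing $\Phi$ as a finite sum of characteristic functions of balls in the $p$-field case. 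This is exactly the intended route, and your identification of the $t\in\tfrac{1}{2}\mathbb{N}$ shift over $\mathbb{C}$ via $|y|_{\mathbb{C}}=|y|^{2}$ is the right mechanism.

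One small remark: your identity $y_{i}\partial_{y_{i}}|y_{i}|^{\alpha_{i}}=\alpha_{i}|y_{i}|^{\alpha_{i}}$ is correct but not quite the form one integrates by parts against; in practice one uses $\partial_{y_{i}}\bigl(\mathrm{sgn}(y_{i})|y_{i}|^{\alpha_{i}}\bigr)=\alpha_{i}|y_{i}|^{\alpha_{i}-1}$, or more cleanly the Taylor expansion of $\Phi$ in $y_{1},\ldots,y_{r}$ about the origin, which avoids the stray sign factors and makes the pole structure transparent. Either variant works, and your bookkeeping concern about the joint holomorphy in $\boldsymbol{s}$ is handled by the uniform bounds on compacta that you already invoke for Morera's theorem.
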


Also, Remark \ref{rem: holom} extends to this more general setting.

\begin{remark}
\label{Nota_Lemma_10}Theorem \ref{thm: num data and poles} then extends in an
obvious way to this more general setting. In\textrm{ }addition, in the
$p$-field case, the integral $Z_{\Theta}\left(  \boldsymbol{f},\boldsymbol{s}%
\right)  $ admits a meromorphic continuation as a rational function
\[
Z_{\Theta}\left(  \boldsymbol{f},\boldsymbol{s}\right)  =\frac{P_{\Theta
}\left(  \boldsymbol{s}\right)  }{\prod\limits_{i\in T}\left(  1-q^{-\left(
{\textstyle\sum\limits_{j=1}^{m}}
N_{f_{j},i}s_{j}+v_{i}\right)  }\right)  }%
\]
in $q^{-s_{1}},\ldots,q^{-s_{m}}$, where $P_{\Theta}\left(  \boldsymbol{s}%
\right)  $ is a polynomial in the variables $q^{-s_{i}}$, and the real parts
of its poles belong to the finite union of hyperplanes
\begin{equation}%
{\textstyle\sum\limits_{j=1}^{m}}
N_{f_{j},i}s_{j}+v_{i}=0,\quad\text{ for }i\in T, \label{INE_1}%
\end{equation}
cf. \cite[Th\'{e}or\`{e}me 1.1.4.]{Loeser}.
\end{remark}

\subsection{Meromorphic Continuation of Local Zeta Functions: General Case}

\begin{theorem}
\label{TheoremA} Let $\mathbb{K}$ be a local field of characteristic zero. The
Koba-Nielsen local zeta function $Z_{\mathbb{K}}^{(N)}(\boldsymbol{s})$ is a
holomorphic function in the solution set $\cap_{I}\mathcal{H}(I)$, see
(\ref{Eq_7}), in $\mathbb{C}^{\mathbf{d}}$, which contains the set
\[
-\frac{2}{N-2}<\operatorname{Re}(s_{ij})<-\frac{2}{N}\qquad\text{ for all
}ij\text{.}%
\]
Furthermore, it has a meromorphic continuation, denoted again as
$Z_{\mathbb{K}}^{(N)}(\boldsymbol{s})$, to the whole $\mathbb{C}^{\mathbf{d}}%
$. If $\mathbb{K}$ is an $\mathbb{R}$-field, the poles belong to $\cup
_{I}\mathcal{P}(I)$, see (\ref{Eq_7A}), where now $t\in\frac{1}{2}\mathbb{N}$
for $\mathbb{K}=\mathbb{C}$. If $\mathbb{K}$ is a $p$-field, then this
meromorphic continuation is a rational function in the variables $q^{-s_{ij}}%
$, with poles having real parts belonging to
\[
\bigcup\limits_{k\in T(I)}\left\{  s_{ij}\in\mathbb{R}^{\mathbf{d}}%
;\sum_{ij\in M\left(  I\right)  }N_{ij,k}(I)s_{ij}+\gamma_{k}(I)=0\right\}  ,
\]
where $N_{ij,k}\left(  I\right)  ,\gamma_{k}\left(  I\right)  \in\mathbb{Z}$,
and $M(I)$, $T(I)$ are finite sets. More precisely, for each $k$, either all
numbers $N_{ij,k}\left(  I\right)  $ are equal to $0$ or $1$ and $\gamma
_{k}\left(  I\right)  >0$, or all numbers $N_{ij,k}\left(  I\right)  $ are
equal to $0$ or $-1$ and $\gamma_{k}\left(  I\right)  <0$.
\end{theorem}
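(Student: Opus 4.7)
The plan is to mimic the proof strategy of Theorem \ref{TheoremB}, systematically replacing Lemma \ref{Lemma0} by its general-field counterpart Lemma \ref{Lemma10}, and using the $\mathbb{K}$-analogue of the cutoff $\chi$ in Definition \ref{Definition_phi}. First I would introduce a test function $\chi_{\mathbb{K}} \in \mathcal{D}(\mathbb{K})$ with $\chi_{\mathbb{K}} \equiv 1$ on a neighborhood of the closure of $\{|x|_{\mathbb{K}}\le 1\}$ in $\mathbb{K}$ and compactly supported, smooth in the $\mathbb{R}$-field case and locally constant (the indicator of a large ball) in the $p$-field case. From this I build the analogues $\varphi_{I,\mathbb{K}}$, $\widetilde{\varphi}_{I,\mathbb{K}}$ of the partition of unity, obtaining the decomposition $Z_{\mathbb{K}}^{(N)}(\boldsymbol{s}) = \sum_{I\subseteq\{2,\ldots,N-2\}} Z_{I,\mathbb{K}}^{(N)}(\boldsymbol{s})$ exactly as in (\ref{Eq_2A})--(\ref{Eq_3}). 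For $I\ne\{2,\ldots,N-2\}$, the inversion $x_i\mapsto x_i^{-1}$ for $i\notin I$ is still valid over $\mathbb{K}$ because the modulus $|\cdot|_{\mathbb{K}}$ transforms the Haar measure by $|x_i|_{\mathbb{K}}^{-2}$; this gives the same expression as (\ref{Eq 4}) with $|\cdot|$ replaced by $|\cdot|_{\mathbb{K}}$.

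Second, I apply Hironaka's theorem, which is valid over any characteristic zero field, to obtain embedded resolutions of $D_N$ (case $I=\{2,\ldots,N-2\}$) and of the hyperplane arrangements $D_I\cup D_{I,0}$ (case $I\ne\{2,\ldots,N-2\}$) appearing in Section \ref{Case I different}. Crucially, the geometric resolutions constructed in Propositions \ref{Prop1} and \ref{Prop2} are purely combinatorial: they consist of sequences of blow-ups in smooth centres that are complete intersections of coordinate-type hyperplanes, with explicit monomial charts defined by identical change-of-variables formulas over any local field. Hence the numerical data $(N_{f_j,i}, v_i)$ and the resulting inequalities of Propositions \ref{Prop1}, \ref{Prop2}, and \ref{exact convergence domain} are the same over $\mathbb{K}$ as over $\mathbb{R}$. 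Pulling back $Z_{I,\mathbb{K}}^{(N)}(\boldsymbol{s})$ and using a partition of unity on the resolution (smooth or locally constant depending on $\mathbb{K}$), each $Z_{I,\mathbb{K}}^{(N)}(\boldsymbol{s})$ becomes a finite sum of monomial integrals of exactly the form handled by Lemma \ref{Lemma10}. By part (ii) of that lemma, each monomial integral is holomorphic in the cone $\mathcal{H}(I)$ defined by the same linear inequalities (\ref{Eq_7}) as in the real case, and the intersection $\cap_I\mathcal{H}(I)$ still contains the open set $-\frac{2}{N-2}<\operatorname{Re}(s_{ij})<-\frac{2}{N}$ by the arguments of Propositions \ref{Prop1} and \ref{Prop2}.

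Third, meromorphic continuation and the description of the polar locus follow from part (iii) of Lemma \ref{Lemma10}. For $\mathbb{K}=\mathbb{R}$ this reproduces Theorem \ref{TheoremB}; for $\mathbb{K}=\mathbb{C}$ the integer shifts $t\in\mathbb{N}$ are replaced by half-integers $t\in\frac{1}{2}\mathbb{N}$ because $|z|_{\mathbb{C}}=|z|^2$ doubles all exponents in the monomial phase; and for $\mathbb{K}$ a $p$-field one obtains instead a rational function in the variables $q^{-s_{ij}}$, whose poles have real parts lying on the candidate hyperplanes $\sum_{ij\in M(I)} N_{ij,k}(I)s_{ij}+\gamma_k(I)=0$, as described in Remark \ref{Nota_Lemma_10}. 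Assembling these local descriptions via (\ref{Eq_2A}) gives the claimed global meromorphic continuation of $Z_{\mathbb{K}}^{(N)}(\boldsymbol{s})$ with polar set contained in $\cup_I\mathcal{P}(I)$ (respectively in the real-part hyperplane arrangement for $p$-fields). The sign structure of $N_{ij,k}(I)$ and $\gamma_k(I)$ is inherited directly from the proofs of Propositions \ref{Rop1} and \ref{Prop2}: the centres of blow-up disjoint from $D_{I,0}$ produce inequalities of the first type (coefficients $0$ or $1$, $\gamma_k(I)>0$), while the centres contained in $D_{I,0}$ produce those of the second type (coefficients $0$ or $-1$, $\gamma_k(I)<0$).

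The main obstacle is essentially bookkeeping rather than substance: one must verify that the cutoff-function and partition-of-unity constructions behave uniformly over $\mathbb{K}$, and that the numerical data read off from the real-case blow-up calculations remain the same in every local field, so that no new pole hyperplanes and no weaker convergence conditions appear. Since Hironaka's theorem is characteristic-zero-uniform and the explicit blow-up charts in Propositions \ref{Prop1} and \ref{Prop2} are defined by polynomial change of variables, this verification is straightforward and the full proof of Theorem \ref{TheoremA} is a routine transcription of that of Theorem \ref{TheoremB}.
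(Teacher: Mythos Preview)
Your proposal is correct and follows essentially the same approach as the paper: introduce $\mathbb{K}$-analogues of the cutoff $\chi$ (the paper gives explicit choices $\chi_{\mathbb{C}}(z)=\chi(|z|_{\mathbb{C}})$ and $\chi_{\mathfrak{p}}=\mathbf{1}_{R_{\mathbb{K}}}$), rebuild the partition $\varphi_I$ and the decomposition (\ref{Eq_2A}), and then rerun the proofs of Propositions \ref{Prop1} and \ref{Prop2} with Lemma \ref{Lemma10} in place of Lemma \ref{Lemma0}, using that all blow-up centres are defined over $\mathbb{Q}$ and hence give identical numerical data over any local field of characteristic zero. (Minor typo: your reference \verb|\ref{Rop1}| should be \verb|\ref{Prop1}|.)
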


The proof of Theorem \ref{TheoremA} is a slight variation of the proof of
Theorem \ref{TheoremB}. We just indicate the required modifications. The first
step is to express $Z_{\mathbb{K}}^{(N)}(\boldsymbol{s})$ as a finite sum of
multivariate local zeta functions, see (\ref{Eq_2A}). This requires
introducing an analogue of the functions $\chi$, see (\ref{Function_Chi}), and
$\varphi_{I}$, see (\ref{Function_phi}). We first define the analogue of
$\chi$ in the complex case. We recall that an element of $\mathcal{D}%
(\mathbb{C}^{n})$ is a $C^{\infty}$ function in the variables $z_{1}%
,\overline{z_{1}},\ldots,z_{n},\overline{z_{n}}$ (or in $\operatorname{Re}%
(z_{1}),\operatorname{Im}\left(  z_{1}\right)  ,\ldots,\operatorname{Re}%
(z_{n}),\operatorname{Im}\left(  z_{n}\right)  $). We pick, for $z=x+iy$
($x,y\in\mathbb{R}$),%
\[
\chi_{_{\mathbb{C}}}\left(  z\right)  :=\chi\left(  \left\vert z\right\vert
_{\mathbb{C}}\right)  =\chi\left(  x^{2}+y^{2}\right)  ,
\]
where $\chi$ is defined as in (\ref{Function_Chi}). Then $\chi_{_{\mathbb{C}}%
}$ is a $C^{\infty}$ function in the variables $x,y$ satisfying
\[
\chi_{_{\mathbb{C}}}\left(  z\right)  =\left\{
\begin{array}
[c]{lll}%
1 & \text{if} & 0\leq\left\vert z\right\vert _{\mathbb{C}}\leq2\\
&  & \\
0 & \text{if} & \left\vert z\right\vert _{\mathbb{C}}\geq2+\epsilon.
\end{array}
\right.
\]
We now define the function $\varphi_{I}$ as in (\ref{Function_phi}). In the
$p$-field case, we use
\[
\chi_{\mathfrak{p}}\left(  z\right)  =\left\{
\begin{array}
[c]{lll}%
1 & \text{if} & \left\vert z\right\vert _{\mathbb{K}}\leq1\\
&  & \\
0 & \text{if} & \left\vert z\right\vert _{\mathbb{K}}>1.
\end{array}
\right.
\]
Now the proof follows line by line the one given for Theorem \ref{TheoremB}.
This is possible because, for any $\mathbb{K}$, all the required blow-ups and
centres are defined over the field of rational numbers.

\begin{remark}
Note that the convergence domains of $Z_{\mathbb{R}}^{(N)}(s)$ and
$Z_{\mathbb{C}}^{(N)}(s)$ are exactly the same, due to the conventions
$\left\vert a\right\vert _{\mathbb{R}}=\left\vert a\right\vert $ ($a\in R$)
and $\left\vert a\right\vert _{\mathbb{C}}=\left\vert a\right\vert ^{2}$
($a\in C$) in the defining integrals. 

The actual convergence domain of $Z_{\mathbb{C}}^{(N)}(s)$ can thus be
described as in Proposition \ref{exact convergence domain}, and in particular
for $N=4,5,6$ as in Examples \ref{Example_N_4}, \ref{Example_N_5},
\ref{Example_N_6}, respectively.\textrm{ }
\end{remark}

\subsection{A result of Vanhove and Zerbini}

\label{Vanhove and Zerbini}

In \cite[Proposition 7.2]{Vanhove et al}, Vanhove and Zerbini also studied the
domain of convergence of
\[
Z_{\mathbb{C}}^{(N)}\left(  \boldsymbol{s}\right)  ={\int\limits_{\left(
\mathbb{P}_{\mathbb{C}}^{1}\right)  ^{N-3}}}%
{\displaystyle\prod\limits_{i=2}^{N-2}}
\left\vert x_{j}\right\vert ^{2s_{1j}}\left\vert 1-x_{j}\right\vert
^{2s_{(N-1)j}}\text{ }%
{\displaystyle\prod\limits_{2\leq i<j\leq N-2}}
\left\vert x_{i}-x_{j}\right\vert ^{2s_{ij}}%
{\displaystyle\prod\limits_{i=2}^{N-2}}
dx_{i}.
\]
Note that this integral is indeed our $Z_{\mathbb{C}}^{(N)}\left(
\boldsymbol{s}\right)  $,\ since $\left(  \mathbb{P}_{\mathbb{C}}^{1}\right)
^{N-3}$ differs from $\mathbb{C}^{N-3}$ only by a set of measure zero and
$\left\vert a\right\vert _{\mathbb{C}}=\left\vert a\right\vert ^{2}$ for
$a\in\mathbb{C}$. We claim that, for $N\geq5$, the convergence domain they
describe is too large. In order to explain this, we first compare their result
to ours in the illustrative case $N=5$, that is,%
\[
Z_{\mathbb{C}}^{(5)}\left(  \boldsymbol{s}\right)  ={\int\limits_{\left(
\mathbb{P}_{\mathbb{C}}^{1}\right)  ^{2}}}%
{\displaystyle\prod\limits_{j=2}^{3}}
\left\vert x_{j}\right\vert ^{2s_{1j}}\left\vert 1-x_{j}\right\vert ^{2s_{4j}%
}\text{ }\left\vert x_{2}-x_{3}\right\vert ^{2s_{23}}dx_{2}dx_{3}.
\]
We established in  Example \ref{Example_N_5}  that $Z_{\mathbb{C}}%
^{(5)}\left(  \boldsymbol{s}\right)  $ is \ convergent (and holomorphic) in
the domain%
\begin{equation}
\left\{
\begin{array}
[c]{l}%
{\operatorname{Re}}(s_{12})>-1,{\operatorname{Re}}(s_{13}%
)>-1,{\operatorname{Re}}(s_{42})>-1,{\operatorname{Re}}(s_{43}%
)>-1,{\operatorname{Re}}(s_{23})>-1;\\
{\operatorname{Re}}(s_{12})+\operatorname{Re}(s_{13})+\operatorname{Re}%
(s_{23})>-2;\\
\operatorname{Re}(s_{42})+\operatorname{Re}(s_{43})+\operatorname{Re}%
(s_{23})>-2;\\
\operatorname{Re}(s_{12})+\operatorname{Re}(s_{42})+\operatorname{Re}%
(s_{23})<-1;\\
\operatorname{Re}(s_{13})+\operatorname{Re}(s_{43})+\operatorname{Re}%
(s_{23})<-1;\\
\operatorname{Re}(s_{12}+s_{13}+s_{42}+s_{43}+s_{23})<-2,
\end{array}
\right.  \label{Domain_1}%
\end{equation}
Now, according to \cite{Vanhove et al}, the integral $Z_{\mathbb{C}}%
^{(5)}\left(  \boldsymbol{s}\right)  $ is convergent in the larger domain
\begin{equation}
\left\{
\begin{array}
[c]{l}%
\operatorname{Re}\left(  s_{12}\right)  >-1;\operatorname{Re}\left(
s_{13}\right)  >-1;\operatorname{Re}\left(  s_{42}\right)
>-1;\operatorname{Re}\left(  s_{43}\right)  >-1;\operatorname{Re}\left(
s_{23}\right)  >-1;\\
\operatorname{Re}\left(  s_{12}+s_{13}+s_{23}\right)  >-2\\
\operatorname{Re}\left(  s_{43}+s_{42}+s_{23}\right)  >-2\\
\operatorname{Re}\left(  s_{12}+s_{13}+s_{42}+s_{43}+s_{23}\right)  <-2,
\end{array}
\right.  \label{Domain_2}%
\end{equation}
i.e., (\ref{Domain_1}) without the two inequalities on the fourth and fifth
lines. In order to compare with \cite[Section 7.1]{Vanhove et al}, we mention
that they use the notation $\left(  a_{1},a_{2},b_{1},b_{2},c_{12}\right)  $
for $\left(  s_{12},s_{13},s_{42},s_{43},s_{23}\right)  $ and $k$ for $N-3$.
There is a small typo in the inequalities on the second line on their (7.8),
corresponding to (\ref{Domain_2}) above. According to the outline of the proof
of \cite[Proposition 7.2]{Vanhove et al}, the domain (\ref{Domain_2}) is
obtained by using the partition
\[
\mathbb{C}^{2}=\left\{  \left(  x_{2},x_{3}\right)  \in\mathbb{C}%
^{2};\left\vert x_{2}\right\vert \leq\left\vert x_{3}\right\vert \right\}
\bigsqcup\left\{  \left(  x_{2},x_{3}\right)  \in\mathbb{C}^{2};\left\vert
x_{3}\right\vert <\left\vert x_{2}\right\vert \right\}  ,
\]
and and performing a changes of variables of type $u=\frac{x_{2}}{x_{3}}$ and
$u^{\prime}=\frac{x_{3}}{x_{2}}$ in the first and second region, respectively.
In this way they derive the regions where the integral is convergent near
$(0,0)$ and at infinity; the region where the integral is convergent near
$(1,1)$ is then obtained by substituting $s_{1j}$ with $s_{4j}$.

However, the authors do not write any details about their computations, and
they forget some inequalities, probably because they do not analyse the
situation `at infinity'\ carefully. Their last inequality is needed for
convergence near $(\infty,\infty)$, and the two missing ones for convergence
near $(\infty,0)$ and $(0,\infty)$, respectively. For $(\infty,0)$, one must
perform the change of variable $x_{2}\rightarrow1/x_{2}$, leading in Example
\ref{Example_N_5} to the study of convergence of the integral
(\ref{first infinity}) around the origin, and hence to the first missing
condition (\ref{ex4}). Similarly, convergence near $(0,\infty)$ leads to the
second missing condition (\ref{ex5}).

Note that the domain (\ref{Domain_2}) is really larger than our
(\ref{Domain_1}); for example the point $\boldsymbol{s}_{0}:=(s_{12}%
,s_{13},s_{42},s_{43},s_{23})=(-1/4,-2/3,-2/3,-2/3,0)$ belongs to the
difference set. Then
\[
\begin{aligned}
Z_{\mathbb{C}}^{(5)}\left(  \boldsymbol{s}_0\right)&=
{\displaystyle\int\limits_{\mathbb{C}^{2}}}
\left\vert x_{2}\right\vert ^{2\cdot (-\frac14 )}
\left\vert x_{3}\right\vert ^{2\cdot (-\frac23) }
\left\vert 1-x_{2}\right\vert^{2 \cdot (-\frac 23)}
\left\vert 1-x_{3}\right\vert^{2 \cdot (-\frac 23)}
dx_{2}dx_{3}\\
&= {\displaystyle\int\limits_{\mathbb{C}}}
\left\vert x_{2}\right\vert ^{2\cdot (-\frac14 )}   \left\vert 1-x_{2}\right\vert^{2 \cdot (-\frac 23)} dx_{2} \cdot
{\displaystyle\int\limits_{\mathbb{C}}}
\left\vert x_{3}\right\vert ^{2\cdot (-\frac23) }   \left\vert 1-x_{3}\right\vert^{2 \cdot (-\frac 23)} dx_{3},
\end{aligned}
\]
should be a convergent integral according to \cite[Proposition 7.2 and
(7.8)]{Vanhove et al}. The integral in $x_{2}$ however does \emph{not}
converge. Indeed, a necessary condition for convergence of the integral
\[
{\int\limits_{\mathbb{C}}\left\vert x\right\vert ^{-2\alpha}\left\vert
1-x\right\vert ^{-2\beta}dx}%
\]
is that $\alpha+\beta>1$. This is in fact part of the description of the
convergence domain of $Z_{\mathbb{C}}^{(4)}\left(  \boldsymbol{s}\right)  $ in
Example \ref{Example_N_4} and in \cite[(7.7)]{Vanhove et al}, which indeed
does not contain $(-\frac{1}{4},-\frac{2}{3})$.

More generally, for $N\geq5$, the statement in \cite[Proposition 7.2]{Vanhove
et al} misses the $2^{N-3}-2$ inequalities in (\ref{EQ E}) with $J\subsetneq
\{1,\dots,N-3\}$. For $N=6$, these are the six inequalities of the form
$\dots<-1$ or $\dots<-2$ in Example \ref{Example_N_6}.

\section{Meromorphic Continuation of Koba-Nielsen String Amplitudes over local
fields of characteristic zero}

We set%
\begin{equation}
A_{\mathbb{K}}^{(N)}\left(  \boldsymbol{k}\right)  =Z_{\mathbb{K}}%
^{(N)}(\boldsymbol{s})\mid_{s_{ij}=\boldsymbol{k}_{i}\boldsymbol{k}_{j}},
\label{Eq_redidifinition_Amplitude}%
\end{equation}
where the momenta vectors $\boldsymbol{k}_{i}$ belong to $\mathbb{C}^{l+1}$,
where $l$ is an arbitrary positive integer. Typically $l$ is taken to be $25$.
By Theorem \ref{TheoremA}, $Z_{\mathbb{K}}^{(N)}(\boldsymbol{s})$ has a
meromorphic continuation to the whole $\mathbb{C}^{\boldsymbol{d}}$.
This section is dedicated to the study of $A_{\mathbb{K}}^{(N)}\left(
\boldsymbol{k}\right)  $ as a meromorphic function.

\begin{remark}
Strictly speaking $A_{\mathbb{K}}^{(N)}\left(  \boldsymbol{k}\right)  $,
$\boldsymbol{k=}\left(  \boldsymbol{k}_{1},\ldots,\boldsymbol{k}_{N}\right)
\in\mathbb{R}^{N(l+1)}$, is defined in the real affine set%
\begin{equation}
\left\{  \left(  \boldsymbol{k}_{1},\ldots,\boldsymbol{k}_{N}\right)
\in\mathbb{R}^{N(l+1)};\sum_{i=1}^{N}\boldsymbol{k}_{i}=\boldsymbol{0}\text{,
\ \ \ \ \ }\boldsymbol{k}_{i}\boldsymbol{k}_{i}=2\text{ \ for }i=1,\ldots
,N\right\}  , \label{Kineematic_restritions}%
\end{equation}
where the mass of the tachyons is normalized \ as $m^{2}=-2$, in the case of
open strings, and $m^{2}=-4$, in the case of closed strings. In Theorem
\ref{TheoremC} we discard the kinematic restrictions
(\ref{Kineematic_restritions}), and consider $A_{\mathbb{K}}^{(N)}\left(
\boldsymbol{k}\right)  $ as defined in $\boldsymbol{k=}\left(  \boldsymbol{k}%
_{1},\ldots,\boldsymbol{k}_{N-1}\right)  \in\mathbb{C}^{\left(  N-1\right)
(l+1)}$. At the end of this section we discuss the convergence of the string
amplitudes considering the kinematic restrictions
(\ref{Kineematic_restritions}).
\end{remark}

\subsection{Convergence of the Koba-Nielsen amplitudes}

Here we consider
\[
\boldsymbol{k}_{i}=(k_{0,i},k_{1,i},\ldots,k_{l,i})\in\mathbb{C}^{l+1}\text{,
for \ }i=1,2,\ldots,N-1.
\]
Given two real numbers $B$ and $C$ such that $0<B<C<\frac{1}{N-2}$ and
$C-B=\frac{1}{N}$, we define the open subset $\mathcal{U}\subset$
$\mathbb{C}^{\left(  N-1\right)  (l+1)}$ as
\begin{equation}
\mathcal{U}:=\mathcal{U}_{1}\times\cdots\times\mathcal{U}_{N-1},
\label{Definition_set_U}%
\end{equation}
where each $\mathcal{U}_{i}\subset$ $\mathbb{C}^{l+1}$ is itself an open
subset defined as%
\[
\mathcal{U}_{i}=\mathcal{U}_{0,i}\times\mathcal{U}_{1,i}\times\cdots
\times\mathcal{U}_{l,i},
\]
where
\[
\mathcal{U}_{0,i}:=\left\{  k_{0,i}\in\mathbb{C};\ \sqrt{C}<\operatorname{Re}%
(k_{0,i})<\sqrt{\frac{1}{N-2}}\quad\text{ and}\quad0<\mathrm{\operatorname{Im}%
}(k_{0,i})<\sqrt{B}\right\}
\]
and
\[
\mathcal{U}_{m,i}:=\left\{  k_{m,i}\in\mathbb{C};\ 0<\mathrm{\operatorname{Re}%
}(k_{m,i})<\sqrt{\frac{B}{l}}\quad\text{ and}\quad\sqrt{\frac{C}{l}%
}<\mathrm{\operatorname{Im}}(k_{m,i})<\sqrt{\frac{1}{l(N-2)}}\right\}  ,
\]
for $m=1,\dots,l$.

\begin{proposition}
\label{Lemma_Convergence}
All elements $(\boldsymbol{k}_{1},\dots, \boldsymbol{k}_{N-1}) $ in
$\mathcal{U}\subset$ $\mathbb{C}^{\left(  N-1\right)  (l+1)}$ satisfy
\begin{equation}
-\frac{2}{N-2}<\operatorname{Re}(\boldsymbol{k}_{i}\boldsymbol{k}_{j}%
)<-\frac{2}{N}\qquad\text{for all }i,j\in\{1,2,\ldots,N-1\}.
\label{Conclusion}%
\end{equation}

\end{proposition}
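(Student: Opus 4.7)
The plan is to prove this by direct computation. Write each complex coordinate as $k_{m,i} = a_{m,i} + \sqrt{-1}\,b_{m,i}$ with $a_{m,i}, b_{m,i} \in \mathbb{R}$, and apply $\operatorname{Re}(zw) = \operatorname{Re}(z)\operatorname{Re}(w) - \operatorname{Im}(z)\operatorname{Im}(w)$ term-by-term in the Minkowski product, obtaining
\[
\operatorname{Re}(\boldsymbol{k}_{i}\boldsymbol{k}_{j}) = -a_{0,i}a_{0,j} + b_{0,i}b_{0,j} + \sum_{m=1}^{l}a_{m,i}a_{m,j} - \sum_{m=1}^{l}b_{m,i}b_{m,j}.
\]
The whole argument is then to bound this expression from above by $-2/N$ and from below by $-2/(N-2)$ using the constraints built into the definition of $\mathcal{U}$.

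For the \emph{upper} bound, observe that the construction of $\mathcal{U}$ forces $a_{0,i} > \sqrt{C}$ and $b_{m,i} > \sqrt{C/l}$ for every $i$ and every $m \geq 1$, so the two negative contributions each dominate $C$: specifically $-a_{0,i}a_{0,j} < -C$ and $-\sum_{m=1}^{l}b_{m,i}b_{m,j} < -l \cdot (C/l) = -C$. Simultaneously, the positive contributions are controlled by $b_{0,i} < \sqrt{B}$ and $a_{m,i} < \sqrt{B/l}$, giving $b_{0,i}b_{0,j} < B$ and $\sum_{m=1}^{l}a_{m,i}a_{m,j} < B$. Summing, $\operatorname{Re}(\boldsymbol{k}_{i}\boldsymbol{k}_{j}) < -2C + 2B = -2(C-B) = -2/N$, where the last equality uses the defining relation $C - B = 1/N$.

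For the \emph{lower} bound, use the opposite set of inequalities defining $\mathcal{U}$: the real parts $a_{0,i}$ and the imaginary parts $b_{m,i}$ (for $m \geq 1$) are bounded \emph{above} by $\sqrt{1/(N-2)}$ and $\sqrt{1/(l(N-2))}$ respectively, so $-a_{0,i}a_{0,j} > -1/(N-2)$ and $-\sum_{m=1}^{l}b_{m,i}b_{m,j} > -l \cdot 1/(l(N-2)) = -1/(N-2)$. The remaining two terms $b_{0,i}b_{0,j}$ and $\sum_{m=1}^{l}a_{m,i}a_{m,j}$ are nonnegative because $b_{0,i} > 0$ and $a_{m,i} > 0$ on $\mathcal{U}$. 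Summing yields $\operatorname{Re}(\boldsymbol{k}_{i}\boldsymbol{k}_{j}) > -2/(N-2)$, as required.

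There is essentially no obstacle: the proof is pure bookkeeping, and all cleverness is absorbed in the choice of the open set $\mathcal{U}$. The crucial design features are that the time component (with the negative Minkowski sign) has its real part made large and its imaginary part made small, while the spatial components (with the positive Minkowski sign) have the opposite: so all four summands contribute \emph{negatively} to $\operatorname{Re}(\boldsymbol{k}_{i}\boldsymbol{k}_{j})$, and the normalization $C - B = 1/N$ tunes the upper bound to land precisely at $-2/N$. Note that the argument treats $i = j$ and $i \neq j$ uniformly, since the products $a_{m,i}a_{m,j}$ and $b_{m,i}b_{m,j}$ are bounded in the same way in either case.
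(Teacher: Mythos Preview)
Your proof is correct and follows essentially the same approach as the paper: expand $\operatorname{Re}(\boldsymbol{k}_i\boldsymbol{k}_j)$ via $\operatorname{Re}(zw)=\operatorname{Re}(z)\operatorname{Re}(w)-\operatorname{Im}(z)\operatorname{Im}(w)$ and bound each piece using the defining inequalities of $\mathcal{U}$ together with $C-B=1/N$. The only cosmetic difference is that the paper groups the four summands into two pairs, each bounded between $-\tfrac{1}{N-2}$ and $-\tfrac{1}{N}$, whereas you bound all four terms individually; the arithmetic is identical.
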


\begin{proof}
By definition of the Minkowski product, we have for all $i,j$ that
\[
\boldsymbol{k}_{i}\boldsymbol{k}_{j}=-k_{0,i}k_{0,j}+\sum_{m=1}^{l}%
k_{m,i}k_{m,j},
\]
and hence
\begin{align*}
\mathrm{\operatorname{Re}}(\boldsymbol{k}_{i}\boldsymbol{k}_{j})  &
=-\mathrm{\operatorname{Re}}(k_{0,i})\mathrm{\operatorname{Re}}(k_{0,j}%
)+\sum_{m=1}^{l}\mathrm{\operatorname{Re}}(k_{m,i})\mathrm{\operatorname{Re}%
}(k_{m,j})\\
&  +\mathrm{\operatorname{Im}}(k_{0,i})\mathrm{\operatorname{Im}}%
(k_{0,j})-\sum_{m=1}^{l}\mathrm{\operatorname{Im}}(k_{m,i}%
)\mathrm{\operatorname{Im}}(k_{m,j}).
\end{align*}
Now for any $(\boldsymbol{k}_{1},\ldots,\boldsymbol{k}_{N-1})\in\mathcal{U}$
we have
\[
0<\sum_{m=1}^{l}\mathrm{\operatorname{Re}}(k_{m,i})\mathrm{\operatorname{Re}%
}(k_{m,j})<B \quad\text{ and } \quad C<\mathrm{\operatorname{Re}}%
(k_{0,i})\mathrm{\operatorname{Re}}(k_{0,j})<\frac{1}{N-2}%
\]
and
\[
C<\sum_{m=1}^{l}\mathrm{\operatorname{Im}}(k_{m,i})\mathrm{\operatorname{Im}%
}(k_{m,j})<\frac{1}{N-2} \quad\text{ and } \quad0<\mathrm{\operatorname{Im}%
}(k_{0,i})\mathrm{\operatorname{Im}}(k_{0,j})<B,
\]
and thus
\begin{equation}
-\frac{1}{N-2}<-\mathrm{\operatorname{Re}}(k_{0,i})\mathrm{\operatorname{Re}%
}(k_{0,j})+\sum_{m=1}^{l}\mathrm{\operatorname{Re}}(k_{m,i}%
)\mathrm{\operatorname{Re}}(k_{m,j})<B-C=-\frac{1}{N} \label{D2}%
\end{equation}
and
\begin{equation}
-\frac{1}{N-2}<\mathrm{\operatorname{Im}}(k_{0,i})\mathrm{\operatorname{Im}%
}(k_{0,j})-\sum_{m=1}^{l}\mathrm{\operatorname{Im}}(k_{m,i}%
)\mathrm{\operatorname{Im}}(k_{m,j})<B-C=-\frac{1}{N}. \label{D3}%
\end{equation}
Finally (\ref{Conclusion}) follows from (\ref{D2}) and (\ref{D3}).
\end{proof}


\subsection{Meromorphic Continuation of Koba-Nielsen String Amplitudes}

\begin{theorem}
\label{TheoremC}Let $\mathbb{K}$ be a local field of characteristic zero. The
integral $A_{\mathbb{K}}^{(N)}\left(  \boldsymbol{k}\right)  $ converges and
is holomorphic in the open set $\mathcal{U}\subset\mathbb{C}^{\left(
N-1\right)  (l+1)}$. It extends to a meromorphic function in $\boldsymbol{k}$
on the whole $\mathbb{C}^{\left(  N-1\right)  (l+1)}$.

If $\mathbb{K}$ is an $\mathbb{R}$-field, then the possible poles of
$A_{\mathbb{K}}^{(N)}\left(  \boldsymbol{k}\right)  $ belong to
\begin{equation}
\label{polarlocusR}%
{\displaystyle\bigcup\limits_{I\subseteq\left\{  2,\ldots,N-2\right\}  }}
\bigcup\limits_{t\in\mathbb{N}}\bigcup\limits_{r\in T(I)}\left\{
\boldsymbol{k}\in\mathbb{C}^{\left(  N-1\right)  (l+1)};\sum_{ij\in M\left(
I\right)  }N_{ij,r}(I)\boldsymbol{k}_{i}\boldsymbol{k}_{j}+\gamma_{r}%
(I)+\frac{t}{\left[  \mathbb{K}:\mathbb{R}\right]  }=0\right\}  ,
\end{equation}
where $N_{ij,r}\left(  I\right)  ,\gamma_{r}\left(  I\right)  \in\mathbb{Z}$,
and $M(I)$, $T(I)$ are finite sets, and $\left[  \mathbb{K}:\mathbb{R}\right]
=1$ if $\mathbb{K=R}$, and $\left[  \mathbb{K}:\mathbb{R}\right]  =2$ if
$\mathbb{K=C}$. If $\mathbb{K}$ is a $p$-adic field, then $A_{\mathbb{K}%
}^{(N)}\left(  \boldsymbol{k}\right)  $ is a rational function in the
variables $q^{-\boldsymbol{k}_{i}\boldsymbol{k}_{j}}$, and its possible poles
belong to%
\[%
{\displaystyle\bigcup\limits_{I\subseteq\left\{  2,\ldots,N-2\right\}  }}
\bigcup\limits_{r\in T(I)}\left\{  \boldsymbol{k}\in\mathbb{C}^{\left(
N-1\right)  (l+1)};\sum_{ij\in M\left(  I\right)  }N_{ij,r}%
(I)\operatorname{Re}(\boldsymbol{k}_{i}\boldsymbol{k}_{j})+\gamma
_{r}(I)=0\right\}  .
\]
More precisely, for each $r$, either all numbers $N_{ij,r}\left(  I\right)  $
are equal to $0$ or $1$ and $\gamma_{r}\left(  I\right)  >0$, or all numbers
$N_{ij,r}\left(  I\right)  $ are equal to $0$ or $-1$ and $\gamma_{r}\left(
I\right)  <0$.
\end{theorem}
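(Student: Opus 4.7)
The strategy is to deduce Theorem~\ref{TheoremC} directly from Theorem~\ref{TheoremA} by pulling back the meromorphic function $Z_{\mathbb{K}}^{(N)}(\boldsymbol{s})$ along the polynomial map
\[
\Phi:\mathbb{C}^{(N-1)(l+1)}\longrightarrow\mathbb{C}^{\mathbf{d}},\qquad \boldsymbol{k}=(\boldsymbol{k}_{1},\ldots,\boldsymbol{k}_{N-1})\longmapsto \left(\boldsymbol{k}_{i}\boldsymbol{k}_{j}\right)_{ij},
\]
whose $\mathbf{d}$ components are the relevant Minkowski products. Each component is a quadratic polynomial in the coordinates of $\boldsymbol{k}$, so $\Phi$ is holomorphic (in fact polynomial) everywhere.

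For the first assertion, convergence and holomorphy of $A_{\mathbb{K}}^{(N)}(\boldsymbol{k})$ on $\mathcal{U}$, I would invoke Proposition~\ref{Lemma_Convergence}, which yields $\Phi(\mathcal{U})\subseteq\{\boldsymbol{s}\in\mathbb{C}^{\mathbf{d}}\,:\,-\tfrac{2}{N-2}<\operatorname{Re}(s_{ij})<-\tfrac{2}{N}\text{ for all }ij\}$. By Theorem~\ref{TheoremA}, this box is contained in $\bigcap_{I}\mathcal{H}(I)$, where $Z_{\mathbb{K}}^{(N)}(\boldsymbol{s})$ is a convergent (bona fide) integral and holomorphic. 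Thus $Z_{\mathbb{K}}^{(N)}\circ\Phi$ is holomorphic on $\mathcal{U}$, and on this open set it agrees, by Fubini, with the Koba-Nielsen integral $A_{\mathbb{K}}^{(N)}(\boldsymbol{k})$ defined in (\ref{Amplitude}).

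For the meromorphic continuation, Theorem~\ref{TheoremA} provides a meromorphic extension of $Z_{\mathbb{K}}^{(N)}(\boldsymbol{s})$ to all of $\mathbb{C}^{\mathbf{d}}$, with polar set contained in a locally finite union of hyperplanes $\bigcup_{I}\mathcal{P}(I)$. Because $\Phi(\mathcal{U})$ is nonempty and contained in the holomorphy locus of $Z_{\mathbb{K}}^{(N)}$, the image of $\Phi$ is not contained in any of these hyperplanes; therefore the pullback
\[
A_{\mathbb{K}}^{(N)}(\boldsymbol{k}):=Z_{\mathbb{K}}^{(N)}(\Phi(\boldsymbol{k}))
\]
is a well-defined meromorphic function on all of $\mathbb{C}^{(N-1)(l+1)}$ extending $A_{\mathbb{K}}^{(N)}$ from $\mathcal{U}$, and its polar locus is contained in $\Phi^{-1}(\bigcup_{I}\mathcal{P}(I))$. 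In the $\mathbb{R}$-field case, substituting $s_{ij}=\boldsymbol{k}_{i}\boldsymbol{k}_{j}$ in the hyperplane equations $\sum_{ij\in M(I)}N_{ij,r}(I)s_{ij}+\gamma_{r}(I)+t/[\mathbb{K}:\mathbb{R}]=0$ (with $t\in\mathbb{N}$, where the denominator $[\mathbb{K}:\mathbb{R}]$ absorbs the shift $t\in\tfrac{1}{2}\mathbb{N}$ arising in Theorem~\ref{TheoremA} when $\mathbb{K}=\mathbb{C}$) yields exactly (\ref{polarlocusR}). In the $p$-adic case, Theorem~\ref{TheoremA} combined with Remark~\ref{Nota_Lemma_10} asserts that $Z_{\mathbb{K}}^{(N)}(\boldsymbol{s})$ is rational in the variables $q^{-s_{ij}}$, so the substitution produces a rational function in the $q^{-\boldsymbol{k}_{i}\boldsymbol{k}_{j}}$, with real parts of its poles falling on the stated affine hypersurfaces. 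The integrality and sign constraints on $N_{ij,r}(I)$ and $\gamma_{r}(I)$ are inherited verbatim from Theorem~\ref{TheoremA}.

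The main conceptual difficulty, though modest, is verifying that the non-injective polynomial map $\Phi$ pulls meromorphic functions back to meromorphic functions. This relies on two facts: the polar set of $Z_{\mathbb{K}}^{(N)}$ is locally principal (a countable union of smooth hyperplanes, by Theorem~\ref{TheoremA}); and no irreducible component of the image of $\Phi$ lies in any such hyperplane (guaranteed by Proposition~\ref{Lemma_Convergence}, since $\Phi(\mathcal{U})$ is an open subset of the holomorphy locus). With these, a standard local argument writes $Z_{\mathbb{K}}^{(N)}\circ\Phi$ in a neighborhood of any point of $\mathbb{C}^{(N-1)(l+1)}$ as a quotient of a holomorphic function by a non-identically-zero polynomial, establishing genuine meromorphy. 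All remaining assertions are bookkeeping from Theorem~\ref{TheoremA}.
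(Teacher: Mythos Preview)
Your proposal is correct and takes essentially the same approach as the paper: define the polynomial map $\boldsymbol{k}\mapsto(\boldsymbol{k}_i\boldsymbol{k}_j)$, use Proposition~\ref{Lemma_Convergence} to land $\mathcal{U}$ inside the convergence domain of Theorem~\ref{TheoremA}, and pull back the meromorphic continuation of $Z_{\mathbb{K}}^{(N)}$ to obtain that of $A_{\mathbb{K}}^{(N)}$, with the polar locus and rationality statements read off directly from Theorem~\ref{TheoremA} and Remark~\ref{Nota_Lemma_10}. If anything, your version is slightly more careful than the paper's, since you explicitly address why the pullback along a non-injective polynomial map remains meromorphic (the image not being contained in the polar hyperplanes), a point the paper leaves implicit.
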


\begin{proof}
We consider the polynomial (and hence holomorphic) mapping
\[
M: \mathbb{C}^{\left(  N-1\right)  (l+1)} \to\mathbb{C}^{\boldsymbol{d}} =
\mathbb{C}^{\frac{(N(N-3)}{2}} : \boldsymbol{k}=\left(  \boldsymbol{k}%
_{1},\ldots,\boldsymbol{k}_{N-1}\right)  \mapsto\boldsymbol{s} = (s_{ij}),
\]
given by $s_{ij}= \boldsymbol{k}_{i}\boldsymbol{k}_{j}$.

By Theorem \ref{TheoremA} and Proposition \ref{Lemma_Convergence}, we have
that $M(\mathcal{U})$ is part of the region where the original integral
defining the Koba-Nielsen zeta function $Z_{\mathbb{K}}^{(N)}(\boldsymbol{s})$
converges. As a consequence, the original integral defining the Koba-Nielsen
amplitude $A_{\mathbb{K}}^{(N)}\left(  \boldsymbol{k}\right)  $ converges in
$\mathcal{U}$. Then, since a composition of holomorphic mappings is again
holomorphic, it is clear that $A_{\mathbb{K}}^{(N)}\left(  \boldsymbol{k}%
\right)  $ has a meromorphic continuation to the whole $\mathbb{C}^{\left(
N-1\right)  (l+1)}$, with polar locus contained in the inverse image by $M$ of
the polar locus of $Z_{\mathbb{K}}^{(N)}(\boldsymbol{s})$. The description of
this polar locus follows directly from Theorem \ref{TheoremA}.

The rationality of $A_{\mathbb{K}}^{(N)}\left(  \boldsymbol{k}\right)  $ in
the case of $p$-adic fields follows from Remark \ref{Nota_Lemma_10}.
\end{proof}

\subsection{\label{Section_Tachyon_Scattering}Tachyon scattering}

We now discuss the interaction of $N$ tachyons with momenta $\boldsymbol{k}%
_{1},\ldots,\boldsymbol{k}_{N}\in\mathbb{R}^{l+1}$. We assume that the mass of
each tachyon satisfies $m^{2}=-2$, for open strings, and $m^{2}=-4$, for
closed strings. In this framework, $\boldsymbol{k}_{i}\boldsymbol{k}_{i}=2$
becomes the relativistic formula for the energy of the $i$-th tachyon. Then
increasing $N$ means to increase the whole energy of the scattering process.
This study requires finding solutions of kinematic restrictions belonging to
the domain of convergence of the corresponding Koba-Nielsen integral, i.e.,
solutions of
\begin{equation}
\left\{
\begin{array}
[c]{lllllll}%
\boldsymbol{k}_{i}\boldsymbol{k}_{j} \in\left(  -\frac{2}{N-2},-\frac{2}%
{N}\right)  &  &  & \text{for} &  &  & \left\{
\begin{array}
[c]{l}%
i=1\text{, }j=2,\ldots,N-2\text{, or}\\
i=N-1\text{, }j=2,\ldots,N-2\text{, }\\
\text{or }2\leq i<j\leq N-2\text{;}%
\end{array}
\right. \\
&  &  &  &  &  & \\
\sum_{i=1}^{N}\boldsymbol{k}_{i}=\boldsymbol{0}; &  &  &  &  &  & \\
&  &  &  &  &  & \\
\boldsymbol{k}_{i}\boldsymbol{k}_{i}=2 &  &  & \text{for} &  &  &
i=1,\ldots,N.
\end{array}
\right.  \label{Eq_solution_set}%
\end{equation}
(Strictly speaking, the convergence domain is larger than the region in the
first line of (\ref{Eq_solution_set}). But this region is the realistic one to
investigate uniformly in $N$.) Our efforts for finding solutions of
(\ref{Eq_solution_set}) suggest that is unlikely to find solutions for $N$
large. In contrast, if $N\leq l+1$, then is a easy to find solutions of
(\ref{Eq_solution_set}). We interpret this as the fact that high energy
scattering processes are less probable that the ones with low energy.

\begin{proposition}
\label{Prop3}If $N\leq l+1$, then the conditions (\ref{Eq_solution_set}) have
solutions $\boldsymbol{k}_{1},\ldots,\boldsymbol{k}_{N}\in\mathbb{R}^{l+1}$.
\end{proposition}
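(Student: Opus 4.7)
The plan is to look for a maximally symmetric solution, imposing $\boldsymbol{k}_i\boldsymbol{k}_j = \alpha$ for all $i\neq j$. Squaring the momentum conservation relation $\sum_i \boldsymbol{k}_i = \boldsymbol{0}$ against itself gives $0 = \sum_i \boldsymbol{k}_i\boldsymbol{k}_i + 2\sum_{i<j}\boldsymbol{k}_i\boldsymbol{k}_j = 2N + N(N-1)\alpha$, forcing $\alpha = -\tfrac{2}{N-1}$. Since $-\tfrac{2}{N-2} < -\tfrac{2}{N-1} < -\tfrac{2}{N}$, this value sits strictly inside the required open interval in the first line of (\ref{Eq_solution_set}), so it remains only to exhibit $N$ vectors in $\mathbb{R}^{l+1}$ realizing the prescribed Minkowski Gram matrix $G$ with diagonal $2$ and off-diagonal $-\tfrac{2}{N-1}$.

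The second step is to realize $G$ concretely inside the Minkowski space. The simplest route is to work in a purely spacelike subspace: set $k_{0,i} = 0$ for every $i$, so that the Minkowski product restricts to the standard Euclidean inner product on $\mathbb{R}^l$. The problem becomes purely Euclidean: find $N$ vectors in $\mathbb{R}^l$ that sum to zero and realize $G = \tfrac{2N}{N-1}I - \tfrac{2}{N-1}J$. A canonical choice is the rescaled regular simplex: inside $\mathbb{R}^N$ take $\boldsymbol{u}_i := \boldsymbol{e}_i - \tfrac{1}{N}\sum_j \boldsymbol{e}_j$ and set $\boldsymbol{k}_i := \sqrt{\tfrac{2N}{N-1}}\,\boldsymbol{u}_i$; a direct computation verifies $\sum_i \boldsymbol{k}_i = \boldsymbol{0}$, $\boldsymbol{k}_i\cdot \boldsymbol{k}_i = 2$, and $\boldsymbol{k}_i\cdot \boldsymbol{k}_j = -\tfrac{2}{N-1}$ for $i\neq j$, all at once.

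The main (and essentially only) obstacle is dimensional. The $\boldsymbol{u}_i$ span the $(N-1)$-dimensional hyperplane $\{\sum_j x_j = 0\}$ of $\mathbb{R}^N$; more intrinsically, $G$ has rank $N-1$, with eigenvalue $0$ on the span of $\mathbf{1}$ and eigenvalue $\tfrac{2N}{N-1}$ on its orthogonal complement. Hence at least $N-1$ spacelike directions are required to host the configuration, which is available precisely when $N-1 \leq l$, i.e., under the hypothesis $N \leq l+1$. Under this hypothesis, one isometrically embeds the hyperplane into the spacelike factor $\mathbb{R}^l \subset \mathbb{R}^{l+1}$ and pads the remaining coordinates with zeros, producing the desired $\boldsymbol{k}_1, \ldots, \boldsymbol{k}_N \in \mathbb{R}^{l+1}$ and completing the proof.
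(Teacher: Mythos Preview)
Your proof is correct and takes a genuinely different route from the paper. Both constructions end up with the fully symmetric Gram matrix $\boldsymbol{k}_i\boldsymbol{k}_j=-\tfrac{2}{N-1}$ for $i\neq j$, but they realize it differently. The paper sets $\boldsymbol{k}_i=(t_i,\overrightarrow{\boldsymbol{k}_i})$ with a \emph{nonzero} time component $t_i=\sqrt{\tfrac{2}{N-1}}$ for $i=1,\dots,N-1$ and chooses the spatial parts $\overrightarrow{\boldsymbol{k}_i}$ pairwise \emph{orthogonal} of norm $\sqrt{\tfrac{2N}{N-1}}$, then lets $\boldsymbol{k}_N$ absorb the remainder; the inner product condition becomes $-t_it_j+0=-\tfrac{2}{N-1}$. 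You instead set the time component to zero and realize the Gram matrix purely spatially via the rescaled regular simplex, where the negative inner product comes from the simplex geometry rather than from the Minkowski signature. Your approach is conceptually cleaner (the problem is reduced wholesale to a standard Euclidean fact, and the rank computation makes the dimensional constraint $N-1\leq l$ transparent), while the paper's orthogonal construction is more hands-on and makes the role of the hypothesis equally visible as ``$N-1$ orthogonal directions are needed in $\mathbb{R}^l$''. Both use the hypothesis in exactly the same way.
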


\begin{proof}
We take $\boldsymbol{k}_{i}=\left(  t_{i},\overrightarrow{\boldsymbol{k}_{i}%
}\right)  $, where $t_{i}\in\mathbb{R}$ and $\overrightarrow{\boldsymbol{k}%
_{i}} \in\mathbb{R}^{l}$ for $i=1,\ldots,N$, so that $\boldsymbol{k}%
_{i}\boldsymbol{k}_{j}=-t_{i}t_{j}+\left\langle \overrightarrow{\boldsymbol{k}%
_{i}}\text{,}\overrightarrow{\boldsymbol{k}_{j}}\right\rangle $, where
$\left\langle \cdot,\cdot\right\rangle $ denotes the standard inner product on
$\mathbb{R}^{l}$. Then (\ref{Eq_solution_set}) becomes
\begin{equation}%
{\displaystyle\sum\limits_{i=1}^{N}}
t_{i}=0\text{, \ }%
{\displaystyle\sum\limits_{i=1}^{N}}
\overrightarrow{\boldsymbol{k}_{i}}=0\text{;} \label{System_1}%
\end{equation}%
\begin{equation}
2=-t_{i}^{2}+\left\langle \overrightarrow{\boldsymbol{k}_{i}}\text{,}%
\overrightarrow{\boldsymbol{k}_{i}}\right\rangle \text{ for }i=1,\ldots,N;
\label{System_2}%
\end{equation}%
\begin{equation}
-\frac{2}{N-2}<-t_{i}t_{j}+\left\langle \overrightarrow{\boldsymbol{k}_{i}%
}\text{,}\overrightarrow{\boldsymbol{k}_{j}}\right\rangle <-\frac{2}{N},
\label{System_3}%
\end{equation}
for $i$, $j$ as in the first line in (\ref{Eq_solution_set}).

We now take
\[
t_{i}=\sqrt{\frac{2}{N-1}}\text{, \ \ \ }\left\Vert \overrightarrow
{\boldsymbol{k}_{i}}\right\Vert =\sqrt{\left\langle \overrightarrow
{\boldsymbol{k}_{i}}\text{,}\overrightarrow{\boldsymbol{k}_{i}}\right\rangle
}=\sqrt{\frac{2N}{N-1}}\text{ \ for }i=1,\ldots,N-1,
\]
such that $\left\{  \overrightarrow{\boldsymbol{k}_{i}}\right\}  _{1\leq i\leq
N-1}$ are two by two orthogonal. Such a choice is possible since $N-1\leq l$,
for instance%
\[
\overrightarrow{\boldsymbol{k}_{i}}=\left(  0,\ldots,0,\pm\sqrt{\frac{2N}%
{N-1}},0,\ldots,0\right)  ,
\]
where $\pm\sqrt{\frac{2N}{N-1}}$ appears in the $i$-th coordinate. The
conditions (\ref{System_1}) dictate that $t_{N}=-\sum_{i=1}^{N-1}t_{i}$ and
$\overrightarrow{\boldsymbol{k}_{N}}=-\sum_{i=1}^{N-1}\overrightarrow
{\boldsymbol{k}_{i}}$. We claim that (\ref{System_2}) and (\ref{System_3}) are
satisfied. Condition (\ref{System_2}) is obvious for $i=1,\dots,N-1$. It is
also satisfied for $i=N$ since
\[
-(\sum_{i=1}^{N-1}t_{i})^{2}+\langle\sum_{i=1}^{N-1}\overrightarrow
{\boldsymbol{k}_{i}},\sum_{i=1}^{N-1}\overrightarrow{\boldsymbol{k}_{i}%
}\rangle=-(N-1)2+(N-1)\frac{2N}{N-1}=2,
\]
where we used that $\left\{  \overrightarrow{\boldsymbol{k}_{i}}\right\}
_{1\leq i\leq N-1}$ are two by two orthogonal. Finally, the conditions
(\ref{Kineematic_restritions}), (\ref{System_3}) are satisfied since
$\left\langle \overrightarrow{\boldsymbol{k}_{i}}\text{,}\overrightarrow
{\boldsymbol{k}_{j}}\right\rangle =0$ and $-t_{i}t_{j}=-\frac{2}{N-1}$ for
$1\leq i\neq j \leq N-1$.
\end{proof}

Next, for arbitrary $N\geq4$, if (\ref{Eq_solution_set}) admits no solutions
$\boldsymbol{k}_{1},\ldots,\boldsymbol{k}_{N}$ in $\mathbb{C}^{l+1}$, the
relevant question, say for $\mathbb{K}=\mathbb{R}$, becomes: \emph{does the
algebraic set, determined by the kinematic restrictions
(\ref{Kineematic_restritions}), have points outside the polar locus of
$A_{\mathbb{R}}^{(N)}\left(  \boldsymbol{k}\right)  $? }

This is a very subtle problem, for the following reason. Looking at the
description of the polar locus (\ref{polarlocusR}) in Theorem \ref{TheoremC},
the most conceptual strategy is find a solution of
(\ref{Kineematic_restritions}) such that
\[
\sum_{ij}\boldsymbol{k}_{i}\boldsymbol{k}_{j}\notin\mathbb{Z},
\]
for all sums ranging over some non-empty subset of the $\frac{N(N-3)}{2}$
different occurring $ij$ in $A_{\mathbb{R}}^{(N)}\left(  \boldsymbol{k}%
\right)  $, i.e., all $ij$ satisfying $1\leq i<j\leq N-1$ \emph{except} $1$,
$(N-1)$. The subtlety comes from the fact that (\ref{Kineematic_restritions})
implies the relation
\[
\sum_{1\leq i<j\leq N-1}\boldsymbol{k}_{i}\boldsymbol{k}_{j}=-(N-2)\in
\mathbb{Z}.
\]
It is a reasonable idea to search for solutions in $\mathbb{C}^{l+1}$ with $l$
as small as possible. This probably simplifies computations, and such a
solution then induces automatically a solution for larger $l$, simply by
putting all extra coordinates equal to zero.

When $l=1$ however, already for $N=4$ \emph{all} solutions of
(\ref{Kineematic_restritions}) belong to the polar locus, as can be verified
by a short computation. More precisely, for all solutions it turns out that
either $\boldsymbol{k}_{1}\boldsymbol{k}_{2} = -2$, $\boldsymbol{k}%
_{2}\boldsymbol{k}_{3} = -2$, or $\boldsymbol{k}_{1}\boldsymbol{k}_{2}
+\boldsymbol{k}_{2}\boldsymbol{k}_{3} = 0$. Hence, they all belong to the
polar locus (\ref{polarlocusR}) with $t=1$, see Example \ref{Example_N_4} or
Section \ref{Section_Veneziaano_amplitude} below.

When $l=2$, it is conceivable that solutions outside the polar locus exist for
any $N$. We found the following family of solutions of
(\ref{Kineematic_restritions}), that lie outside the polar locus for \lq
many\rq\ $N$. We did not pursue this line of investigation further.

\subsubsection{\label{Lastexample}Example}

We denote as before $\boldsymbol{k}_{i}=\left(  t_{i},\overrightarrow
{\boldsymbol{k}_{i}}\right)  $, where now $t_{i}\in\mathbb{C}$ and
$\overrightarrow{\boldsymbol{k}_{i}}\in\mathbb{C}^{2}$ for $i=1,\ldots,N$. For
$i=1,\dots,N-1$ we take $t_{i}=\frac{\sqrt{-2}}{N-1}$ and $\overrightarrow
{\boldsymbol{k}_{i}}\in\mathbb{R}^{2}$ with $\left\Vert \overrightarrow
{\boldsymbol{k}_{i}}\right\Vert =\frac{\sqrt{2N(N-2)}}{N-1}$, such that these
$\overrightarrow{\boldsymbol{k}_{i}}$ are `equidistributed', i.e., the angle
between $\overrightarrow{\boldsymbol{k}_{i}}$ and $\overrightarrow
{\boldsymbol{k}_{i+1}}$ is always $\frac{2\pi}{N-1}$. Then $\sum_{i=1}%
^{N-1}\overrightarrow{\boldsymbol{k}_{i}}=\overrightarrow{0}$.

Next, we take $t_{N}= -\sqrt{-2}$ and $\overrightarrow{\boldsymbol{k}_{N}} =
\overrightarrow{0} $. Then (\ref{Kineematic_restritions}) is clearly
satisfied. All the $\boldsymbol{k}_{i}\boldsymbol{k}_{j}$ are of the form
\[
\frac{2}{(N-1)^{2}} + \cos\left(  m \frac{2\pi}{N-1}\right)  \frac
{2N(N-2)}{(N-1)^{2}}
\]
for some positive integer $m$, bounded by $\frac{N-2}{2}$.

Using some number theoretic arguments, we verified that this solution is
outside the polar locus of $A_{\mathbb{R}}^{(N)}\left(  \boldsymbol{k}\right)
$ if $N\leq10$, and more generally if for instance $N-1$ is a prime number.

\section{\label{Section_Amplitudes_gammafunctions}Amplitudes and gamma
functions}

\subsection{\label{Section_Veneziaano_amplitude}Veneziano amplitude}

We recall that in the case $N=4$, $K=\mathbb{R}$, $A_{\mathbb{R}}^{(4)}\left(
\boldsymbol{k}\right)  $ is the Veneziano amplitude
\[
A_{\mathbb{R}}^{(4)}\left(  \boldsymbol{k}\right)  =\int_{\mathbb{R}%
}\left\vert x\right\vert ^{\boldsymbol{k}_{1}\boldsymbol{k}_{2}}\left\vert
1-x\right\vert ^{\boldsymbol{k}_{2}\boldsymbol{k}_{3}}dx,
\]
see e.g. \cite{Veneziano}, see also \cite[Section 11]{Broedel et al},
\cite[Chapter 3, Section XIV]{V-V-Z} and the references therein. Here the
momenta $\boldsymbol{k}_{1},\boldsymbol{k}_{2},\boldsymbol{k}_{3}%
,\boldsymbol{k}_{4}\in\mathbb{R}^{l+1}$.\ The standard Mandelstam variables
are defined as%
\[
s=-\left(  \boldsymbol{k}_{1}+\boldsymbol{k}_{2}\right)  ^{2}\text{,\qquad
\ }t=-\left(  \boldsymbol{k}_{2}+\boldsymbol{k}_{3}\right)  ^{2}\text{,}\qquad
u=-\left(  \boldsymbol{k}_{2}+\boldsymbol{k}_{4}\right)  ^{2}.
\]
They satisfy the condition $s+t+u=4m^{2}=-8$. Notice that by using the momenta
conservation condition we have $t=-\left(  \boldsymbol{k}_{1}+\boldsymbol{k}%
_{4}\right)  ^{2}$ and $u=-\left(  \boldsymbol{k}_{1}+\boldsymbol{k}%
_{3}\right)  ^{2}$. We set
\[
\alpha\left(  y\right)  :=1+\frac{1}{2}y
\]
for the standard Regge trajectory. Then%
\begin{multline*}
A_{\mathbb{R}}^{(4)}\left(  \boldsymbol{k}\right)  =\int_{\mathbb{R}%
}\left\vert x\right\vert ^{-\alpha\left(  s\right)  -1}\left\vert
1-x\right\vert ^{-\alpha\left(  t\right)  -1}dx=\\
\frac{\Gamma\left(  -\alpha\left(  s\right)  \right)  \Gamma\left(
-\alpha\left(  t\right)  \right)  }{\Gamma\left(  -\alpha\left(  s\right)
-\alpha\left(  t\right)  \right)  }+\frac{\Gamma\left(  -\alpha\left(
t\right)  \right)  \Gamma\left(  -\alpha\left(  u\right)  \right)  }%
{\Gamma\left(  -\alpha\left(  t\right)  -\alpha\left(  u\right)  \right)
}+\frac{\Gamma\left(  -\alpha\left(  u\right)  \right)  \Gamma\left(
-\alpha\left(  s\right)  \right)  }{\Gamma\left(  -\alpha\left(  u\right)
-\alpha\left(  s\right)  \right)  },
\end{multline*}
where%
\begin{equation}
\Gamma\left(  z\right)  =\int_{\mathbb{R}}x^{z-1}e^{-x}dx,\text{ for
}\operatorname{Re}(z)>0, \label{gamma_function}%
\end{equation}
is the gamma function, which is holomorphic in $\operatorname{Re}(z)>0$. This
function admits a meromorphic continuation to $\mathbb{C}$ with poles in the
non-positive integers. The gamma function has no zeros, hence $\frac{1}%
{\Gamma\left(  z\right)  }$ is an entire function.

By using the fact that the functions $\frac{1}{\Gamma\left(  -\alpha\left(
s\right)  -\alpha\left(  t\right)  \right)  }$, $\frac{1}{\Gamma\left(
-\alpha\left(  t\right)  -\alpha\left(  u\right)  \right)  }$, $\frac
{1}{\Gamma\left(  -\alpha\left(  u\right)  -\alpha\left(  s\right)  \right)
}$ are entire, $A_{\mathbb{R}}^{(4)}\left(  \boldsymbol{k}\right)  $ is
holomorphic in the solution set of
\[
\left\{
\begin{array}
[c]{lll}%
-\alpha\left(  s\right)  =-1+\frac{1}{2}\left(  \boldsymbol{k}_{1}%
+\boldsymbol{k}_{2}\right)  ^{2}>0 & \Leftrightarrow & \boldsymbol{k}%
_{1}\boldsymbol{k}_{2}>-1;\\
&  & \\
-\alpha\left(  t\right)  =-1+\frac{1}{2}\left(  \boldsymbol{k}_{2}%
+\boldsymbol{k}_{3}\right)  ^{2}>0 & \Leftrightarrow & \boldsymbol{k}%
_{2}\boldsymbol{k}_{3}>-1;\\
&  & \\
-\alpha\left(  u\right)  =-1+\frac{1}{2}\left(  \boldsymbol{k}_{2}%
+\boldsymbol{k}_{4}\right)  ^{2}>0 & \Leftrightarrow & \boldsymbol{k}%
_{2}\boldsymbol{k}_{4}>-1.
\end{array}
\right.
\]
On the other hand, by applying Theorem \ref{thirdcondition} and Example
\ref{Example_N_4}, $A_{\mathbb{R}}^{(4)}\left(  \boldsymbol{k}\right)  $ is
holomorphic in the solution set of
\[
\boldsymbol{k}_{1}\boldsymbol{k}_{2}>-1;\qquad\boldsymbol{k}_{2}%
\boldsymbol{k}_{3}>-1;\qquad\boldsymbol{k}_{1}\boldsymbol{k}_{2}%
+\boldsymbol{k}_{2}\boldsymbol{k}_{3}<-1.
\]
By using the momenta conservation condition $\boldsymbol{k}_{2}\boldsymbol{k}%
_{1}+2+\boldsymbol{k}_{2}\boldsymbol{k}_{3}+\boldsymbol{k}_{2}\boldsymbol{k}%
_{4}=0$, we can replace the last condition by
\[
\boldsymbol{k}_{1}\boldsymbol{k}_{2}+\boldsymbol{k}_{2}\boldsymbol{k}%
_{3}=-2-\boldsymbol{k}_{2}\boldsymbol{k}_{4}<-1\Leftrightarrow\boldsymbol{k}%
_{2}\boldsymbol{k}_{4}>-1 ,
\]
which means that Example \ref{Example_N_4} indeed provides the exact domain of
convergence of the Veneziano amplitude $A_{\mathbb{R}}^{(4)}\left(
\boldsymbol{k}\right)  $. (We assumed implicitly that $l\geq2$.)

\subsection{\label{Section_A_N_sum_gammas}$A_{\mathbb{R}}^{(N)}\left(
\boldsymbol{k}\right)  $ as a sum of gamma functions}

In this subsection we show that $A_{\mathbb{R}}^{(N)}\left(  \boldsymbol{k}%
\right)  $ is a combination of gamma functions. Indeed,
\begin{equation}
A_{\mathbb{R}}^{(N)}\left(  \boldsymbol{k}\right)  =%
{\displaystyle\sum\limits_{I\subseteq\left\{  2,\ldots,N-2\right\}  }}
{\displaystyle\sum\limits_{r\in T(I)}}
C_{I,r}\left(  \boldsymbol{k}\right)  \Gamma\left(  \sum_{ij\in M\left(
I\right)  }N_{ij,r}(I)\boldsymbol{k}_{i}\boldsymbol{k}_{j}+\gamma
_{r}(I)\right)  , \label{Formula_amplitude_R}%
\end{equation}
where $N_{ij,r}\left(  I\right)  ,\gamma_{r}\left(  I\right)  \in\mathbb{Z}$,
and $M(I)$, $T(I)$ are finite sets as in Theorem \ref{TheoremC}, and the
$C_{I,r}\left(  \boldsymbol{k}\right)  $ are holomorphic functions. This
formula is a consequence of the fact that the meromorphic continuation of the
zeta function $Z_{\mathbb{R}}^{(N)}\left(  \boldsymbol{s}\right)  $ can be
given in terms of gamma functions. This requires using the Bernstein-Sato
theory, see \cite[Theorem 5.3.1 and 5.4.1]{Igusa}. The techniques used here
are contained in the proof of Theorem 5.4.1 in \cite{Igusa}. The formula
(\ref{Formula_amplitude_R}) is obtained by providing an explicit meromorphic
continuation in terms of gamma functions for the monomial integrals
$J_{\mathbb{R}}(s_{1},\ldots,s_{m})$\ in\ Lemma \ref{Lemma10}. We explain this construction.

Take $f(x)\in\mathbb{R}\left[  x_{1},\ldots,x_{n}\right]  \smallsetminus
\mathbb{R}$. Let $b_{f}(s)$ denote the Bernstein-Sato polynomial of $f$. It is
well known that all roots of this monic polynomial are negative rational
numbers. Writing $b_{f}(s)= {\textstyle\prod\nolimits_{\lambda}}\left(
s+\lambda\right)  $, we introduce $\gamma_{f}\left(  s\right)  :=
{\textstyle\prod\nolimits_{\lambda}}\Gamma\left(  s+\lambda\right)  $, where
$\Gamma$ is the gamma function (\ref{gamma_function}).

Set $V:=\left\{  x\in\mathbb{R}^{n};f(x)>0\right\}  $. Given a Schwartz
function $\Phi\in\mathcal{S}(\mathbb{R}^{n})$, we consider
\[
f_{+}^{s}\left(  \Phi\right)  : =\int_{V}f(x)^{s}\Phi\left(  x\right)
dx\text{ for }\operatorname{Re}(s)>0.
\]
Then it turns out that $f_{+}^{s}\left(  \Phi\right)  =\gamma_{f}\left(
s\right)  B(\Phi,s)$, where $B(\Phi,s)$ is an entire function in $s$.
Furthermore for $s$ fixed, $\Phi\rightarrow B(\Phi,s)$ is a tempered
distribution, cf. \cite[Theorem 5.3.1]{Igusa}.

We use this result in the case in which $f(x)$ is a monomial. We first
consider the integral
\[
I_{\theta}(s):=\int_{\mathbb{R}}\theta\left(  y\right)  \left\vert
y\right\vert ^{Ns+v-1}dy=\int_{0}^{\infty}\left\{  \theta\left(  y\right)
+\theta\left(  -y\right)  \right\}  y^{Ns+v-1}dy,
\]
where $\theta$ is a test function, $N\geq1$, $v\geq1$, and $\operatorname{Re}%
(s)>-\frac{v}{N}$. In this case $f(x)=x$, $b_{f}(s)=\left(  s+1\right)  $, and
$I_{\theta}(s)$ admits a meromorphic continuation of the form
\begin{equation}
I_{\theta}(s)=\Gamma\left(  Ns+v\right)  B(\theta,s), \label{Integral_1}%
\end{equation}
where $B(\theta,s)$ is an entire function.

Next, we consider integrals of type
\begin{align*}
J_{\theta}(s_{1},s_{2})  &  =\int_{\mathbb{R}^{2}}\theta\left(  y_{1}%
,y_{2}\right)  \left\vert y_{1}\right\vert ^{N_{1}s_{1}+v_{1}-1}\left\vert
y_{2}\right\vert ^{N_{2}s_{2}+v_{2}-1}dy_{1}dy_{2}\\
&  =\int_{0}^{\infty}\int_{0}^{\infty}\widetilde{\theta}\left(  y_{1}%
,y_{2}\right)  y_{1}^{N_{1}s_{1}+v_{1}-1}y_{2}^{N_{2}s_{2}+v_{2}-1}%
dy_{1}dy_{2},
\end{align*}
with $N_{1}$, $N_{2}\geq1$, $v_{1}$, $v_{2}\geq1$, and $\operatorname{Re}
(s_{1})>-\frac{v_{1}}{N_{1}}$, $\operatorname{Re}(s_{2})>-\frac{v_{2}}{N_{2}}$.

For a fixed $s_{2}$, the tempered distribution $\theta\rightarrow J_{\theta
}(s_{1},s_{2})$ admits a Laurent expansion around $a \in\mathbb{C}$ of the
form
\[
J_{\theta}(s_{1},s_{2})=%
{\displaystyle\sum\limits_{k_{1}=-d_{1}}}
C_{k_{1}}\left(  \theta,s_{2}\right)  \left(  s_{1}-a\right)  ^{k_{1}},
\]
where $\theta\rightarrow C_{k_{1}}\left(  \theta,s_{2}\right)  $ is a tempered
distribution. Now expanding $C_{k_{1}}\left(  \theta,s_{2}\right)  $ around $b
\in\mathbb{C}$ we get
\[
J_{\theta}(s_{1},s_{2})=%
{\displaystyle\sum\limits_{k_{1}=-d_{1}}}
{\displaystyle\sum\limits_{k_{2}=-d_{2}}}
C_{k_{1},k_{2}}\left(  \theta\right)  \left(  s_{1}-a\right)  ^{k_{1}}\left(
s_{2}-b\right)  ^{k_{2}},
\]
where each $\theta\rightarrow C_{k_{1},k_{2}}\left(  \theta\right)  $ is a
tempered distribution, cf. \cite[pg. 65-67]{Igusa}.

Then, to determine weather or not $C_{k_{1},k_{2}}\neq0$, it is sufficient to
show that $C_{k_{1},k_{2}}\left(  \theta\right)  \neq0$ for $\theta$ in a
dense subset of $\mathcal{S}(\mathbb{R}^{2})$. There is a dense subset in
$\mathcal{S}(\mathbb{R}^{2})$ formed by functions of type $\psi_{1}\left(
y_{1}\right)  \psi_{2}\left(  y_{2}\right)  $, where $\psi_{1}$, $\psi_{2}\in$
$\mathcal{S}(\mathbb{R})$, cf. \cite[Lemma 5.4.2]{Igusa}. Then by
(\ref{Integral_1}),%
\[
J_{\psi_{1}\psi_{2}}(s_{1},s_{2})=\Gamma\left(  N_{1}s_{1}+v_{1}\right)
\Gamma\left(  N_{2}s_{2}+v_{2}\right)  B(\psi_{1}\psi_{2},s_{1},s_{2}),
\]
where $\psi_{1}\psi_{2}\rightarrow B(\psi_{1}\psi_{2},s_{1},s_{2})$ is a
tempered distribution, which is a holomorphic function in $\mathbb{C}^{2}$.
Now given $(s_{1},s_{2})$, and by using the fact that the Schwartz topology is
metrizable, we obtain that the functional $B(\psi_{1}\psi_{2},s_{1},s_{2})$
has a unique extension to $\mathcal{S}(\mathbb{R}^{2})$. This implies that
\begin{equation}
J_{\theta}(s_{1},s_{2})=\Gamma\left(  N_{1}s_{1}+v_{1}\right)  \Gamma\left(
N_{2}s_{2}+v_{2}\right)  B(\theta,s_{1},s_{2}), \label{Integral_2}%
\end{equation}
where $B(\theta,s_{1},s_{2})$ is a holomorphic function in $\mathbb{C}^{2}$.

The above argument can easily be extended to the monomial integrals given in
Lemma \ref{Lemma10} in the case $\mathbb{K}=\mathbb{R}$, $\mathbb{C}$.
Finally, by using the calculations given in Example \ref{Example_N_5}, one
obtains that $Z_{\mathbb{R}}^{(5)}\left(  \boldsymbol{s}\right)  $ is a (very
large) sum of monomial integrals of the type $J_{\mathbb{R}}(s_{1}%
,\ldots,s_{m})$. Then by using the above argument, one obtains a formula of
type (\ref{Formula_amplitude_R}) for $A_{\mathbb{R}}^{(5)}\left(
\boldsymbol{s}\right)  $. We do not include it here due to the length of this formula.

\bigskip

\begin{acknowledgement}
\textrm{The authors wish to thank Hugo Garc\'{\i}a-Compe\'{a}n for some
discussions on string amplitudes, and Erik Panzer and Cl\'{e}ment Dupont for
calling our attention to some recent work on string amplitudes. We wish also
to thank the referees for many important suggestions and questions that helped
us to improve the original manuscript. }
\end{acknowledgement}

\bigskip

\end{document}